\numberwithin{equation}{section}
\theoremstyle{plain}                
\newtheorem{thm}{Theorem}[section]
\newtheorem{lem}[thm]{Lemma}
\newtheorem{prop}[thm]{Proposition}
\theoremstyle{definition}           
\newtheorem{defn}[thm]{Definition}
\newtheorem{exam}[thm]{Example}
\newtheorem{ass}[thm]{Assumption}
\theoremstyle{remark}               
\newtheorem{rem}{Remark}[section]
\newcommand{\sq}[2][n]{( #2 )_{#1\in\N}}
\newcommand{\sqb}[2][n]{\big( #2 \big)_{#1\in\N}}
\renewcommand\epsilon{\varepsilon}
 \DeclareMathOperator\cl{cl}
\DeclareMathOperator\sgn{sgn}
\newcommand{\R}{{\mathbb R}}
\newcommand{\N}{{\mathbb N}}
\newcommand{\PP}{{\mathbb P}}
\newcommand{\QQ}{{\mathbb Q}}
\newcommand{\EE}{{\mathbb E}}
\newcommand{\FF}{{\mathcal F}}
\newcommand{\BB}{{\mathcal B}}
\newcommand{\DD}{{\mathcal D}}
\newcommand{\MM}{{\mathcal M}}
\renewcommand{\AA}{{\mathcal A}}
\newcommand{\cd}{c\` adl\` ag }
\newcommand{\scl}[ 2]{\langle #1,#2 \rangle} 
\newcommand{\abs}[1]{\left| #1 \right|}
\newcommand{\set}[1]{\left\{#1\right\}} 
\newcommand{\sets}[2]{\set{#1\,:\,#2}} 
\newcommand{\inds}[1]{ {\mathbf 1}_{\set{#1}}} 
\newcommand{\ind}[1]{ {\mathbf 1}_{{#1}}} 
\newcommand{\seq}[1]{\set{#1_n}_{n\in\N}}
\newcommand{\eps}{\varepsilon}
\newcommand{\ld}{\lambda}
\DeclareMathOperator*{\Esssup}{esssup}
\newcommand{\el}{{\mathbb L}} 
\newcommand{\lone}{\el^1}
\newcommand{\linf}{\el^{\infty}}
\newcommand{\EN}{{\mathcal E}} 
\newcommand{\XX}{{\mathcal X}} 
\newcommand{\yq}{ Y^{\QQ}} 
\newcommand{\prf}[1]{ ( #1 )_{t\in [0,T]}} 
\newcommand{\hq}{\hat{\QQ}}
\newcommand{\hqy}{\hq^{y}}
\newcommand{\qn}{{{\QQ}^{(n)}}}
\newcommand{\qnn}{\{\qn\}_{n\in\N}}
\newcommand{\yqn}{Y^{\qn}}
\newcommand{\norm}[1]{{||#1||}}
\newcommand{\slzer}{{\mathcal V}}
\newcommand{\slzm}{{\mathcal V}^{\MM}}
\newcommand{\slzmk}{\slzm_{\kappa}}
\newcommand{\slzmkp}{\slzm_{\kappa+}}
\newcommand{\eitk}[1]{\EE\int_0^T #1\, d\kappa_t}
\newcommand{\eqitk}[1]{\EE^{\QQ}\int_0^T #1\, d\kappa_t}
\newcommand{\normm}[1]{\| #1 \|_{\MM}}
\newcommand{\tT}{{\tau}}
\newcommand{\hcx}{\hat{c}^{x}}
\newcommand{\fU}{{\mathbf U}}
\newcommand{\fV}{{\mathbf V}}
\newcommand{\fVE}{\fV^{\EN}}
\newcommand{\fI}{{\mathbf I}}
\newcommand{\OO}{{\mathcal O}}
\DeclareMathOperator{\Dom}{Dom}
\DeclareMathOperator{\ba}{{\mathbf{ba}}}
\DeclareMathOperator{\Sing}{\mathrm{Sing}}
\newcommand{\bam}{\ba^{\MM}}
\newcommand{\PPk}{\PP_{\kappa}}
\newcommand{\QQk}{\QQ_{\kappa}}
\newcommand{\MMk}{\MM_{\kappa}}
\newcommand{\DDk}{\DD_{\kappa}}
\newcommand{\LL}{{\mathcal L}}
\newcommand{\UU}{{\mathcal U}}
\newcommand{\prfi}[1]{(#1)_{t\in [0,\infty)}}
\newcommand{\sequ}[1]{\{#1^{(n)}\}_{n\in\N}}
\newcommand{\nux}{\nu^{x}}
\newcommand{\Hx}{\pi ^x}
\renewcommand{\cl}[1]{{\mathrm{Cl}}\left(#1\right)}
\begin{document}

\title[Stochastic Clock]{Utility Maximization with a Stochastic Clock and an Unbounded
Random Endowment}
\author{Gordan \v Zitkovi\' c }
\thanks{{\bf Acknowledgements.} The author would like to thank the
anonymous referees for a number of useful suggestions and
improvements. I am also indebted to the probability seminar
participants at Carnegie Mellon University, Brown University and
Boston University. \newline \hskip1cm During the creation of this
text, the author was partially supported by the National Science
Foundation under grant DMS-0139911. Any opinions, findings, and
conclusions or recommendations expressed in this material are
those of the author and not necessarily reflect the views of the
National Science Foundation.}
\address{
    Gordan \v Zitkovi\' c,
    Department of Mathematical Sciences,
    7209 Wean Hall,
    Carnegie Mellon University, Pittsburgh, PA 15217, USA}
\email{zitkovic@cmu.edu}
\urladdr{www.andrew.cmu.edu/$\sim$gordanz}
\keywords{Utility maximization, convex duality, stochastic clock, finitely-additive measures}
\subjclass[2000]{Primary: 91B28 Secondary: 60G99 60H99}
\date{\today}
\begin{abstract}
We introduce a linear space of finitely additive measures to treat
the problem of optimal expected utility from consumption under a
stochastic clock and an unbounded random endowment process. In
this way we establish existence and uniqueness for a large class
of utility maximization problems including the classical ones of
terminal wealth or consumption, as well as the problems depending
on a random time-horizon or multiple consumption instances. As an
example we treat explicitly the problem of maximizing the
logarithmic utility of a consumption stream, where the local time
of an Ornstein-Uhlenbeck process acts as a stochastic clock.
\end{abstract}
\maketitle

\section{Introduction}

When we speak of the expected utility, we  usually have one of the
following two cases in mind: expected utility of consumption on a
finite interval, or the expected utility of terminal wealth at
some future time point. These two cases correspond to the two of
the historically most important problem formulations in the
classical calculus of variations and optimal (stochastic) control
- the {\em Meyer formulation} $\EE[\int_0^T L(s,x(s))\, dt]\to
\max$ and the {\em Lagrange formulation} $\EE[\psi(x(T))]\to\max$,
where $x(\cdot)$ denotes the controlled state function or
stochastic process, and $L$ and $\psi$ correspond to the
optimization criteria. These formulations owe a great deal of
popularity to their analytical tractability; they fit very well
into the framework of the dynamic programming principle often used
to tackle optimal control problems. Even though there is a number
of problem formulations in the stochastic control literature  that
cannot be reduced to either a Meyer or a Lagrange form (see
Section 2.7, pages 85-92 of \citet{YonZho99}, for an overview of
several other classes of stochastic control models), the expected
utility theory in contemporary mathematical finance seems to lag
behind in this respect. The introduction of convex duality into
the treatment of utility maximization problems by
\cite{KarLehShr87} and \cite{KarLehShrXu91}, as well as its
further development in \cite{KraSch99}, \cite{CviSchWan01},
\cite{KarZit03} and \cite{HugKra02} (to list but a small subset of
the existing literature) give hope that this lag can be overcome.

This paper aims at formulating and solving a class of utility
maximization problems of the {\em stochastic clock} - type in
general incomplete semimartingale market with locally bounded
stock prices and a possibly unbounded random endowment
process. More specifically, our objective is to provide a mathematical
framework for  maximizing  functionals
of the form $\EE[\int_0^T U(\omega, t, c_t)\, d\kappa_t]$, where $U$
is a time- and uncertainty-dependent utility function (a utility
random field), $c_t$ is the consumption density process, and
$\kappa_t$ is an arbitrary non-decreasing right-continuous adapted
process on $[0,T]$ with $\kappa_T=1$. Two particular choices $\kappa_t= t/T$, and
$\kappa_t=\inds{t= T}$ correspond to the familiar Meyer and
Lagrange formulations of the utility maximization problem, but
there are many other financially feasible ones. The problems of
maximization of the expected utility at terminal time $T$, when
$T$ is a stopping time denoting the retirement time or a default time, form
a class of examples. Another class consists of problems with the compound expected
utility sampled at a sequence of stopping times.
Furthermore, one could model random consumption prohibition by
setting $\kappa_t=\int_0^t \inds{R_u\in C}\, du$ for some index process
$R_t$ and a set $C\subseteq \R$.

The notion of a stochastic clock has already been explicitly
present in \citet{GolKal03} (where the phrase {\em stochastic
clock} has been introduced), and implicitly in \citet{Zit99},
\citet{Zit02} and \citet{KarZit03}. \citet{GolKal03} treat the
case of a logarithmic utility with no random endowment process,
under additional assumptions on existence of the optimal dual
process. \citet{KarZit03} establish existence and uniqueness of
optimal consumption process in an incomplete semimartingale market
in the presence of a bounded random endowment. Their version of
the stochastic clock is, however, relatively limited - it is
required to be a deterministic process with no jumps on $[0,T)$.
This assumption was crucial for their treatment of the problem
using convex duality, and is related to the existence of a \cd
version of the optimal dual process. Related to the notion of a
stochastic clock is the work \cite{BlaElkJeaMar03}, which deals
with the utility maximization on a random horizon not necessarily
given by a stopping time. Also, recent work of \cite{BouPha03}
treats the wealth-path dependent utility maximization. The authors
use a duality relation between the wealth processes and a suitably
chosen class of dual processes viewed as optional measures on the
product space $[0,T]\times\Omega$.

In the present paper we extend the existing literature in several
ways. We prove existence and describe the structure of the optimal
strategy under fairly unrestrictive assumptions on the financial
market and the random endowment process.

First, we allow for a general stochastic clock and a general
utility satisfying the appropriate version of the requirement of
 reasonable elasticity of \citet{KraSch99}.

 Second, we allow a random endowment process that
is not necessarily bounded, we only require a finite upper-hedging
price for the total endowment at time $t=T$. The case of a
non-bounded random endowment in the utility maximization
literature has been considered in \citet{HugKra02}, but only in
the case of the utility of terminal wealth, and using techniques
different from ours. The only restriction warranting discussion is
the one we place on the jumps of the stock-price process $S$.
Namely, we require $S$ to be locally bounded. The reason for this
requirement (not present in \cite{KarZit03}, but appearing in
\cite{HugKra02}) is that the random endowment process is not
assumed to be bounded anymore, and the related notion of
acceptability (developed only in the locally-bounded setting) has
to be employed.

Finally, we present an example in which we deal completely
explicitly with a utility maximization problem in an It\^
o-process market model with constant coefficients where the
stochastic clock is the local time at $0$ of an Ornstein-Uhlenbeck
process. This example illustrates how the uncertainties in the
future consumption prohibitions introduce the incompleteness into
the market, and describes the optimal strategy to face them.

In order to tackle the problem of utility maximization with the
stochastic clock we cannot depend on existing techniques. We still
use the convex-duality approach, but in order to formulate and
solve the dual problem we introduce and study the properties of
two new Banach spaces - one of consumption densities and the other of
finitely-additive measures. Also, we
simplify the formulation of the standard components of the
convex-duality treatment by defining the dual objective function
directly as the convex conjugate of the primal objective function
in the suitably coupled pair of Banach spaces. In this way, the
 mysterious regular parts of the finitely-additive
counterparts of the martingale measures used in \cite{CviSchWan01}
and \cite{KarZit03} in the definition of the dual problem, appear
in our treatment more naturally, in an a posteriori fashion.

The paper is organized as follows. After this Introduction, Section 2.~describes the model of the financial
market and  poses the utility maximization problem. In Section 3.~we introduce the functional-analytic setup
needed for the convex-duality treatment of our optimization problem. Section 4.~introduces the
convex conjugate of the utility functional and states the main result. An example admitting an explicit
solution is treated in Section 5. Finally, Appendix A contains the proof of our main result.


\section{The Financial Market and the Optimization Problem}

\subsection{The Stock-price Process} We consider a financial market on a finite horizon $[0,T]$, $T\in(0,\infty)$,
consisting of a $d$-dimensional locally bounded semimartingale
$\prf{S_t}=\prf{S^1_t,\ldots,S^d_t}$. The process $\prf{S_t}$ is
defined on a stochastic
base $(\Omega,\FF,$ $\prf{\FF_t},\PP)$ satisfying the {\em
usual conditions}. For simplicity we also assume that $\FF_0$ is
$\PP$-trivial and that $\FF=\FF_T$. Together with the stock-price process $\prf{S_t}$, there is
a num\' eraire asset $S^0$, and all values will be denominated in terms of $S^0_t$. This amounts
to the standard assumption that $\prf{S^0_t}$ is equal to the constant process $1$.

\subsection{Admissible Portfolio Processes} A financial agent invests in the market according to an $\prf{\FF_t}$-predictable $S$-integrable
$d$-dimensional {\bf portfolio process} $\prf{H_t}$. The
stochastic integral $\prf{(H\cdot S)_t}$ is called the {\bf gains
process} and represents the net gains from trade for the agent who
holds a portfolio with $H^k_t$ shares of the asset $k$ at time
$t$, for $k=1,\ldots, d$.

 A portfolio process $\prf{H_t}$ is called {\bf admissible} if there exists a constant $x\in\R$
such that $x+(H\cdot S)_t\geq 0$ for all $t\in [0,T]$, with
probability 1. Furthermore, an admissible process $\prf{H}$ is called
{\bf maximal admissible} if there exists no other admissible
process $\prf{\tilde{H}}$ such that
\[ (H\cdot S)_T\leq (\tilde{H}\cdot S)_T\ \text{a.s., and }\
\PP[(H\cdot S)_T< (\tilde{H}\cdot S)_T]>0.\] The family of all
processes $\prf{X^H_t}$ of the from $X^H_t\triangleq (H\cdot S)_t$, for an admissible $H$, will be denoted by $\XX$. The class of
processes $\prf{X^H_t}\in\XX$ corresponding to maximal admissible
portfolio processes $\prf{H}$, will be denoted by $\XX_{\max}$.

We complement the wide-spread notion of admissibility by the
less-known notion of acceptability (introduced in
\citet{DelSch97a}) because admissibility is not adequate for
dealing with non-bounded random endowment processes, as it has
been shown in the context of utility maximization from terminal
wealth in \citet{HugKra02}. A portfolio process $\prf{H}$ is
called {\bf acceptable} if it admits a decomposition $H=H^+-H^-$
with $H^+$ admissible and $H^-$ maximal admissible.

\subsection{Absence of Arbitrage} In order to rule out the arbitrage
opportunities in our market, we state the following assumption
\begin{ass} There exists a probability \label{ass:NFLVR}
measure $\QQ$ on $\FF$, equivalent to $\PP$, such that the process
$\prf{S_t}$ is a $\QQ$-local martingale.
\end{ass}
 It has been shown in the celebrated paper of \citet{DelSch94}, that
 the condition in Assumption \ref{ass:NFLVR} is equivalent to the notion of
 No Free Lunch With Vanishing Risk (NFLVR) -  a concept closely related to, and
 only slightly stronger than the
 classical notion of absence of arbitrage.
 The condition NFLVR is
 therefore widely excepted as an operational proxy for the absence
 of arbitrage, and the Assumption \ref{ass:NFLVR} will be in force
 throughout the rest of the paper.

The set of all measures $\QQ\sim\PP$ as in Assumption
\ref{ass:NFLVR} will be denoted by $\MM$, and we will refer to the
elements of $\MM$ as the {\bf equivalent local martingale
measures}.

\subsection{Endowment and Consumption} Apart from being allowed to invest in the market in an admissible way, the agent
\begin{itemize}
\item[(a)] is continuously getting funds from an exogenous source
(random endowment), and
\item[(b)] is allowed to consume parts of
his wealth as the time progresses.
\end{itemize}
 These capital
in- and out-flows are modelled by  non-decreasing processes
$\prf{\EN_t}$ and $\prf{C_t}$ in $\slzer$, where $\slzer$ denotes
the set of all \cd $\prf{\FF_t}$-optional processes  vanishing at
$0$ whose paths are of finite variation. Here, and in the rest of the paper,
we always identify $\PP$-indistinguishable processes without explicit mention.

The linear space $\slzer$ can be given a structure of a vector lattice,
by equipping it  with a partial order $\preceq$, compatible with
its linear structure: we declare
\[ F^1\preceq F^2 \ \text{if the process}\
\prf{F^2_t-F^1_t}\  \text{has non-decreasing paths.}\] The cone of
all non-decreasing processes in $\slzer$ is the {\bf positive
cone} of the vector lattice $\slzer$ and we denote it by
$\slzer_+$. Also, the {\bf total variation} process
$\prf{\abs{F}{}_t}\in\slzer_+$ is associated with each $F\in\slzer$.

The process introduced in (a) above and denoted by $\prf{\EN_t}\in\slzer_+$ represents the {\bf random endowment}, i.e.
the value $\EN_t$ at time $t\in [0,T]$ stands for the
cumulative amount of endowment received by the agent during the
interval $[0,t]$. The process $\prf{\EN_t}$ is given exogenously, and we assume that the
agent exerts no control over it. On the other hand, the amount and distribution of the consumption
is decided by the agent, and  we model the agent's consumption strategy by
the {\bf consumption process} $\prf{C_t}\in\slzer_+$; the value $C_t$ is
the cumulative amount spent on consumption throughout the interval $[0,t]$.
We will find it useful in the later sections to interpret the processes in $\slzer_+$ as
optional random measures on the Borel sets of $[0,T]$.

\subsection{Wealth Dynamics} Starting from the initial wealth of $x\in\R$ (which can be negative) and the endowment process $\prf{\EN_t}$, our agent is free to choose an acceptable portfolio process $\prf{H_t}$
and a consumption process $\prf{C_t}\in\slzer_+$. These two
processes play the role of  the  controls of the system. The
resulting {\bf wealth process} $\prf{X^{(x,H,C)}_t}$ is given by
the {\bf wealth dynamics equation}
\begin{equation}
    \label{equ:wealthprocess}
       X^{(x,H,C)}_t\triangleq x+(H\cdot S)_t-C_t+\EN_t, \quad t\in [0,T].
\end{equation}
A consumption process $\prf{C}\in\slzer_+$ is said to be
$(x,\EN)$-{\bf financeable} if there exists an acceptable
portfolio process $\prf{H}$ such that $X^{(x,H,C)}_T\geq 0$ a.s.
The class of all $(x,\EN)$-financeable consumption processes will
be denoted by $\AA(x,\EN)$, or simply by $\AA(x)$, when there is
no possibility of confusion.
\begin{rem}The introduction of the concept of financeability
which suppresses the explicit mention of the portfolio process
$\prf{H_t}$, will be justified later when we specify the objective
(utility) function. It will depend only on the consumption and not
on the particular portfolio process used to finance it, so we will
find it useful to formulate a {\em static} version of the
optimization problem in which the portfolio process $\prf{H_t}$ will not appear at all.
\end{rem}

\begin{rem}The notion of financeability imposes a weak solvency restriction on the amount of
wealth the agent can consume: even  though the total wealth
process $\prf{X^{(x,H,C)}_t}$ is allowed to take strictly negative
values before the time $T$, the agent must plan the consumption
and investment in such a way to be able to pay all the debts by
the end of the planning horizon with certainty.  In other words,
borrowing is permitted, but only against the future endowment so
that there is no chance of default. With this interpretation it
makes sense to allow the initial wealth $x$ to take negative
values - the initial debt might very well be covered from the
future endowment. Finally, we stress that our notion of
financeability differs from the one introduced in
\citet{ElkJea98}, where no borrowing is allowed. A treatment of a
consumption problem with such a stringent financeability condition
seems to require a set of techniques different from ours and we
leave it for future research.
\end{rem}
\subsection{A Characterization of Financeable Consumption Processes}
 In the treatment
of our utility-maximization problem in the main body of this
paper, the so-called {\bf budget-constraint}-characterization
of the set $\AA(x)$ will prove to be useful.
The idea is to describe the financeable consumption processes in terms of a
set of linear inequalities.
We provide such a characterization it in the following
proposition under the assumption that the random variable $\EN_T$ (denoting the total cumulative endowment
over the horizon $[0,T]$)
{\bf admits an upper-hedging price}, i.e.
$\UU(\EN_T)\triangleq \sup_{\QQ\in\MM} \EE^{\QQ}[\EN_T]<\infty$.
\begin{prop}
    \ \label{pro:charadm}
    Suppose that the total endowment $\EN_T$ admits an upper-hedging price, i.e. $\UU(\EN_T)<\infty$.
    Then, the process $\prf{C_t}\in\slzer_+$ is $(x,\EN)$-financeable if and only if
\begin{equation}
    \label{equ:charadm}
    \begin{split}
        \EE^{\QQ}[C_T]\leq x+\EE^{\QQ}[\EN_T],\ \forall\,\QQ\in\MM.
    \end{split}
\end{equation}
\end{prop}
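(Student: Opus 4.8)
The plan is to recognise Proposition~\ref{pro:charadm} as a super-replication duality for the (possibly negative) contingent claim $C_T-\EN_T$. Unravelling the definitions, $\prf{C_t}\in\AA(x,\EN)$ precisely when there is an acceptable portfolio $\prf{H}$ with $x+(H\cdot S)_T\geq C_T-\EN_T$ a.s., i.e.\ when $C_T-\EN_T$ can be super-replicated starting from wealth $x$ with an acceptable strategy. The key structural feature is that $C_T-\EN_T\geq -\EN_T$ with $\sup_{\QQ\in\MM}\EE^{\QQ}[\EN_T]=\UU(\EN_T)<\infty$, so the amount by which this claim can be negative is uniformly $\MM$-integrable; this is exactly the class of claims on which the Delbaen--Schachermayer theory of acceptable integrands --- available here because $S$ is locally bounded --- operates.

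For the necessity part, assume $\prf{C_t}\in\AA(x,\EN)$ and fix an acceptable $H=H^+-H^-$ with $x+(H^+\cdot S)_T-(H^-\cdot S)_T\geq C_T-\EN_T$. Fix $\QQ\in\MM$. Because $H^+$ is admissible, $(H^+\cdot S)$ is a $\QQ$-local martingale bounded below, hence a $\QQ$-supermartingale, so $\EE^{\QQ}[(H^+\cdot S)_T]\leq 0$; because $H^-$ is maximal admissible, $(H^-\cdot S)$ is a uniformly integrable $\QQ$-martingale (the characterisation of maximality under an equivalent local martingale measure from \citet{DelSch97a}), so $\EE^{\QQ}[(H^-\cdot S)_T]=0$. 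The chain $x+(H\cdot S)_T\geq C_T-\EN_T\geq -\EN_T$ together with $\EE^{\QQ}[\EN_T]<\infty$ shows that $(H^+\cdot S)_T$ has $\QQ$-integrable negative part, so all the expectations below are well defined and finite; taking $\QQ$-expectations gives
\[
\EE^{\QQ}[C_T]\;\leq\; x+\EE^{\QQ}[(H^+\cdot S)_T]-\EE^{\QQ}[(H^-\cdot S)_T]+\EE^{\QQ}[\EN_T]\;\leq\; x+\EE^{\QQ}[\EN_T],
\]
which is \eqref{equ:charadm}.

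For the sufficiency part --- the substantive one --- assume \eqref{equ:charadm} and put $B\triangleq C_T-\EN_T$, so $B^-\leq\EN_T$ and $\EE^{\QQ}[B]=\EE^{\QQ}[C_T]-\EE^{\QQ}[\EN_T]\leq x$ for every $\QQ\in\MM$. I must exhibit an acceptable $\prf{H}$ with $x+(H\cdot S)_T\geq B$. Introduce the convex cone $\Cs$ of all random variables of the form $(H\cdot S)_T-r$, where $H$ is acceptable with $(H\cdot S)_T$ bounded below by the negative of a random variable of finite upper-hedging price, and $r\geq 0$. The heart of the argument is that $\Cs$ is closed in a suitable topology; this closedness is exactly what the Delbaen--Schachermayer closedness results for acceptable integrands supply in the locally bounded setting. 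Granting it, suppose $B-x\notin\Cs$ and separate $B-x$ from $\Cs$ by a non-trivial positive linear functional; since $\Cs\supseteq -L^0_+$ and $\Cs$ contains $\pm(H\cdot S)_T$ for every $H$ with bounded gains, the functional is represented by a (suitably normalised) probability measure $\QQ\sim\PP$ under which $S$ is a local martingale, i.e.\ $\QQ\in\MM$, while $B^-\leq\EN_T$ and $\UU(\EN_T)<\infty$ make $B$ $\QQ$-integrable; evaluating the separating inequality then yields $\EE^{\QQ}[B]>x$, contradicting \eqref{equ:charadm}. Hence $B-x\in\Cs$, i.e.\ $B-x=(H\cdot S)_T-r$ for some acceptable $H$ and some $r\geq 0$, and then $X^{(x,H,C)}_T=x+(H\cdot S)_T-C_T+\EN_T=r\geq 0$, so $\prf{C_t}\in\AA(x,\EN)$.

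The main obstacle is precisely this closedness of $\Cs$ --- equivalently, a super-replication theorem valid for claims that are unbounded and bounded below only up to a term of finite upper-hedging price. Both hypotheses of the proposition enter here in an essential way: local boundedness of $S$ makes the acceptability machinery of \citet{DelSch97a} applicable, and $\UU(\EN_T)<\infty$ is what places $C_T-\EN_T$ inside the class of claims that machinery can handle (and also what guarantees the integrability of $\EN_T$, hence of $C_T-\EN_T$, against the separating measure). By comparison the necessity part is soft, its only non-elementary input being the martingale property, under every $\QQ\in\MM$, of the gains process of a maximal admissible strategy.
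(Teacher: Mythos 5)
In the necessity half you fix an arbitrary $\QQ\in\MM$ and assert that, because $H^-$ is maximal admissible, $(H^-\cdot S)$ is a uniformly integrable $\QQ$-martingale. That is not what maximality gives: the characterisation in \citet{DelSch97a} says only that there \emph{exists} some measure in $\MM$ under which $(H^-\cdot S)$ is a uniformly integrable martingale, and (their Theorem 5.2) that the set $\MM'$ of such measures is convex and dense in $\MM$ in total variation. Under a generic $\QQ\in\MM$ the process $(H^-\cdot S)$ is merely a local martingale bounded from below, hence a supermartingale, so you only get $\EE^{\QQ}[(H^-\cdot S)_T]\leq 0$ --- which enters your estimate with the wrong sign, since you need $-\EE^{\QQ}[(H^-\cdot S)_T]\leq 0$. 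The paper's proof is organised exactly around this obstruction: it derives the budget inequality only for $\QQ\in\MM'$ and then passes to all of $\MM$ using the density of $\MM'$. As written, your step $\EE^{\QQ}[(H^-\cdot S)_T]=0$ for every $\QQ\in\MM$ is false, and the necessity argument needs the density step to be repaired.

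In the sufficiency half the entire weight of your argument rests on the closedness of the cone $\Cs$ --- equivalently, on a super-replication duality for claims bounded from below by a random variable of finite upper-hedging price --- and you explicitly grant it rather than prove it. But that closedness \emph{is} the substance of the implication being proved, and it is not available as an off-the-shelf quotation in the form you use; moreover your separation step (countable additivity and equivalence of the separating functional, and the local martingale property of $S$ under it) cannot be run naively in $\linf$ since the claims involved are unbounded, so a different topology and a Fatou-type closedness argument would be needed. The paper takes a constructive route that bypasses all of this: since $\UU(\EN_T)<\infty$, there are $p>0$ and a maximal admissible $H^{\EN}$ with $p+(H^{\EN}\cdot S)_T\geq\EN_T$ (Lemma 5.13 of \citet{DelSch98}); one then forms the nonnegative supermartingale $F_t=\Esssup_{\QQ\in\MM}\EE_{\QQ}[C_T-\EN_T+p+(H^{\EN}\cdot S)_T\,|\,\FF_t]$, notes $F_0\leq x+p$ by the assumed budget inequalities, applies Kramkov's optional decomposition theorem to write $F=F_0+(H^F\cdot S)-G$ with $H^F$ admissible, and takes $H=H^F-H^{\EN}$, which is acceptable and finances $C$ from $x$. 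If you wish to keep the separation scheme you must supply the missing closedness theorem yourself; as it stands, the sufficiency half is a plan rather than a proof.
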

\begin{proof}
\label{prf:charadm}
   {\em `` only if '':} Assume first that $\prf{C_t}\in\AA(x,\EN)$, and pick an acceptable  portfolio process
   $\prf{H_t}$ such that the wealth process
    $\prf{X^{(x,H,C)}_t}$ defined in (\ref{equ:wealthprocess}) satisfies $X^{(x,H,C)}_T\geq 0$ a.s.
    By the definition of acceptability, there exists a
    decomposition $H=H_+-H_-$ into an admissible $H_+$ and a maximal admissible
     $H_-$ portfolio processes. Let $\MM'$ be the set of  all
    $\QQ\in\MM$ such that $\prf{(H^-\cdot S)_t}$ is a $\QQ$-uniformly integrable
    martingale. For any $\QQ\in\MM$ the process
    $\prf{(H^+\cdot S)_t}$ is a $\QQ$-local martingale bounded from
    below, and therefore a $\QQ$-supermartingale. Hence,
    $\prf{(H\cdot S)_t}$ is a $\QQ$-supermartingale for all
    $\QQ\in\MM'$ and
\begin{equation}
    \label{equ:alm}
    \begin{split}
        0&\leq \EE^{\QQ}[X^{(x,H,C)}_T|\FF_0]=x+\EE^{\QQ}[(H\cdot
        S)_T|\FF_0]+\EE^{\QQ}[\EN_T-C_T|\FF_0]\\ &\leq
        x+\EE^{\QQ}[\EN_T]-\EE^{\QQ}[C_T],\ \text{for all
        $\QQ\in\MM'$.}
    \end{split}
\end{equation}
      The set $\MM'$ of all $\QQ\in\MM$ such that $H^{-}\cdot S$ is a $\QQ$-uniformly integrable martingale is
    convex and dense in $\MM$ in the total variation norm (see \citet{DelSch97a}, Theorem 5.2).
    Therefore, the claim follows from (\ref{equ:alm}) and the
    density of $\MM'$ in $\MM$.

    {\em ``if'':} Let $\prf{C_t}\in\slzer_+$ be a process satisfying $\EE_{\QQ}[C_T]\leq x+\EE_{\QQ}[\EN_T]$ for
    all $\QQ\in\MM$.
    Since $\EN_T\geq 0 $ admits an upper-hedging price, there exists a
    constant $p>0$ and a maximal
    admissible portfolio process $\prf{H^{\EN}_t}$ such that
    $p+(H^{\EN}\cdot S)_T\geq \EN_T$ a.s. (see Lemma 5.13 in \citet{DelSch98}).
     Define the process
\begin{equation}
    \nonumber
    \label{equ:ifproof}
    \begin{split}
        F_t\triangleq \Esssup_{\QQ\in\MM}\,
    \EE_{\QQ}[C_T-\EN_T+p+(H^{\EN}\cdot S)_T|\FF_t],
    \end{split}
\end{equation}
    and note that $F_0\leq x+p$.
     $\prf{F_t}$ is a nonnegative
     $\QQ$-supermartingale for all $\QQ\in\MM$,
     permitting a \cd modification (see \citet{Kra96}, Theorem 3.2), and thus
     the Optional Decomposition Theorem (see \citet{Kra96}, Theorem 2.1) asserts the existence of an admissible portfolio
     processes $\prf{H^F_t}$ and a finite-variation process $\prf{G_t}\in\slzer_+$ such that
\begin{equation}
    \nonumber
    \label{equ:ifproof2}
    \begin{split}
        F_t=F_0+(H^F\cdot S)_t-G_t, \ \text{for all $t\in [0,T]$, a.s.}
    \end{split}
\end{equation}
        If follows that $x+p+(H^F\cdot S)_T\geq C_T-\EN_T+p+(H^{\EN}\cdot S)_T$, so
        for the acceptable portfolio process $\prf{H_t}$, defined by $H_t\triangleq H^F_t-H^{\EN}_t$ we have
         $x+ (H\cdot S)_T-C_T+\EN_T\geq 0$.
\end{proof}
\subsection{The Utility Functional and the Primal Problem}
In order to define the objective function of our optimization
problem, we need two principal ingredients: a utility random field
and the stochastic clock process.

The notion  of a utility random field as defined below has
appeared in \citet{Zit99} and \citet{KarZit03}, and we use it
because of its flexibility and good analytic properties - there
are no continuity requirements in the temporal argument, and so it
is well suited for our setting.

As for the notion of a stochastic clock, it models the the agent's
(either endogenously or exogenously imposed) notion of passage of time with respect to which the consumption
rate is being calculated and  utility accumulated. Several examples often appearing
in mathematical finance will be given below. Before that let us  give the
formal definition of the concepts involved:
\begin{defn}\ \label{def:utilfield}\nopagebreak
\begin{enumerate}\nopagebreak
\item A {\bf utility random field} $U:\Omega\times [0,T]\times
(0,\infty)\to\R$ is an $\FF\otimes\BB[0,t]\otimes \BB(0,\infty)$ -
measurable function satisfying the following conditions.
\begin{enumerate}
\item \label{U1} For a fixed $(\omega,t)\in\Omega\times [0,T]$,
the function $x\mapsto U(\omega, t,x)$ is a utility function, i.e.
a strictly concave, increasing $C^1$-function satisfying
the Inada conditions:\[  \lim_{x\to 0+} U_x(\omega,
t,x)=\infty\ \text{ and }\ \lim_{x\to\infty} U_x(\omega,
t,x)=0, \text{ a.s, }\]
where $U_x(\cdot, \cdot, \cdot)$ denotes the derivative with respect to the last argument.
\item \label{U2} There are continuous,
strictly decreasing (non-random) functions $K_i:(0,\infty)\to
(0,\infty)$, $i=1,2$ satisfying $\limsup_{x\to\infty}
\frac{K_2(x)}{K_1(x)} <\infty$, and constants $G<D\in\R$ such that
 we have
\[ K_1(x)\leq U_x(\omega,t,x) \leq K_2(x),\]
for all $(\omega, t,x)\in \Omega\times [0,T]\times (0,\infty)$,
 and
\[ G\leq U(\omega, t, 1)\leq D,\]
for all $(t,\omega)\in [0,T]\times\Omega$.
 \item \label{U4} For every
optional process $\prf{c_t}$,  the process $\prf{U(\omega, t,c_t)}$ is
optional.
\item $U$ is {\bf reasonably elastic}, i.e. it satisfies ${\mathrm{AE}}[U]<1$, where
${\mathrm{AE}}[U]$ denotes the {\bf asymptotic elasticity} of the random field $U$, defined by
\[{\mathrm{AE}}[U]\triangleq \limsup_{x\to\infty}
\left(\Esssup_{(t,\omega)\in [0,T]\times\Omega} \frac{x
U_x(\omega, t,x)}{U(\omega, t,x)}\right).\]
\end{enumerate}
\item The {\bf stochastic clock } $\prf{\kappa_t}$ is an arbitrary
process in  $\slzer_+$, such that $\kappa_T=1$, a.s.
\end{enumerate}
\end{defn}
\begin{rem}
The requirement $\kappa_T=1$ in the definition above is a mere
normalization. We impose it in order to be able to work with
probability measures on the product space $[0,T]\times\Omega$ (see
Section \ref{sec:funanal}.)
\end{rem}
We are now in the position to define the notion of a {\bf utility
functional} which takes consumption processes as arguments and
returns their expected utility. This expected utility (as defined
below in \ref{equ:udef}) will depend only on the part of the
consumption process $\prf{C_t}$ admitting a density with respect
to the stochastic measure $d\kappa$, so that the choice of a
consumption plan with a nontrivial component singular to $d\kappa$
would be clearly suboptimal. For that reason we restrict our
attention only to consumption processes $\prf{C_t}$ whose
trajectories are absolutely continuous with respect to $d\kappa$,
i.e. only processes of the form $C_t=\int_0^t c_t\,d\kappa_t$, for
a nonnegative optional process $\prf{c_t}$ which we will refer to
as the {\bf consumption density} of the consumption process
$\prf{C_t}$. For simplicity, we shall assume that the random
endowment admits a $d\kappa$-density $\prf{e_t}$ in that
$\EN_t=\int_0^t e_u\, d\kappa_u$, for all $t\in [0,T]$, a.s. This
assumption is clearly not necessary since the restrictions, which
the size of the random endowment places on the choice of the
consumption process, depend only on the value $\EN_T$, as we have
shown in Proposition \ref{pro:charadm}. We impose it in order to
simplify notation by having all ingredients defined as elements of
the same Banach space (see Section \ref{sec:funanal}.)

 The utility derived from a consumption
process should therefore be viewed as a function of the
consumption density $\prf{c_t}$ and we define the {\bf utility
functional} as a function on the set of optional processes:
\begin{equation}
    \label{equ:udef}
    \begin{split}
        \fU(c)\triangleq \EE\int_0^T U(\omega, t,
        c_t)\,d\kappa_t,\ \text{for an optional process
        $\prf{c_t}$.}
    \end{split}
\end{equation}  To deal
with the possibility of ambiguities of the from $(+\infty)- (-\infty)$
in the definition above, we adopt the following convention,
standard in the utility-maximization literature:  when the
integral $\EE\int_0^T \big(U(\omega, t, c_t)\big)^- \,d\kappa_t$
of the negative part $\big(U(\omega, t, c_t)\big)^-$ of the
integrand from (\ref{equ:udef}) takes the value $-\infty$, we set
$\fU(c)=-\infty$. In other words, our financial agent is not
inclined towards the risks that defy classification, as far as the
utility random field $U$ is concerned.
Finally, we add a mild technical integrability assumption on the utility functional $U$.
It is easily
seen to be satisfied by all our examples, and it is crucial for the simplicity of the proof of Proposition
\ref{pro:propfv}.
\begin{ass}\label{ass:delta}
For any nonnegative optional process $\prf{c_t}$ such that $\fU(c)>-\infty$
and any constant $0<\delta<1$ we have $\fU(\delta c)>-\infty$
\end{ass}

\subsection{Examples of Utility Functionals}
\begin{exam}[Utility Random Fields]\
\label{exa:I}\begin{enumerate}
\item Let $U(x)$ be a utility function satisfying
$\limsup_{x\to\infty} \frac{x U'(x)}{U(x)}<1$. Also, suppose there
exist functions $A:(0,\infty)\to \R$ and $B:(0,\infty)\to
(0,\infty)$ such that $U(\delta x)>A(\delta)+B(\delta) U(x)$, for
all $\delta>0$ and $x>0$. A family of examples of such utility functions is
supplied by the HARA family
\[ U_{\gamma}(x)=
\begin{cases} \frac{x^{\gamma}-1}{\gamma},& \gamma<1,
\gamma\not=0, \\
\log(x) & \gamma=0,
\end{cases}\] Then, the
(deterministic) utility random field
\[ U(\omega, t, x)=\exp(-\beta t) U_{\gamma}(x) \]
conforms to Definition \ref{def:utilfield}, and satisfies
Assumption \ref{ass:delta}.
\item If we take a finite number $n$ of $\prf{\FF_t}$-stopping times $\tau_1,\dots, \tau_n$,
positive constants
$\beta_1,\dots, \beta_n$ and $n$ utility functions $U^1(\cdot),\dots, U^n(\cdot)$
as in (1) and define
\[ U(\omega, t, x)=\sum_{i=1}^{n} \exp(-\beta_i t) U^i(x)\inds{t=\tau_i(\omega)},\]
the random field $U$ can be easily redefined on the complement of the union of the graphs
of stopping times $\tau_i$, $i=1,\dots, n$ to yield a utility random field
satisfying Assumption \ref{ass:delta}.
\end{enumerate}
\end{exam}
\begin{exam}[Stochastic clocks I]\
\begin{enumerate}
\item Set $\kappa_t=t$, for $t\leq T=1$. The utility functional
takes the from of {\bf utility of consumption} $\fU(c)=\EE\int_0^1 U(\omega,
t, c_t)\, dt$.
\item For $\kappa_t=0$ for $t<T$, and $\kappa_T=1$, we are looking at the {\bf utility
of terminal wealth $\EE[U(X_T)]$}, where $U(x)=U(\omega, T, x)$. Formally, we would get
an expression of the form $\fU(c)=\EE[U(\omega, T, c_T)]$, but clearly $c_T=X_T$ in all but suboptimal cases.
\item A combination $\kappa_t=t/2$ for $t<T=1$, and $\kappa_T=1$, of the two cases above models
the {\bf utility of consumption and terminal wealth} $\fU(c)=\EE[\int_0^1 U(\omega, t, c_t)\, dt+U(X_T)]$.
\end{enumerate}
\end{exam}
\bigskip
\begin{exam}[Stochastic clocks II]\
\begin{enumerate}
\item Let $\tau$ be an a.s. finite $\prf{\FF_t}$-stopping time.
We can think of $\tau$ as a random horizon such as the retirement
time, or some other market-exit time. Then the stochastic clock
$\kappa_t=0$, for $t<\tau$, and $\kappa_t = 1$ for $t\geq \tau$,
models the {\bf expected utility} $\EE[U(X_{\tau})]$ {\bf of the
wealth at a random time $\tau$}. The random endowment $\EN_{\tau}$
has the interpretation of the retirement package. In the case in
which the random horizon $\tau$ is unbounded, it will be enough to
apply a deterministic time-change to fall back within the reach of
our framework.
\begin{rem}
As the anonymous referee points out, the case of a random horizon
$\tau$ given by a mere random (as opposed to a stopping) time can
be included in this framework by defining $\kappa$ as the
conditional distribution of $\tau$, given the filtration
$\prf{\FF_t}$, as in \cite{BlaElkJeaMar03}.
\end{rem}
\item The example in (1) can be extended to go well with the utility function from Example \ref{exa:I} (2).
For an $n$-tuple of $\prf{\FF_t}$-stopping times, we set
\[ \kappa_t=\sum_{i=1}^n \frac{1}{n} \inds{t\geq \tau_i},\]
so that
\[ \fU(c)=\frac{1}{n} \sum_{i=1}^n  \EE[ \exp(-\beta_i \tau_i) U^i(c_{\tau_i})].\]
\item if we set $\kappa_t=1-\exp(-\beta t)$ for $t<\tau$ and $\kappa_t=1$, for $t\geq \tau$, we can add
consumption to the example in (1)  \[\fU(c)=\EE[\int_0^{\tau} \exp({-\beta t}) U(\omega, t, c_t)\, dt+
(1-\exp(-\beta \tau)) U(X_{\tau})],\] modelling the {\bf utility from consumption up to- and the
remaining wealth at the random time $\tau$.} The possibly inconvenient  factor $(1-\exp(-\beta \tau))$ in front of
the terminal utility term can be dealt away with by absorbing it into the utility random field.
\end{enumerate}
\end{exam}
\enlargethispage*{20pt}
\begin{exam}[Stochastic clocks, IV]\
\begin{enumerate}
\item In this example we model the situation when the agent is allowed to withdraw the consumption funds
only when a certain index process $R_t$ satisfies $R_t\in C$, for
some Borel set $C\subseteq \R$. In terms of the stochastic clock
$\kappa$, we have $\kappa_t=\min(\int_0^t \inds{R_t\in C}\,
dt,1)$. The $R_t$ could take a role of a political indicator in an
unstable economy where the individual's funds are under strict
control of the government. Only in periods of political stability, i.e. when $R_t\in C$,
 are the {\bf withdrawal constraints} relaxed and we are
allowed to withdraw funds from the bank. It should be stressed
here that the time horizon in this example is not deterministic.
It is given by the stopping time \[ \inf\sets{t>0}{\int_0^t
\inds{R_u\in C}\, du\geq 1}.\]
\item
An approximation to the  situation in (1) arises  when we assume that the set $C$ is of the form
$(-\eps, \eps)$ for a constant $\eps>0$. If $\eps$ is small enough the occupation time
$\int_0^t \inds{R_u\in C}\, du$ can be well approximated by the scaled local time
$\frac{1}{2\eps} l^{R}_t$ of the process $R_t$ at $0$. Thus, we may set $\kappa_t=1 \wedge l^{R}_t$.
 An instance of such a {\bf local-time driven} example will be treated
explicitly in Section \ref{sec:exam}.
\end{enumerate}
\end{exam}
\subsection{The Optimization Problem} Having introduced the notion of  the utility functional, we turn to
the statement of our central optimization problem and we call it
 the {\bf Primal Problem}. We describe it in terms of its  value function
$u:\R\to\R$ as follows
\begin{equation}
    \label{equ:problem}
    \begin{split}
        u(x)\triangleq \sup_{c\in\AA(x)} \fU(c), \quad x\in\R,
    \end{split}
\end{equation}
where $\AA(x)$ denotes the set of all
$d\kappa$-{\em densities} of $(x,\EN)$-financeable consumption
processes. Since we shall be working exclusively with consumption
processes admitting a $d\kappa$-density, no ambiguities should
arise from this slight abuse of notation. In order to have a non-trivial optimization problem, we
impose the following standard assumption:
\begin{ass}\label{ass:finite}
There exists a constant $x>0$ such that $u(x)<\infty$.
\end{ass}
\begin{rem}\  \begin{itemize}
\item[(1)] The Assumption \ref{ass:finite} is, of course,
non-trivial, although quite common in the literature. In general,
it has to be checked on a case-by-case basis. In the particular
case, when the stock-price process is an It\^ o process on a
Brownian filtration with bounded coefficients, the Assumption
\ref{ass:finite} is satisfied when there exist constants $M>0$ and
$\ld<1$ such that
\[ 0\leq U(t,x)\leq M(1+x^{\ld}),\ \text{for all $(t,x)\in
[0,T]\times (0,\infty)$.}\] For reference see \cite{KarShr98}, p.
274, Remark 3.9.
\item[(2)] Part\label{rem:constfinite} (\ref{U2}) of the Definition
\ref{def:utilfield} of a utility random field implies that
$\fU(c)\in (-\infty, \infty)$ for any constant consumption process
$\prf{c_t}$, i.e. a process $\prf{c_t}$ such that $c_t\equiv x$
for some constant $x>0$. It follows that $u(x)>-\infty$ for all
$x>0$.
\end{itemize}
\end{rem}


\section{The Functional-Analytic Setup}\label{sec:funanal}
In this section we introduce several linear spaces of stochastic processes and finitely-additive
measures. They  will
prove indispensable in the convex-duality treatment of the
optimization problem defined in (\ref{equ:problem}).

\subsection{Some Families of Finitely-Additive Measures}
Let $\OO$ denote the $\sigma$-algebra of optional sets relative to the
filtration $\prf{\FF_t}$. A measure $\QQ$ defined on
$\FF_T$, and absolutely continuous to $\PP$ induces a measure $\QQk$ on $\OO$, if we set
\begin{equation}\label{equ:defqk}
 \QQk[A]=\eqitk{\ind{A}(t,\omega)},\ \text{for}\ A\in\OO.
 \end{equation}
For notational clarity, we shall always identify optional
stochastic processes $\prf{c_t}$ and random variables $c$ defined
on the product space $[0,T]\times\Omega$ measurable with respect
to the optional $\sigma$-algebra $\OO$. Thus, the measure $\QQk$
can be seen as acting on an optional processes by means of
integration over $[0,T]\times \Omega$ in the Lebesgue sense. In that spirit we
introduce the following notation
\begin{equation}
\label{equ:qnot} \scl{c}{\QQ}\triangleq \int_{[0,T]\times\Omega}
c\  d\QQ,
\end{equation}
 for a measure $\QQ$ on the optional $\sigma$-algebra $\OO$, and an optional process $c$ whenever the
 defining integral exists.
 A useful representation of the action $\scl{c}{\QQk}$ of $\QQk$ on an optional
 process $\prf{c_t}$ is given in the following proposition.
 \begin{prop}\label{pro:qmart}
\label{pro:intdual}
    Let $\QQ$ be a measure on $\FF_T$, absolutely continuous with respect to $\PP$. For a nonnegative optional process $\prf{c_t}$
    we have
\begin{equation}
    \nonumber
    \label{equ:intdual}
    \begin{split}
        \scl{c}{\QQk}=\EE\int_0^T c_t Y^{\QQ}_t\, d\kappa_t,
    \end{split}
\end{equation}
where $\prf{\yq_t}$ is the \cd version of the
martingale $\prf{\EE[\frac{d\QQ}{d\PP}|\FF_t]}$.
\end{prop}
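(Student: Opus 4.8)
The plan is to reduce the asserted identity to the optional projection theorem. First I would record the elementary reduction: by the definition (\ref{equ:defqk}) of $\QQk$, together with linearity in $c$ over $\OO$-simple processes and monotone convergence, every nonnegative optional process $\prf{c_t}$ satisfies $\scl{c}{\QQk}=\eqitk{c_t}$; in particular, since $\kappa_T=1$ a.s., $\QQk$ is a probability measure on $\OO$ and $\scl{\cdot}{\QQk}$ is ordinary integration against it. It therefore suffices to prove
\[
\eqitk{c_t}=\eitk{c_t\,\yq_t}.
\]

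Set $Z\triangleq d\QQ/d\PP\in\lone(\PP)$. Because $\FF=\FF_T$ and $\QQ\ll\PP$, the martingale $\prf{\EE[Z|\FF_t]}$ is uniformly integrable and closed by $Z$, so its \cd version $\prf{\yq_t}$ is an optional process satisfying $\yq_\tau=\EE[Z|\FF_\tau]$ a.s.\ for every $[0,T]$-valued stopping time $\tau$; that is, $\prf{\yq_t}$ is the optional projection of the process $t\mapsto Z$ (constant in the time variable). Since $d\QQ=Z\,d\PP$ and $Z$ does not depend on $t$, we have $\eqitk{c_t}=\eitk{Z\,c_t}$. The process $A_t\triangleq\int_0^t c_s\,d\kappa_s$ is optional, nondecreasing and pathwise finite, with Lebesgue--Stieltjes differential $dA_t=c_t\,d\kappa_t$; applying the optional projection theorem (see, e.g., Dellacherie and Meyer, \emph{Probabilities and Potential B}) to the nonnegative process $Z$ and the optional increasing process $A$ yields
\[
\eitk{Z\,c_t}=\EE\int_{[0,T]} Z\,dA_t=\EE\int_{[0,T]}\yq_t\,dA_t=\eitk{\yq_t\,c_t},
\]
which is the desired equality.

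A few points need (routine) attention: that (\ref{equ:defqk}) defines a $\sigma$-additive measure and that $\scl{c}{\QQk}$ computes as the iterated integral above, both obtained by monotone convergence using $\kappa_T=1$; that $\prf{\yq_t}$ is admissible as the optional projection of $Z$, which holds because it is optional and, by optional sampling, agrees with $\EE[Z|\FF_\tau]$ at all stopping times $\tau\le T$; and the legitimacy of the optional projection identity for a possibly $[0,\infty]$-valued integrand, handled by truncating $c$ at level $n$ and letting $n\to\infty$. The one genuinely substantive ingredient, and the step I expect to be the crux, is the optional projection theorem: it is precisely what validates the interchange despite $\prf{\kappa_t}$ being random, a situation in which a naive Fubini argument (fine for deterministic $\kappa$) fails. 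If one prefers to stay elementary, for bounded $c$ one may instead use integration by parts --- with $A_t=\int_0^t c_s\,d\kappa_s$ one has $\int_0^T\yq_t\,dA_t=\yq_T A_T-\int_0^T A_{t-}\,d\yq_t$, and the stochastic integral has zero expectation since $\prf{\yq_t}$ is a uniformly integrable martingale and $\prf{A_{t-}}$ is bounded and predictable --- and then pass to general nonnegative $c$ by truncation and monotone convergence.
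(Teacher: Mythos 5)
Your main argument is correct, but it takes a genuinely different route from the paper. You identify $\prf{\yq_t}$ as the optional projection of the constant-in-time process $Z=\frac{d\QQ}{d\PP}$ and invoke the projection theorem for optional increasing processes ($\EE\int X_t\,dA_t=\EE\int {}^{o}X_t\,dA_t$ for nonnegative measurable $X$ and optional increasing $A$), with truncation of $c$ handling finiteness; this is clean, and the reduction $\scl{c}{\QQk}=\eqitk{c_t}$ and the identification $\yq_\tau=\EE[Z|\FF_\tau]$ at stopping times are exactly the right verifications. The paper instead works by hand: it sets $C_t=\int_0^t c_u\,d\kappa_u$, applies integration by parts to $\yq C$, observes via Protter (Theorem III.17) that $\int_0^\cdot C_{t-}\,d\yq_t$ is a local martingale, kills it along a localizing sequence $\tau_n$ with $\PP[\tau_n<T]\to 0$, and passes to the limit by monotone convergence. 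Your route buys brevity and conceptual transparency (the statement is literally an instance of optional-projection duality) at the price of importing a heavier theorem from the general theory of processes; the paper's route stays within elementary stochastic calculus plus localization. One caution about your ``elementary'' alternative: asserting that $\int_0^T A_{t-}\,d\yq_t$ has zero expectation merely because $A_{-}$ is bounded predictable and $\yq$ is a uniformly integrable martingale is not a standard fact --- such an integral is in general only a local martingale (true-martingale conclusions require, e.g., $\yq\in\mathcal{H}^1$ or square-integrability), and justifying the vanishing expectation directly essentially re-proves the proposition. The paper's localization argument is designed precisely to avoid this issue, so if you keep that variant you should localize as the paper does rather than appeal to uniform integrability of $\yq$ alone.
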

\begin{proof}
 Define a nondecreasing \cd process $\prf{C_t}$, by
  $C_t\triangleq \int_0^t c_u\,d\kappa_u.$
  By the integration-by-parts formula we have
\[
  \yq_{\tT} C_{\tT}=
  \int_0^{\tT} \yq_{t-}\, dC_t
  +\int_0^{\tT} C_{t-}\, d\yq_t
  +\sum_{0\leq t\leq {\tT}}\Delta \yq_t \Delta C_t
  = \int_0^{\tT} \yq_{t}\, dC_t
  +\int_0^{\tT} C_{t-}\, d\yq_t,
\]
for every stopping time $\tT\leq T$. By (\cite{Pro90}, Theorem
III.17, page 107), the process $\prf{\int_0^t C_{u-}\, d\yq_u}$ is a
local martingale, so we can find an increasing  sequence of
stopping times $\sq{\tau_n}$, satisfying $\PP[\tau_n<T]\to 0$, as
$n\to \infty$, such that $\EE\int_0^{\tau_n} C_{t-}\, d\yq_t=0$,
for every $n\in\N$. Taking expectations and letting $n\to\infty$,
Monotone Convergence Theorem implies that
\begin{equation}
    \nonumber
    \begin{split}
     \scl{c}{\QQ^{\kappa}}&=\EE^{\QQ}[C_T]=\EE[\yq_T C_T]
  =\lim_{n\to\infty}\EE\int_0^{\tau_n}\yq_t\, dC_t
  =\EE\int_0^T \yq_t\, dC_t\\ &=\EE\int_0^T c_t\yq_t\, d\kappa_t.
    \end{split}
\end{equation}
\end{proof}

\begin{rem}
Note that the advantage of Proposition \ref{pro:intdual} over an invocation of the Radon-Nikodym
 theorem is in the fact that the version obtained by the
 Radon-Nikodym derivative is  merely optional, and not necessarily \cd.
\end{rem}

We define $\MMk\triangleq\sets{\QQk}{\QQ\in\MM}$. The set $\MMk$
corresponds naturally to the set of all martingale measures in our
setting, and considering measures on the product space $[0,T]\times \Omega$ instead
of the measures on $\FF_T$ is indispensable for utility maximization with stochastic clock.
Most of the existing approaches to optimal consumption
start with equivalent martingale measures on $\FF_T$ and relate them
the to stochastic processes on $\prf{\FF_t}$ through some process of regularization.
In our setting, the generic structure of the stochastic clock $\prf{\kappa_t}$ renders such a line
of attack impossible.

However, as it will turn out, $\MMk$ is too small for
duality treatment of the utility maximization problem. We
shall need to enlarge it so as to contain finitely-additive
along with the countably additive measures. To make headway
with this enlargement, we consider the set of all bounded
finitely-additive measures $\QQ$ on $\OO$, such that $\PPk[A]=0$
implies $\QQ[A]=0$, and we denote this set by $\ba(\OO,\PPk)$.
It is well known that $\ba(\OO,\PPk)$, supplied
with the total-variation norm, constitutes a Banach space which is
isometrically isomorphic to the topological dual of
$\linf(\OO,\PPk)$ (see \citet{DunSch88} or \citet{RaoRao83}). The
action of an element $\QQ\in\ba(\OO,\PPk)$ on
$c\in\linf(\OO,\PPk)$ will be denoted by $\scl{c}{\QQ}$ - a
notation that naturally supplements the one introduced in (\ref{equ:qnot})

 On the Banach space $\ba(\OO,\PPk)$ there is a
canonical partial ordering transferred from the pointwise order of
$\linf(\OO,\PPk)$, equipping it with the structure of a Banach
lattice. The positive orthant of $\ba(\OO,\PPk)$ will be
denoted by $\ba(\OO,\PPk)_+$. An element $\QQ\in\ba(\OO,\PPk)_+$ is said to
be {\bf purely finitely-additive} or {\bf singular} if there exist no
nontrivial
countably additive $\QQ'\in\ba(\OO,\PPk)_+$ such that
$\QQ'[A]\leq \QQ[A]$ for all $\AA\in\OO$.
It is the content of the
Yosida-Hewitt decomposition (see \citet{YosHew52}) that each
$\QQ\in\ba(\OO,\PPk)_+$ can be uniquely decomposed as
$\QQ=\QQ^r+\QQ^s$, with $\QQ^r, \QQ^s\in\ba(\OO,\PPk)_+$,  where
$\QQ^r$ is a $\sigma$-additive measure, and $\QQ^s$ is purely
finitely-additive.

Having defined the ambient space $\ba(\OO,\PPk)$, we turn our
attention to the definition of the set $\DDk$ which will serve as a
building block in the advertised enlargement of the set $\MMk$.
Let $(\MMk)^{\circ}$ be the polar of $\MMk$ in $\linf(\OO,\PPk)$,
and let $\DDk$ be the polar of $(\MMk)^{\circ}$ (the bipolar of
$\MMk$), i.e.
\begin{equation}
    \nonumber
    \label{equ:polars}
    \begin{split}
    (\MMk)^{\circ}&\triangleq \set{c\in\linf(\OO,\PPk)\,:\,\scl{c}{\QQ}\leq 1,\ \text{for all}\ \QQ\in\MMk}.\
     \\
\DDk&\triangleq \set{\QQ\in\ba(\OO,\PPk)\,:\,\scl{c}{\QQ}\leq 1,\
\text{for all}\ c \in(\MMk)^{\circ}},
    \end{split}
\end{equation}
and we note immediately that $\DDk\subseteq\ba(\OO,\PPk)_+$,
because $ (\MMk)^{\circ}$ contains the negative orthant $-\linf_+(\OO,\PPk)$ of $\linf(\OO,\PPk)$.

Finally, for $y>0$ we define \[\MMk(y)\triangleq \set{\xi\QQ \,:\,
\xi\in [0,y],\,\QQ\in\MMk},\ \text{and} \ \DDk(y)\triangleq
\set{y\QQ \,:\,  \QQ\in\DDk}. \] Observe that $\MMk(y)\subseteq
\DDk(y)$ for each $y\geq 0$. Even though $\MMk(y)$ will typically
be a proper subset of $\DDk(y)$ for any $y>0$, the following
proposition shows that the difference is, in a
sense, small.
\begin{prop} \label{pro:mdensd} For $y>0$, $\MMk(y)$ is $\sigma(\ba(\OO,\PPk),\linf(\OO,\PPk))$-dense
in $\DDk(y)$.
\end{prop}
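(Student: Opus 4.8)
The plan is to pin down $\DDk$ explicitly through the bipolar theorem and then reduce the density statement to an elementary fact about the convex set $\MMk$. Since $\kappa_T=1$ a.s., the induced measure $\PPk$ is a probability on $\OO$, so $\ba(\OO,\PPk)$ is isometrically the dual of $\linf(\OO,\PPk)$ and the bilinear form $\scl{\cdot}{\cdot}$ puts the two spaces in separating duality, with $\sigma(\ba(\OO,\PPk),\linf(\OO,\PPk))$ being exactly the weak-$*$ topology on $\ba(\OO,\PPk)$. By its very definition $\DDk=((\MMk)^{\circ})^{\circ}$ is the bipolar of $\MMk$ with respect to the (one-sided) polar operation, so the bipolar theorem gives
\[ \DDk=\cl{\conv(\MMk\cup\{0\})}, \]
the closure being taken in the $\sigma(\ba(\OO,\PPk),\linf(\OO,\PPk))$-topology; the origin has to be adjoined because the one-sided polar always contains $0$.

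Next I would check that $\MMk$ is convex. The set $\MM$ is convex: a convex combination of probability measures equivalent to $\PP$ is again equivalent to $\PP$, and if $S$ is a local martingale under $\QQ^1,\QQ^2\in\MM$ with density processes $Z^1,Z^2$, then $SZ^1,SZ^2$ and hence $S(\lambda Z^1+(1-\lambda)Z^2)$ are $\PP$-local martingales, so $\lambda\QQ^1+(1-\lambda)\QQ^2\in\MM$. The map $\QQ\mapsto\QQk$ is linear — this is transparent from Proposition \ref{pro:intdual}, since $\QQ\mapsto\yq$ is linear — so $\MMk$, being the image of $\MM$ under a linear map, is convex. Given convexity, a generic point of $\conv(\MMk\cup\{0\})$ has the form $\sum_{i=1}^{n}\lambda_i\QQ_i$ with $\QQ_i\in\MMk$, $\lambda_i\ge0$, $\sum_i\lambda_i\le1$ (the point $0$ carrying the residual weight), and equals $r\bar\QQ$ with $r=\sum_i\lambda_i\in[0,1]$ and $\bar\QQ\in\MMk$ ($\bar\QQ$ immaterial when $r=0$); conversely $\xi\QQ=\xi\QQ+(1-\xi)\cdot0\in\conv(\MMk\cup\{0\})$ for $\xi\in[0,1]$ and $\QQ\in\MMk$. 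Hence $\conv(\MMk\cup\{0\})=\MMk(1)$, and therefore $\DDk=\cl{\MMk(1)}$ in the weak-$*$ topology.

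Finally, since $y>0$ and scalar multiplication by $y$ is a weak-$*$ homeomorphism of $\ba(\OO,\PPk)$ onto itself, it commutes with the weak-$*$ closure, so
\[ \DDk(y)=y\,\DDk=y\,\cl{\MMk(1)}=\cl{y\,\MMk(1)}=\cl{\MMk(y)}, \]
which is precisely the assertion that $\MMk(y)$ is $\sigma(\ba(\OO,\PPk),\linf(\OO,\PPk))$-dense in $\DDk(y)$. (For $y=0$ both sets equal $\{0\}$ and there is nothing to prove.)

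I do not anticipate a genuine obstacle: once the definitions are unwound the statement is a direct instance of the bipolar theorem, the only delicate points being the precise form of that theorem for the one-sided polar (which forces the $\{0\}$), the routine verification that $(\ba(\OO,\PPk),\linf(\OO,\PPk))$ is a dual pair, and the observations — linearity of $\QQ\mapsto\QQk$ and convexity of $\MM$ — that collapse $\conv(\MMk\cup\{0\})$ to the scaled set $\MMk(1)$. If one prefers not to invoke the bipolar theorem by name, the same conclusion follows from a direct Hahn--Banach separation: were some $\QQ_0\in\DDk(y)$ outside the weak-$*$-closed convex set $\cl{\MMk(y)}$, which contains $0$, there would exist $c\in\linf(\OO,\PPk)$ with $\sup_{\QQ\in\MMk(y)}\scl{c}{\QQ}\le\alpha<\scl{c}{\QQ_0}$ and $\alpha\ge0$; a suitable positive rescaling of $c$ (the case $\alpha=0$ being handled by letting the scale tend to infinity) then lies in $(\MMk)^{\circ}$, contradicting $\QQ_0\in\DDk(y)$.
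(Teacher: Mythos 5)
Your argument is correct, but it reaches the conclusion by a different route than the paper. You quote the one-sided bipolar theorem for the dual pair $(\ba(\OO,\PPk),\linf(\OO,\PPk))$ to get $\DDk=\cl{\conv(\MMk\cup\{0\})}$, and then use convexity of $\MM$ (hence of $\MMk$, via linearity of $\QQ\mapsto\QQk$) to collapse $\conv(\MMk\cup\{0\})$ to $\MMk(1)$, so that $\DDk(y)=\cl{\MMk(y)}$ after scaling; your closing remark even reproduces the separation proof of the bipolar theorem in this special case, so nothing essential is left unverified. The paper instead runs the Hahn--Banach separation against the larger, downward-solid set $\MMk-\ba(\OO,\PPk)_+$, showing $\DDk(1)\subseteq\cl{\MMk-\ba(\OO,\PPk)_+}$, and then removes the subtracted positive parts by a total-mass argument on the normalized slice $\set{\QQ\in\DDk(1):\scl{1}{\QQ}=1}$, where $\scl{1}{\QQ^{+}_{\alpha}}\to 0$ forces norm convergence of the correction terms. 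Your version buys brevity and transparency: the only market-specific input is convexity of $\MMk$, the $\{0\}$-adjunction in the bipolar theorem exactly accounts for the scalars $\xi\in[0,y]$ in $\MMk(y)$, and no normalization step is needed. The paper's hands-on version, on the other hand, does not invoke the bipolar theorem by name and makes visible the role of the negative orthant (i.e., the solidity of the dual domain), which is the same structural feature exploited elsewhere in the duality arguments; both proofs are, at bottom, Hahn--Banach separation in the weak-$*$ topology.
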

\begin{proof}
It is enough to provide a proof in the case $y=1$. We start  by
showing that $\DDk(1)$ is contained in the
$\sigma(\ba(\OO,\PPk),\linf(\OO,\PPk))$ - closure
$\cl{\MMk-\ba(\OO,\PPk)_+}$ of the set
$\MMk-\ba(\OO,\PPk)_+$,  where
\[\MMk-\ba(\OO,\PPk)_+\triangleq\set{\QQ-\QQ'\,:\,
\QQ\in\MMk,\,\QQ'\in\ba(\OO,\PPk)_+}.\] Suppose, to the contrary,
 that there exists $\QQ^*\in\DDk(1) \setminus
\cl{\MMk-\ba(\OO,\PPk)_+}$.
 By the Hahn-Banach theorem
there will exist an element $c^*\in\linf(\OO,\PPk)$, and constants
$a<b$ such that $\scl{c^*}{\QQ^*}\geq b$ and $\scl{c^*}{\QQ}\leq
a$, for all $\QQ\in\cl{\MMk-\ba(\OO,\PPk)_+}$. Since
$\MMk-\ba(\OO,\PPk)_+$ contains all negative elements of
$\ba(\OO,\PPk)$, we conclude that $c^*\geq 0$, $\PPk$-a.s. and
so, $0\leq a$. Furthermore, the positivity of $b$ implies that
$\PPk[c^*>0]>0$, since the probability measures in $\MMk$ are
equivalent to $\PPk$. Therefore, $0<a<b$, and the random variable
$\frac{1}{a} c^*$ belongs to $(\MMk)^{\circ}$.  It follows that
$\scl{c^*}{\QQ^*}\leq a$, a contradiction with fact that
$\scl{c^*}{\QQ^*}\geq b$.

To finalize the proof we pick $\QQ\in\DDk'(1)\triangleq
\set{\QQ\in\DDk(1)\,:\, \scl{1}{\QQ}=1}$ and take a directed set
$A$ and a net $(\tilde{\QQ}_{\alpha})_{\alpha\in A}$ in
$\MMk-\ba(\OO,\PPk)_+$ such that $\tilde{\QQ}_{\alpha}\to \QQ$.
Such a net exists thanks to the result of the first part of this
proof.
 Each $\tilde{\QQ}_{\alpha}$ can be written as
$\tilde{\QQ}_{\alpha}=\QQ^{\MMk}_{\alpha}-\QQ^{+}_{\alpha}$ with
$\QQ^{\MMk}_{\alpha}\in\MMk$ and
$\QQ^{+}_{\alpha}\in\ba(\OO,\PPk)_+$, for all $\alpha\in A$.
Weak-* convergence of the net $\tilde{\QQ}_{\alpha}$ implies that
$\scl{1}{\QQ^{+}_\alpha}\to 0$ and therefore $\QQ^{+}_\alpha\to 0$
in the norm- and weak-* topologies. Thus $\QQ^{\MMk}\to \QQ$ and
we conclude that $\MMk$ is dense in $\DDk'(1)$. It follows
immediately that $\MMk(1)$ is dense in $\DDk(1)$.
\end{proof}

\subsection{The space $\slzmk$}
Let $\slzmk$ stand for the vector space of all optional random
processes $\prf{c_t}$ verifying
\[\normm{c}<\infty,\ \text{where}\
\normm{c}\triangleq \sup_{\QQ\in\MMk}\scl{\abs{c}}{\QQ}.\] It is
quite clear that $\normm{\cdot}$ defines a norm on $\slzmk$. We
establish completeness in the following proposition.
\begin{prop}
    \label{pro:isbanach}
    $(\slzmk,\normm{\cdot})$ is a Banach space.
\end{prop}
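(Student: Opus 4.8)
The plan is to squeeze the $\MM_\kappa$-norm from below by a concrete $L^1$-norm, use this to extract an almost-sure limit of a given Cauchy sequence, and then upgrade that to convergence in $\normm{\cdot}$ by a Fatou argument carried out for one martingale measure at a time before passing to the supremum over $\MMk$.

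First I would fix a reference element $\QQ_0\in\MMk$; this set is non-empty since $\MM\neq\emptyset$ by Assumption \ref{ass:NFLVR}. Writing $\QQ_0=\QQ^\kappa$ for some $\QQ\in\MM$ and applying Proposition \ref{pro:qmart} together with the strict positivity of the density process $\prf{\yq_t}$, one sees that $\QQ_0$ is a finite (indeed probability, as $\kappa_T=1$) measure on $\OO$ with exactly the same null sets as $\PPk$; the identical argument shows that every member of $\MMk$ is equivalent to $\PPk$, hence absolutely continuous with respect to $\QQ_0$. The elementary but decisive remark is then that for every $c\in\slzmk$,
\[
   \normm{c}\ =\ \sup_{\QQ\in\MMk}\scl{\abs{c}}{\QQ}\ \geq\ \scl{\abs{c}}{\QQ_0}\ =\ \norm{c}_{L^1(\OO,\QQ_0)},
\]
so that the $\MM_\kappa$-norm dominates the $L^1(\QQ_0)$-norm.

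Next, given a Cauchy sequence $\sq{c^{(n)}}$ in $(\slzmk,\normm{\cdot})$, the inequality above makes it Cauchy in the complete space $L^1(\OO,\QQ_0)$; let $c$ be its limit there, chosen $\OO$-measurable, i.e.\ optional, and pass to a subsequence along which $c^{(n)}\to c$ $\QQ_0$-a.s., hence $\PPk$-a.s., hence $\QQ$-a.s.\ for \emph{every} $\QQ\in\MMk$ simultaneously. Fix $\epsilon>0$ and $N$ with $\normm{c^{(n)}-c^{(m)}}<\epsilon$ for $m,n\geq N$. For each fixed $\QQ\in\MMk$ and each $n\geq N$, Fatou's lemma gives $\scl{\abs{c^{(n)}-c}}{\QQ}\leq\liminf_{m\to\infty}\scl{\abs{c^{(n)}-c^{(m)}}}{\QQ}\leq\epsilon$ (each bracket being a genuine Lebesgue integral over $[0,T]\times\Omega$ by \eqref{equ:qnot}, and finite since it is bounded by $\normm{c^{(n)}-c^{(m)}}$). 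Taking the supremum over $\QQ\in\MMk$ yields $\normm{c^{(n)}-c}\leq\epsilon$ for all $n\geq N$. In particular $c=c^{(N)}+(c-c^{(N)})$ with $c^{(N)}\in\slzmk$ and $\normm{c-c^{(N)}}\leq\epsilon<\infty$, whence $c\in\slzmk$, and $c^{(n)}\to c$ in $\normm{\cdot}$; this proves completeness.

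The step I would flag as the crux is the interchange of $\sup_{\QQ\in\MMk}$ with the limit in $m$: this is legitimate precisely because the passage to an almost-surely convergent subsequence can be done once and for all — a single $\PPk$-null exceptional set serves \emph{every} $\QQ\in\MMk$, each such $\QQ$ being absolutely continuous with respect to $\PPk$ — so that the Fatou bound $\scl{\abs{c^{(n)}-c}}{\QQ}\leq\epsilon$ holds with the same $N$ uniformly over $\QQ\in\MMk$. The remaining ingredients (completeness of $L^1(\QQ_0)$, optionality of the $L^1(\QQ_0)$-limit, and the triangle-inequality bookkeeping giving $c\in\slzmk$) are routine.
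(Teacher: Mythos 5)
Your proof is correct and follows essentially the same route as the paper's: dominate $\normm{\cdot}$ from below by the $L^1$-norm of a single reference measure in $\MMk$ (legitimate since $\MM\neq\emptyset$ and each $\QQ_\kappa$ is equivalent to $\PPk$), extract a candidate optional limit in that $L^1$ space, and then push the estimate back uniformly over all of $\MMk$ before taking the supremum. The only difference is cosmetic: the paper checks completeness via the absolutely-summable-series criterion, bounding the tail $\scl{\abs{c_0-\sum_{k\leq n}c_k}}{\QQ_\kappa}$ by the triangle inequality, whereas you work with a general Cauchy sequence and apply Fatou's lemma along an a.s.-convergent subsequence; both devices are standard and interchangeable here.
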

\begin{proof}
    To prove
    that $\slzmk$ is complete under $\normm{\cdot}$, we take a sequence $\sq{c_n}$ in $\slzmk$ such that
    $\sum_{n} \normm{c_n }<\infty$. Given a fixed, but arbitrary
    ${\tilde{\QQ}_{\kappa}}\in\MMk$, the inequality $\normm{c}\geq
    \scl{|c|}{{\tilde{\QQ}_{\kappa}}}$ holds for every $c\in\slzmk$ and thus
    the series $\sum_{n=1}^{\infty}\abs{c_n}$ converges in $\lone(\OO,{\tilde{\QQ}_{\kappa}})$.
    We can, therefore,
     find an optional process
    $c_0\in\lone({\tilde{\QQ}_{\kappa}},\OO)$ such that $c_0=\lim_{n\to\infty} \sum_{k=1}^n c_k$,
    in $\lone({\tilde{\QQ}_{\kappa}},\OO)$
    and ${\tilde{\QQ}_{\kappa}}$-almost surely.

   For an arbitrary $\QQk\in\MMk$ we
    have: \[
    \scl{
        |
            c-\sum_{k=1}^n c_k
        |}{\QQk}
     \leq
    \sum_{k=n+1}^{\infty} \scl{\abs{c_k}}{\QQk} \leq
    \sum_{k=n+1}^{\infty} \normm{c_k}.\]
    By taking the supremum over all $\QQk\in\MMk$, it follows that $c_0\in\slzm$ and $\sum_{k=1}^{\infty} c_k=c_0$
    in $\normm{\cdot}$.
\end{proof}

\begin{rem}
A norm of the form $\norm{\cdot}_{\MM}$ has first appeared in
\citet{DelSch97a}, where the authors study the Banach-space
properties of the space of {\em workable contingent claims}.
\end{rem}
 At this point, we can introduce the third (and final) update of the
 notation of (\ref{equ:qnot}). Let  $\slzmkp$
 denotes the set of nonnegative elements in $\slzmk$. For $c\in\slzmkp$
 a constant $y>0$ and $\QQ\in\DDk(y)$, we define
\begin{equation}
    \label{equ:action}
    \begin{split}
     \scl{c}{\QQ}\triangleq
\sup\sets{\scl{c'}{\QQ}}{ c'\in\linf(\OO,\PPk)_+,\ c'\leq c \,
\text{ $\PPk$-a.s.}}.
    \end{split}
\end{equation}
  Proposition \ref{pro:mdensd} implies that
$\scl{c}{\QQ}\leq y \normm{c}<\infty$ for any $\QQ\in\DDk(y)$. We
can therefore extend the mapping $\scl{\cdot}{\cdot}$ to a pairing
(a bilinear form) between the vector spaces $\slzmk$ and $\bam$, where
$\bam$ is defined as the linear space spanned by $\DDk$, i.e.
\[ \bam\triangleq \set{\QQ\in\ba(\OO,\PPk)\,:\,
\exists\,y>0, \QQ^+,\QQ^- \in\DDk(y)\ \text{such that} \
\QQ=\QQ^+-\QQ^-}.\] The linear space $\bam$ plays the role of the
ambient space in which the dual domain will be situated. It will
replace the space $\ba$ appearing in \cite{CviSchWan01} and
\cite{KarZit03}, and allow us to deal with unbounded random
endowment and the stochastic clock.

In this way the action $\scl{\cdot}{\QQ}$ defined in (\ref{equ:action})
identifies $\QQ\in\bam$
with a linear functional on $(\slzm,\normm{\cdot})$, and by the
construction of the pairing $\scl{\cdot}{\cdot}$, the dual norm
\[ \norm{\QQ}_{\bam}\triangleq
\sup_{c\in\slzmk\,:\, \normm{c}\leq 1}\abs{\scl{c}{\QQ}}\]
 of $\QQ\in\DDk(y)$ (seen as
a linear functional on $\slzmk$) is at most equal to $2y$. We can,
therefore, identify $\bam$ with a subspace of the topological dual
of $\slzmk$ and $\DDk(y)$ with its bounded subset. Moreover, by the virtue of its definition as a polar
set of $(\MMk)^{\circ}$, $\DDk(y)$ is closed in $\bam$ in the
$\sigma(\bam,\slzmk)$-topology, so that the following proposition becomes
is a direct consequence of Alaoglu's Theorem
\begin{prop} \label{pro:iscomp} For every $y>0$, $\DDk(y)$ is
$\sigma(\bam,\slzmk)$-compact.
\end{prop}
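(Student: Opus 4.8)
The plan is to obtain the assertion from the Banach--Alaoglu theorem applied to the topological dual of the Banach space $(\slzmk,\normm{\cdot})$. By Proposition~\ref{pro:isbanach} the latter is a Banach space, and, as recorded in the paragraph preceding the statement, the bilinear form $\scl{\cdot}{\cdot}$ identifies $\bam$ with a linear subspace of $(\slzmk)^{*}$ on which every $\QQ\in\DDk(y)$ has dual norm $\norm{\QQ}_{\bam}\le 2y$. Hence $\DDk(y)$ lies inside the closed ball of radius $2y$ of $(\slzmk)^{*}$, which by Banach--Alaoglu is $\sigma((\slzmk)^{*},\slzmk)$-compact; and since $\sigma(\bam,\slzmk)$ is precisely the trace on $\bam$ of $\sigma((\slzmk)^{*},\slzmk)$, it suffices to show that $\DDk(y)$ is $\sigma((\slzmk)^{*},\slzmk)$-closed in $(\slzmk)^{*}$: being then a closed subset of a compact set, it is $\sigma((\slzmk)^{*},\slzmk)$-compact, hence $\sigma(\bam,\slzmk)$-compact.

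For the closedness I would show that $\DDk(y)$ coincides with the polar of $(\MMk)^{\circ}\subseteq\linf(\OO,\PPk)\subseteq\slzmk$ computed inside $(\slzmk)^{*}$, that is,
\[
\DDk(y)=\sets{\phi\in(\slzmk)^{*}}{\scl{c}{\phi}\le y\ \text{for all}\ c\in(\MMk)^{\circ}},
\]
the right-hand side being an intersection of the $\sigma((\slzmk)^{*},\slzmk)$-closed half-spaces $\sets{\phi}{\scl{c}{\phi}\le y}$ and therefore closed. The inclusion ``$\subseteq$'' is immediate since, for $\QQ\in\DDk(y)$ and $c\in(\MMk)^{\circ}$, one has $\scl{c}{\QQ}\le y\normm{c}\le y$ (the $\normm{\cdot}$-unit ball is contained in $(\MMk)^{\circ}$, and $\scl{c}{\QQ}\le y\normm{c}$ by Proposition~\ref{pro:mdensd}). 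For ``$\supseteq$'', given $\phi$ in the right-hand set one restricts it to $\linf(\OO,\PPk)$ to obtain a finitely additive measure $\QQ\in\ba(\OO,\PPk)$ with $\scl{c}{\QQ}\le y$ for every $c\in(\MMk)^{\circ}$, so that $\QQ\in\DDk(y)$ by the definition of $\DDk$ as a bipolar; it then remains to check that $\phi$ equals, on all of $\slzmk$, the canonical extension of $\QQ$ furnished by (\ref{equ:action}). For this one uses $\norm{\phi}\le y$, decomposes $c=(c\wedge n)+(c-n)^{+}$ for $c\in\slzmkp$, and estimates $\abs{\scl{c}{\phi}-\scl{c\wedge n}{\QQ}}=\abs{\scl{(c-n)^{+}}{\phi}}\le y\normm{(c-n)^{+}}$, while $\scl{c\wedge n}{\QQ}\uparrow\scl{c}{\QQ}$ as $n\to\infty$ by (\ref{equ:action}).

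The step that requires genuine care is the last one: to conclude $\scl{c}{\phi}=\scl{c}{\QQ}$ and thus $\phi=\QQ$, one needs $\normm{(c-n)^{+}}\to 0$ as $n\to\infty$ for every $c\in\slzmkp$ --- i.e.\ that truncations approximate in the $\normm{\cdot}$-norm --- or, equivalently, one must rule out that a $\sigma((\slzmk)^{*},\slzmk)$-limit of a net drawn from $\DDk(y)$ escapes the subspace $\bam$; this is the one non-formal ingredient. (If one prefers, the whole argument can be recast via Tychonoff's theorem on the compact box $\prod_{c\in\slzmk}[-2y\normm{c},2y\normm{c}]$ into which $\DDk(y)$ embeds through $\QQ\mapsto\scl{c}{\QQ}$, but the delicate point is identical.) Everything else --- norm-boundedness of $\DDk(y)$, the identification of $\sigma(\bam,\slzmk)$ with the trace topology, and the appeals to Proposition~\ref{pro:isbanach} and Banach--Alaoglu --- is routine, and the proposition follows.
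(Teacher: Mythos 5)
Your overall strategy coincides with the paper's: bound $\DDk(y)$ in the dual norm of $(\slzmk)^{*}$, invoke Alaoglu, and reduce the proposition to a closedness statement extracted from the description of $\DDk(y)$ as a (scaled) polar of $(\MMk)^{\circ}$. The norm estimate $\scl{c}{\QQ}\le y\normm{c}$, the identification of $\sigma(\bam,\slzmk)$ with the trace of the weak-star topology, and the inclusion of $\DDk(y)$ into the set $\set{\phi\in(\slzmk)^{*}:\scl{c}{\phi}\le y\ \text{for all}\ c\in(\MMk)^{\circ}}$ are all correct and are exactly the ingredients the paper records just before the statement.

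The problem is the step you yourself flag and then do not carry out: showing that every $\phi$ in that polar (equivalently, every weak-star cluster point of a net from $\DDk(y)$) is again given, on all of $\slzmk$ and not merely on $\linf(\OO,\PPk)$, by a measure $\QQ\in\DDk(y)$ through (\ref{equ:action}). This is the entire content of the proposition beyond Alaoglu, and your route to it requires $\normm{(c-n)^{+}}\to 0$ for every $c\in\slzmkp$, i.e.\ that truncations converge uniformly over the whole family $\MMk$. Nothing established in Section \ref{sec:funanal} gives this: $\normm{c}$ is a supremum of integrals over an infinite family of measures, and such a supremum can be finite while $\sup_{\QQ\in\MMk}\scl{(c-n)^{+}}{\QQ}$ stays bounded away from zero (already for a countable family of equivalent probability measures putting mass of order $1/k$ on the point $k$ and $c(k)=k$, the supremum of $\int c$ is finite while the truncated tails do not vanish uniformly); so if the property holds here it must come from the specific structure of $\MMk$, for which you offer no argument. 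Worse, the property is essentially equivalent to the closedness you are proving: if it fails for some $c$, a sequence $\QQ_n\in\MMk$ witnessing the failure has all of its weak-star cluster points outside the image of $\bam$, so declaring it ``the one non-formal ingredient'' and then concluding that ``the proposition follows'' assumes precisely what has to be shown. The paper's own proof does not take the restriction-to-$\linf$/re-extension detour at all: it notes that $\DDk(y)$ is by definition the intersection, over $c\in(\MMk)^{\circ}\subseteq\linf(\OO,\PPk)\subseteq\slzmk$, of the $\sigma(\bam,\slzmk)$-closed half-spaces $\set{\QQ\in\bam:\scl{c}{\QQ}\le y}$, hence closed, and combines this with the norm bound and Alaoglu — so the truncation issue never appears in its argument. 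You have in fact put your finger on the point the paper passes over most quickly, but as written your proof is incomplete: either argue the closedness at the level of $\bam$ as the paper does, or supply an actual proof of the truncation claim under the paper's assumptions.
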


Finally, we state a version of the budget-constraint
characterization of admissible consumption processes, rewritten
to achieve a closer match with our newly introduced setup. It
follows directly from Propositions \ref{pro:charadm} and
\ref{pro:mdensd}.

\begin{prop}
    \label{pro:charadm2}
    For any $y>0$, $x\in\R$ and a nonnegative optional process $\prf{c_t}$, we have the following equivalence
    \[ c\in\AA(x,\EN)\ \iff  y\scl{c}{\QQ}\leq xy+\scl{e}{\QQ}\ \text{for all $\QQ\in \DDk(y)$,}\]
    where $\EN_t=\int_0^t e_u\,d\kappa_u$.
    Moreover, to check whether $c\in \AA(x,\EN)$, it is enough to
    show $y\scl{c}{\QQ}\leq xy+\scl{e}{\QQ}$ for all $\QQ\in\MMk(y)$
    only.
\end{prop}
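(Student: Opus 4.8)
The plan is to read Proposition~\ref{pro:charadm} through the dictionary built in Section~\ref{sec:funanal}, and then to trade the countably additive test measures $\MMk$ for the larger, weak-$*$-closed family $\DDk(y)$ by invoking the density Proposition~\ref{pro:mdensd} together with the compactness Proposition~\ref{pro:iscomp}. First I would note that, for $\QQ\in\MM$, Proposition~\ref{pro:intdual} identifies $\scl{c}{\QQk}=\EE^{\QQ}[C_T]$ and $\scl{e}{\QQk}=\EE^{\QQ}[\EN_T]$, where $C_t\triangleq\int_0^t c_u\,d\kappa_u$; moreover the assumption $\UU(\EN_T)=\sup_{\QQ\in\MM}\EE^{\QQ}[\EN_T]<\infty$ gives $\normm{e}<\infty$, so $e\in\slzmkp$ and the extended pairing $\scl{e}{\QQ}$ of (\ref{equ:action}) is well defined on all of $\DDk(y)$. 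With this translation, Proposition~\ref{pro:charadm} says precisely that $c\in\AA(x,\EN)$ if and only if $\scl{c}{\QQk}\le x+\scl{e}{\QQk}$ for every $\QQk\in\MMk$ (the coefficient of $x$ being $\scl{\one}{\QQk}=\EE^{\QQ}[\kappa_T]=1$; in the notation of the statement, where the dual measures are normalised to carry total mass $y$, this is the inequality $y\scl{c}{\QQ}\le xy+\scl{e}{\QQ}$). Observe also that this inequality, once it is known to hold, forces $\normm{c}=\sup_{\QQk\in\MMk}\scl{c}{\QQk}\le x+\UU(\EN_T)<\infty$, so that $c\in\slzmkp$ and the left-hand pairing $\scl{c}{\QQ}$ is available on $\DDk(y)$ as well.

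Both sides of the inequality above are positively homogeneous in the measure, so it is equivalent to the same inequality taken over $\MMk(y)$ instead of $\MMk$; combined with the last step this already delivers the ``moreover'' clause, once one knows that testing over $\MMk(y)$ and over $\DDk(y)$ are the same --- which is the crux. For this, by Proposition~\ref{pro:iscomp} the set $\DDk(y)$ is $\sigma(\bam,\slzmk)$-compact, so the identity map from $(\DDk(y),\sigma(\bam,\slzmk))$ onto $(\DDk(y),\sigma(\ba(\OO,\PPk),\linf(\OO,\PPk)))$ is a continuous bijection from a compact space onto a Hausdorff one, hence a homeomorphism; the two weak topologies therefore agree on $\DDk(y)$, and the $\sigma(\ba,\linf)$-density of $\MMk(y)$ in $\DDk(y)$ from Proposition~\ref{pro:mdensd} is actually $\sigma(\bam,\slzmk)$-density. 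Given $\QQ\in\DDk(y)$, I would pick a net $(\QQ_{\alpha})$ in $\MMk(y)$ with $\QQ_{\alpha}\to\QQ$ in $\sigma(\bam,\slzmk)$; since $c$, $e$ and $\one$ all lie in $\slzmk$, each of $\QQ'\mapsto\scl{c}{\QQ'}$, $\QQ'\mapsto\scl{e}{\QQ'}$, $\QQ'\mapsto\scl{\one}{\QQ'}$ is $\sigma(\bam,\slzmk)$-continuous, so passing to the limit in the $\MMk(y)$-inequality yields the $\DDk(y)$-inequality; the reverse implication holds trivially because $\MMk(y)\subseteq\DDk(y)$. Chaining $c\in\AA(x,\EN)\iff(\MMk\text{-inequality})\iff(\MMk(y)\text{-inequality})\iff(\DDk(y)\text{-inequality})$ then gives both assertions of the proposition.

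The step I expect to be the main obstacle is precisely this passage from $\MMk(y)$ to $\DDk(y)$. Two things have to be secured \emph{before} the limiting argument is legitimate: that $\scl{c}{\QQ}$ really is defined on $\DDk(y)$ --- this is why one first extracts $\normm{c}<\infty$ from the inequality itself --- and that the density statement of Proposition~\ref{pro:mdensd}, which is phrased for $\sigma(\ba,\linf)$, can be upgraded to the finer topology $\sigma(\bam,\slzmk)$ in which $\scl{e}{\cdot}$ is continuous; that upgrade is exactly where the compactness of $\DDk(y)$ is used. The continuity of $\scl{e}{\cdot}$, i.e. the finiteness $\normm{e}<\infty$, in turn rests squarely on the standing hypothesis $\UU(\EN_T)<\infty$. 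Beyond these points the proof is routine assembly of Propositions~\ref{pro:charadm}, \ref{pro:intdual}, \ref{pro:iscomp} and \ref{pro:mdensd}.
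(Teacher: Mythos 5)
Your proof is correct and takes essentially the same route as the paper, which simply states that the proposition ``follows directly from Propositions \ref{pro:charadm} and \ref{pro:mdensd}'': you carry out exactly that deduction, translating \ref{pro:charadm} via Proposition \ref{pro:intdual} and extending from $\MMk(y)$ to $\DDk(y)$ by density. The only added ingredient is your use of the compactness of $\DDk(y)$ (Proposition \ref{pro:iscomp}) to upgrade the $\sigma(\ba(\OO,\PPk),\linf(\OO,\PPk))$-density of Proposition \ref{pro:mdensd} to $\sigma(\bam,\slzmk)$-density, a legitimate and indeed necessary precaution (since $c$ and $e$ may be unbounded) that the paper leaves implicit.
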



\section{The Dual Optimization Problem and the Main Result}\label{sec:fv}

\subsection{The Convex Conjugate $\fV$ and Related Functionals} We define  a convex functional
  $\fV:\bam \to (-\infty,\infty]$, by
\begin{equation}
\label{equ:defV}
 \fV(\QQ)\triangleq \sup_{c\in\slzm_+} \Big( \fU(c)-\scl{c}{\QQ} \Big),
 \end{equation}
 and call it the {\bf convex conjugate of $\fV$}. The functional $\fV$ will play the central role in
 the convex-duality treatment of our utility-maximization problem.

By strict concavity and continuous differentiability of the
mapping $x\mapsto U(\omega, t, x)$, there exists a unique random
field $I:\Omega\times [0,T]\times (0,\infty)$ that solves the
equation $U_x(\omega,t, I(\omega, t, y))=y.$
Using the random field $I$, we introduce a functional $\fI$,
defined on and taking values in  the set of strictly positive
optional process, by $\fI(Y)_t(\omega)=I(\omega, t, Y_t)$. The
functional $\fI$ is called {\bf the inverse marginal utility
functional}. We note for the future use the well-known
relationship
\begin{equation}
    \label{equ:conj}
    \begin{split}
U(\omega, t, I(\omega, t, y))=V(\omega, t, y)+yI(\omega, t, y),\
(\omega,t,y)\in \Omega\times [0,T]\times (0,\infty),
    \end{split}
\end{equation}
where $V$ is {\bf the convex conjugate} of the utility random
field $U$, defined by $V(\omega, t,y)\triangleq\sup_{x>0}
[U(\omega, t, x)- xy]$, for $(\omega,t,y)\in \Omega\times
[0,T]\times (0,\infty)$.

 For a function $f:X\to\bar{\R}$ with an arbitrary domain $X$,
 taking values in the extended set of real numbers $\bar{\R}=[-\infty,\infty]$, we adopt the standard
 notation $\Dom(f)=\sets{x\in X}{f(x)\in (-\infty,\infty)}$.

The following proposition represents the convex conjugate $\fV$ in
terms of the regular part of its argument, relating the definition
(\ref{equ:defV}) to the corresponding formulations in
\cite{CviSchWan01} and \cite{KarZit03}.
\begin{prop} \label{pro:propfv} The domain $\Dom(\fV)$ of the convex
conjugate $\fV$ of $\fU$ satisfies $\Dom(\fV)\subseteq\bam_+$, and
$\Dom(\fV)+\bam_+\subseteq\Dom(\fV)$. For $\QQ\in\Dom(\fV)$, we
have $\fV(\QQ)=\fV(\QQ^r)$, where $\QQ^r\in\bam_+$ is the regular
part of the finitely-additive measure $\QQ$. Moreover, there
exists a non-negative optional process $\yq$, such that
\begin{equation}
    \label{equ:yrep}
    \begin{split}
     \fV(\QQ)=\EE\int_0^T V(t,\yq_t)\, d\kappa_t.
    \end{split}
\end{equation}
When $\QQ$ is countably-additive, the process $\prf{\yq_t}$
coincides with the synonymous martingale defined in Proposition
\ref{pro:qmart}.
\end{prop}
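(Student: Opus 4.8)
The plan is to reduce the computation of $\fV(\QQ)$ to a pointwise optimization problem over the product space $[0,T]\times\Omega$, exploiting the fact that the penalty term $\scl{c}{\QQ}$ is, on the regular part of $\QQ$, an integral against a $d\kappa$-density. First I would treat the case $\QQ\in\bam_+$ that is countably additive. Writing $\QQ=\QQk$ for a measure $\QQ\ll\PP$ on $\FF_T$ (up to a scalar), Proposition~\ref{pro:qmart} gives $\scl{c}{\QQ}=\EE\int_0^T c_t\,Y^{\QQ}_t\,d\kappa_t$, so that
\[
\fU(c)-\scl{c}{\QQ}=\EE\int_0^T\big(U(\omega,t,c_t)-c_t\,Y^{\QQ}_t\big)\,d\kappa_t.
\]
Maximizing the integrand pointwise over $c_t>0$ yields exactly $V(\omega,t,Y^{\QQ}_t)$, attained at $c_t=\fI(Y^{\QQ})_t=I(\omega,t,Y^{\QQ}_t)$ by \eqref{equ:conj}; thus formally $\fV(\QQ)=\EE\int_0^T V(t,Y^{\QQ}_t)\,d\kappa_t$, which is \eqref{equ:yrep}. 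The routine part is to justify the interchange of supremum and integral: one direction (that the right-hand side dominates $\fU(c)-\scl{c}{\QQ}$ for every $c$) is immediate from the definition of $V$; for the reverse I would approximate the optimizer $\fI(Y^{\QQ})$ by the truncated processes $c^{(n)}_t\triangleq (\fI(Y^{\QQ})_t\wedge n)\vee\tfrac1n$, which lie in $\slzmk_+$ because they are bounded, apply monotone/dominated convergence using the bounds $K_1\le U_x\le K_2$ from part~\eqref{U2} of Definition~\ref{def:utilfield} together with Assumption~\ref{ass:delta}, and pass to the limit. The asymptotic-elasticity hypothesis $\mathrm{AE}[U]<1$ is what keeps $\EE\int_0^T V(t,Y^{\QQ}_t)\,d\kappa_t$ from being $-\infty$ when $\QQ\in\Dom(\fV)$, and conversely controls the growth so the approximation is legitimate.

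Next I would establish the structural claims $\Dom(\fV)\subseteq\bam_+$ and $\Dom(\fV)+\bam_+\subseteq\Dom(\fV)$. For the first: if $\QQ\in\bam$ has a nontrivial negative part, i.e.\ $\QQ$ fails to be nonnegative, there is an optional set $A$ with $\scl{\ind{A}}{\QQ}<0$; scaling $c=\lambda\ind{A}+1$ (a bounded, hence $\slzm$-valued, consumption density bounded away from $0$) and letting $\lambda\to\infty$, the term $-\scl{c}{\QQ}$ tends to $+\infty$ while $\fU(c)$ stays finite by part~\eqref{U2} (the integrand $U(\omega,t,c_t)$ is squeezed between $U(\omega,t,1)$-type bounds and something growing sublinearly via $K_2$), so $\fV(\QQ)=+\infty$, i.e.\ $\QQ\notin\Dom(\fV)$. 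The monotonicity $\Dom(\fV)+\bam_+\subseteq\Dom(\fV)$ follows because adding $\QQ'\in\bam_+$ to $\QQ$ only increases the penalty $\scl{c}{\QQ}$ for $c\in\slzm_+$, hence decreases $\fU(c)-\scl{c}{\QQ}$, so $\fV(\QQ+\QQ')\le\fV(\QQ)<\infty$; nonnegativity of $\fV$ (take $c\equiv 1$ and use $G\le U(\omega,t,1)$) gives the lower bound.

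The main obstacle is the identity $\fV(\QQ)=\fV(\QQ^r)$ for general (possibly purely finitely additive) $\QQ\in\Dom(\fV)$, together with the correct interpretation of $\scl{c}{\QQ}$ for $c\in\slzm_+$ via the extension \eqref{equ:action}. The inequality $\fV(\QQ)\ge\fV(\QQ^r)$ is the subtle one: since $\QQ\ge\QQ^r$ we trivially get $\scl{c}{\QQ}\ge\scl{c}{\QQ^r}$ and hence $\fV(\QQ)\le\fV(\QQ^r)$, so the work is to show the singular part $\QQ^s$ contributes nothing to the supremum. The key point is that in \eqref{equ:defV} the supremum over $c\in\slzm_+$ can be realized along sequences $c^{(n)}$ that concentrate on $\PPk$-large sets while the optimizing profile is $\fI(Y^{\QQ^r})$, and a purely finitely additive $\QQ^s$ can be made to ``vanish'' against suitably chosen $c'\le c$ in the sup defining \eqref{equ:action}: concretely, for a fixed bounded $c$ one picks, using pure finite additivity, optional sets $A_n\uparrow$ with $\PPk[A_n^c]\to 0$ yet $\scl{\ind{A_n}}{\QQ^s}\to 0$ (this is the Yosida--Hewitt phenomenon), replaces $c$ by $c\,\ind{A_n}$, and checks that $\fU(c\,\ind{A_n})-\scl{c\,\ind{A_n}}{\QQ^r}\to \fU(c)-\scl{c}{\QQ^r}$ by dominated convergence while $\scl{c\,\ind{A_n}}{\QQ^s}\to 0$. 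Taking the supremum first over such $c$ bounded and then invoking the truncation approximation of the first paragraph yields $\fV(\QQ)\ge\fV(\QQ^r)$, completing the proof; Assumption~\ref{ass:delta} is invoked precisely to guarantee $\fU(\delta c^{(n)})>-\infty$ so that these scaled/truncated processes are legitimate competitors and the limits are not of the form $(+\infty)-(+\infty)$.
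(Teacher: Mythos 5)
Your overall route is the same as the paper's (negative sets force $\fV=+\infty$; monotonicity gives $\Dom(\fV)+\bam_+\subseteq\Dom(\fV)$; a Yosida--Hewitt argument kills the singular part; Radon--Nikodym plus pointwise maximization gives \eqref{equ:yrep}), but the one step you describe explicitly for the key identity $\fV(\QQ)\geq\fV(\QQ^r)$ is exactly where your construction fails. You propose to replace $c$ by $c\,\ind{A_n}$ with $\QQ^s[A_n]=0$ and $\PPk[A_n^c]\to 0$, and to pass to the limit ``by dominated convergence.'' But $U(\omega,t,\cdot)$ is only defined on $(0,\infty)$ and, under the Inada conditions, $U(\omega,t,0+)$ is typically $-\infty$ (logarithmic utility, for instance). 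Since $A_n^c$ has strictly positive $\PPk$-measure for every $n$, the convention adopted after \eqref{equ:udef} gives $\fU(c\,\ind{A_n})=-\infty$ in such cases, so $\fU(c\,\ind{A_n})-\scl{c\,\ind{A_n}}{\QQ}=-\infty$: these competitors say nothing about $\fV(\QQ)$, and there is no convergence statement to make. Invoking Assumption \ref{ass:delta} does not repair this, because $\delta\cdot(c\,\ind{A_n})$ still vanishes on $A_n^c$; the assumption protects processes scaled down by $\delta$, not processes set to zero on a set of positive $\PPk$-measure. The paper's device is precisely designed to avoid this: take $\hat{c}=c\,\ind{A}+\delta c\,\ind{A^c}$ with $A\in\Sing(\QQ)$ of $\PPk$-measure close to one, so the competitor stays strictly positive, Assumption \ref{ass:delta} guarantees $\fU(\delta c),\fU(\hat{c})\in\R$, and the difference $\fU(c)-\scl{c}{\QQ^r}-\fU(\hat{c})+\scl{\hat{c}}{\QQ}$ reduces to $\EE\int_0^T\big(U(t,c_t)-U(t,\delta c_t)\big)\ind{A^c}\,d\kappa_t-(1-\delta)\scl{c\ind{A^c}}{\QQ^r}+\delta\scl{c}{\QQ^s}$, which is made arbitrarily small by choosing $A$ first and then $\delta$. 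With that substitution your argument closes.

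Two smaller remarks. First, the reasonable-elasticity condition $\mathrm{AE}[U]<1$ plays no role in this proposition: the lower bound on $\EE\int_0^T V(t,\yq_t)\,d\kappa_t$ comes simply from $V(\omega,t,y)\geq U(\omega,t,1)-y\geq G-y$ together with $\EE\int_0^T \yq_t\,d\kappa_t=\scl{1}{\QQ^r}<\infty$, so attributing the finiteness to asymptotic elasticity is a misdirection. Second, your treatment of the representation \eqref{equ:yrep} (pointwise maximization attained at $\fI(\yq)$, justified by truncations $(\fI(\yq)\wedge n)\vee\tfrac1n$) is fine and parallels the paper's argument, provided you state that for general $\QQ\in\Dom(\fV)$ the process $\yq$ is the $d\PPk$-density of the regular part $\QQ^r$, which is what makes the countably additive case of Proposition \ref{pro:qmart} consistent with the general statement.
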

\begin{proof} For $\QQ\not\in\bam_+$, there exists an optional set $A$
such that $q\triangleq -\QQ[A]>0$. For a constant $\eps>0$, we
define a sequence $\sq{c^n}$ of optional processes by
$c^n\triangleq \eps+n\ind{A}$. Let $G$ being the constant from
Definition \ref{def:utilfield} (1)(b). Then
\begin{equation}
    \nonumber
    \begin{split}
         \fV(\QQ)\geq\fU(c^n)-\scl{c^n}{\QQ} \geq \eitk{U(\omega, t,\eps)}-\eps+nq
 \geq G-\eps+nq \to \infty,
    \end{split}
\end{equation}
yields $\fV(\QQ)=\infty$,
 and so $\Dom(\fV)\subseteq\bam_+$. To show that $\Dom(\fV)+\bam_+\subseteq \Dom(\fV)$ we only need
 to note that it follows directly from the monotonicity of $\fV$.

 For the second claim, let $\QQ\in\bam_+$ and let $\Sing(\QQ)$ denote the family of
 all optional sets $A\subseteq [0,T]\times\Omega$  such that $\QQ^s(A)=0$, where $\QQ^s$
 denotes the singular part of the finitely-additive measure $\QQ$.
     For $A\in \Sing(\QQ)$, $\delta > 0$, and an arbitrary $c\in \slzm_+$,
     we define an optional process $\hat{c}=\hat{c}^{(\delta,A)}$
    by $\hat{c}\triangleq c\ind{A}+\delta c \ind{A^c}$.
    Excluding the trivial cases when $\fU(c)=-\infty$ or $\fU(c)=+\infty$, we assume
    $\fU(c)\in \R$, so that Assumption \ref{ass:delta} implies that
    $\fU(\delta c), \fU(\hat{c}) \in\R$, as well. Now
\begin{equation}
    \label{equ:regVproof}
    \begin{split}
        &\fU(c)-\scl{c}{\QQ^r}-\fU(\hat{c})+\scl{\hat{c}}{\QQ}=\\ &
        \eitk{\big( U(t,c_t)-U(t,\delta c_t) \big)\ind{A^c}}-(1-\delta)\scl{c\ind{A^c}}{\QQ^r}+\delta\scl{c}{\QQ^s}.
    \end{split}
\end{equation}
According to  \citet{RaoRao83} (Theorem 10.3.2, p. 234),
$\Sing(\QQk)$
    contains sets with the $\PPk$-probability arbitrarily close to $1$, so we can make the right-hand side of the
    expression in (\ref{equ:regVproof}) arbitrarily small in absolute value, by a suitable choice of $A\in\Sing(\QQ)$ and $\delta$.
It follows immediately that
\[ \fV(\QQ^r)=\sup_{c\in\slzm} [ \fU(c)-\scl{c}{\QQ^r}]\leq  \sup_{c\in\slzm} [ \fU(c)-\scl{c}{\QQ}]
=\fV(\QQ),\] and the equality $\fV(\QQ)=\fV(\QQ^r)$ follows from
the monotonicity of $\fV$.

Note further that $\QQ^r$ is a countably-additive measure on the
$\sigma$-algebra of optional sets, absolutely continuous with
respect to the measure $\PPk$. It follows by the Radon-Nikodym
theorem that optional process $\prf{\yq_t}$ defined by
\begin{equation}\label{equ:yrep2}
 \yq(t,\omega)=\frac{d\QQ^r}{d\PPk},\ \text{satisfies}\
 \scl{c}{\QQ^r}=\EE\int_0^T c_t\yq_t \, d\kappa_t.
\end{equation}

Let us combine now the representation (\ref{equ:yrep2}) with the
fact that $\fV(\QQ)=\fV(\QQ^r)$. By the definition of the convex
conjugate function $V$,
\begin{equation}
    \nonumber
    \begin{split}
\fV(\QQ) & =\fV(\QQ^r)=\sup_{c\in\slzm_+} ( \fU(c)-\scl{c}{\QQ^r})
\\ &=\sup_{c\in\slzm_+} \EE\int_0^T \Big( U(t,c(t))-c(t)\yq_t \Big)
\, d\kappa_t
 \leq \EE\int_0^T V(t,\yq_t) \,d\kappa_t
    \end{split}
\end{equation}
The reverse inequality follows from the differentiability of the
function $V(t,\cdot)$ by taking a bounded sequence in $\slzm$
converging to $-\frac{\partial}{\partial y} V(t,y)$ monotonically,
in the supremum defining $\fV(\QQ^r)$.
\end{proof}

\begin{rem}
The action of the functional $\fI$ can be extended to the set of
all $\QQ\in\bam_+$ satisfying $\yq_t
> 0$ $\PPk$-a.e. by
$ \fI(\QQ)_t\triangleq  \fI(\yq)_t$, obtaining immediately
$\fI(\QQ)=\fI(\QQ^r)$.
\end{rem}

\subsection{The Dual Problem} The convex conjugate $\fV$ will serve as the main
ingredient in the convex-duality treatment of the Primal Problem.
We start by introducing  the {\bf Dual Problem}, with the value
function $v$:
\begin{equation}
    \label{equ:adp}
    \begin{split}
        v(y)\triangleq \inf_{\QQ\in \DDk(y)} \fVE(\QQ),\quad y\in [0,\infty) ,\ \text{where}\
        \fVE(\QQ)\triangleq \fV(\QQ)+\scl{e}{\QQ}.
    \end{split}
\end{equation}
For $y<0$ we set $v(y)=+\infty$, and note that $v(0)<\infty$
precisely when the utility functional $\fU$ is bounded from above.

\subsection{The Main Result}
Finally we state our central result in the following theorem. The
proof will be given through a number of auxiliary results in
Appendix A.
\begin{thm} \label{thm:aer} Let the financial market $\prf{S^i_t}$, $i=1,\dots, d$ be arbitrage-free as in Assumption
\ref{ass:NFLVR}, and let the random endowment process
$\prf{\EN_t}$ admit a density $\prf{e_t}$ so that $\EN_t=\int_0^t
e_u\, d\kappa_u$, where $\prf{\kappa_t}\in\slzer_+$ is a
stochastic clock. Let $U$ be a utility random field as defined in
\ref{def:utilfield} and $\fU$ the corresponding utility
functional. If $\fU$ satisfies Assumption \ref{ass:delta} and the
value function $u$ satisfies Assumption \ref{ass:finite}, then
\begin{enumerate}
  \item the concave value function
   $u(\cdot)$ is finite and strictly increasing on
   $(-\LL(\EN),\infty)$, and $u(x)=-\infty$ for $x<-\LL(\EN)$, where
   $\LL(\EN)\triangleq \inf_{\QQ\in\MM} \EE^{\QQ}[\EN_T]$ denotes the lower
   hedging price of the contingent claim $\EN_T$.
   \item
   $\lim_{x\to (-\LL(\EN))+} u'(x)=+\infty$ and $\lim_{x\to \infty}
   u'(x)=0$.
  \item The dual value function $v(\cdot)$ is finitely valued and continuously differentiable
  on $(0,\infty)$ and $v(y)=+\infty$ for $y< 0$.
  \item $\lim_{y\to 0+} v'(y)=-\infty$ and $\lim_{y\to\infty}
  v'(y)=-\LL(\EN)$.
 \item For any $y\geq 0$, there exists a solution to the
   Dual problem (\ref{equ:adp}) - i.e.
  $v(y)=\fV(\hqy)+\scl{e}{\hqy}$, for some $\hqy\in\DDk(y)$.
  \item For $x>-\LL(\EN)$ the Primal Problem has a solution
  $\prf{\hcx_t}$, unique $d\kappa$-a.e.
  \item The unique solution $\prf{\hcx_t}$ of the primal problem is of the form $\hcx_t=\fI(\hqy)_t$
  where $\hqy$ is a solution of the dual problem corresponding to
  $y>0$ such that $x=-v'(y)$.
\end{enumerate}
\end{thm}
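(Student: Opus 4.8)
The plan is to follow the classical convex-duality scheme (as in \cite{KraSch99}, \cite{CviSchWan01}, \cite{KarZit03}), but adapted to the pair of spaces $(\slzmk, \bam)$ developed in Section \ref{sec:funanal}. The backbone is the abstract min-max structure: by Proposition \ref{pro:charadm2}, $\AA(x,\EN)$ is the set of $c\in\slzmkp$ satisfying the linear inequalities $y\scl{c}{\QQ}\le xy+\scl{e}{\QQ}$ over $\QQ\in\DDk(y)$, so that $u(x)=\sup_{c}\inf_{y,\QQ}\bigl(\fU(c)-y\scl{c}{\QQ}+xy+\scl{e}{\QQ}\bigr)$; interchanging the sup and inf (to be justified via the compactness of $\DDk(y)$ in $\sigma(\bam,\slzmk)$ from Proposition \ref{pro:iscomp} and the concavity/convexity and semicontinuity of $\fU$ and $\fV$) turns the inner supremum over $c$ into $\fV(\QQ)+\scl{e}{\QQ}=\fVE(\QQ)$, yielding the conjugacy relation $u(x)=\inf_{y>0}\bigl(v(y)+xy\bigr)$, i.e. $u$ and $v$ are (essentially) conjugate convex functions. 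Everything in the theorem will then be extracted from this duality together with the representation $\fV(\QQ)=\EE\int_0^T V(t,\yq_t)\,d\kappa_t$ of Proposition \ref{pro:propfv}.

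First I would establish the domains and the abstract duality. The inequality $u(x)\le\inf_y(v(y)+xy)$ is immediate (weak duality): for $c\in\AA(x,\EN)$ and $\QQ\in\DDk(y)$, $\fU(c)\le \fV(\QQ)+\scl{c}{\QQ}\le \fV(\QQ)+xy+\scl{e}{\QQ}$. The reverse inequality and the existence of a dual minimizer $\hqy$ (items (5) and, via conjugacy, (3)) is the heart of the matter: here I would use that $\DDk(y)$ is $\sigma(\bam,\slzmk)$-compact and that $\QQ\mapsto\fVE(\QQ)=\fV(\QQ^r)+\scl{e}{\QQ}$ is $\sigma(\bam,\slzmk)$-lower semicontinuous (lower semicontinuity of $\fV$ follows because it is a supremum of continuous affine functionals $\QQ\mapsto\fU(c)-\scl{c}{\QQ}$ over $c\in\slzmkp$; care is needed since $\scl{c}{\QQ}$ for unbounded $c$ is defined by the monotone limit \eqref{equ:action}, but this limit is a sup of continuous functions, hence still l.s.c.), so the infimum in \eqref{equ:adp} is attained. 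Finiteness of $v$ on $(0,\infty)$ (item (3)) comes from Assumption \ref{ass:finite} (giving $v(y_0)<\infty$ for the corresponding $y_0$) together with the reasonable-elasticity bound $\mathrm{AE}[U]<1$, which via the standard estimates of \cite{KraSch99} controls $V(t,\cdot)$ and hence $\fV$ on all of $(0,\infty)$; the two-sided bounds $K_1\le U_x\le K_2$ of Definition \ref{def:utilfield}(1)(b) ensure $V(t,y)$ is finite for every $y>0$ regardless of $(\omega,t)$.

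Next I would transfer the analytic regularity of $v$ back to $u$ and vice versa. Once $v$ is known to be convex, finitely valued, decreasing on $(0,\infty)$ with $v(y)=+\infty$ for $y<0$, the conjugacy $u(x)=\inf_{y>0}(v(y)+xy)$ makes $u$ concave and gives items (1) and (2): the effective domain of $u$ is $(-\LL(\EN),\infty)$ because the asymptotic slope $\lim_{y\to\infty}v'(y)$ equals $-\LL(\EN)$ — this identification I would prove by analyzing $\fVE(y\QQ)$ as $y\to\infty$ and using that $\scl{e}{\QQ}$ on $\MMk$ ranges over $\{\EE^{\QQ}[\EN_T]:\QQ\in\MM\}$ by Proposition \ref{pro:intdual}, whose infimum is $\LL(\EN)$; meanwhile the $V$-term contributes $o(y)$ by the Inada condition $V_y(t,y)\to 0$. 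The Inada conditions at $0$ and $\infty$ for $U_x$, uniform via $K_1,K_2$, give $\lim_{y\to 0+}v'(y)=-\infty$ and the remaining boundary behaviour, hence item (4); differentiability of $v$ (item (3)) follows because $v$ is convex and the dual minimizer $\hqy$ is unique in its regular part (strict convexity of $V(t,\cdot)$, inherited from strict concavity of $U$), so $v$ has a unique subgradient at each $y>0$. Smoothness and strict monotonicity of $u$ on $(-\LL(\EN),\infty)$, completing items (1)–(2), then follow from convex-analytic duality plus strict concavity of $\fU$.

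Finally, items (6) and (7): given $x>-\LL(\EN)$, pick $y>0$ with $x=-v'(y)$ (possible and unique by items (3)–(4)), take the dual optimizer $\hqy\in\DDk(y)$ with density process $\yqy$, and set $\hcx_t\triangleq\fI(\hqy)_t=I(\omega,t,\yqy_t)$. Using the conjugacy identity \eqref{equ:conj}, $\fU(\hcx)=\EE\int_0^T\bigl(V(t,\yqy_t)+\yqy_t\,I(\omega,t,\yqy_t)\bigr)d\kappa_t=\fV(\hqy)+\scl{\hcx}{\hqy^r}$; the first-order condition from the dual minimization (perturbing $\hqy$ inside $\DDk(y)$) forces $\scl{\hcx}{\hqy}=xy+\scl{e}{\hqy}$, so $\fU(\hcx)=v(y)+xy-... = u(x)$ after the conjugacy, and $\hcx\in\AA(x,\EN)$ by Proposition \ref{pro:charadm2} (the budget inequality at $\hqy$ is tight and the singular part of $\hqy$ only helps). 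Uniqueness $d\kappa$-a.e. is immediate from strict concavity of $U(\omega,t,\cdot)$: two optimizers would have a strictly better average.

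\medskip

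\emph{Main obstacle.} The delicate point is the weak duality-to-strong-duality passage, i.e. proving $u(x)\ge\inf_y(v(y)+xy)$ together with attainment of the dual infimum, in the presence of \emph{unbounded} consumption densities and finitely-additive dual elements. The subtleties are: (i) $\scl{c}{\QQ}$ for $c\in\slzmkp\setminus\linf$ is only defined via the increasing limit \eqref{equ:action}, so one must check the relevant semicontinuity and min-max arguments still apply; (ii) the dual optimizer may genuinely have a nonzero purely-finitely-additive part, and one must show—exactly via Proposition \ref{pro:propfv}—that it is invisible to $\fV$ and can only slacken (never tighten) the primal budget constraint, so that the complementary-slackness/first-order condition linking $\hcx$ to $\hqy$ still produces a genuine primal optimizer; (iii) the minimax interchange needs a compactness+l.s.c. argument on $\DDk(y)$ rather than on $\MMk(y)$, which is why Propositions \ref{pro:mdensd} and \ref{pro:iscomp} were proved. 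I expect the bulk of Appendix A to be devoted precisely to making (i)–(iii) rigorous.
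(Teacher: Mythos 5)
Your proposal follows essentially the same route as the paper's Appendix A: the conjugacy of $u$ and $v$ is obtained by a minimax interchange over the compact set $\DDk(y)$ (the paper invokes Sion's theorem in Lemma \ref{lem:minimax}, using Propositions \ref{pro:charadm2} and \ref{pro:iscomp} exactly as you anticipate), dual attainment comes from weak-* compactness plus lower semicontinuity of $\fVE$ restricted to $\DDk(y)$ (Lemma \ref{lem:dualexists}), the finiteness, differentiability and boundary behaviour of $v$ are consequences of reasonable elasticity via the Kramkov--Schachermayer-type estimates (Lemma \ref{lem:difv}, which the paper imports almost verbatim from \cite{KarZit03}), and the primal optimizer is $\hcx=\fI(\hqy)$ with $x=-v'(y)$, unique by strict concavity, with the singular part of $\hqy$ neutralized exactly through Proposition \ref{pro:propfv} and Lemma \ref{lem:difv}(5) (Lemma \ref{lem:exist}). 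Your ``main obstacle'' paragraph correctly identifies what the appendix actually spends its effort on.

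The one place where you genuinely deviate, and where your sketch as written has a gap, is the identification of $\lim_{y\to\infty}v'(y)$ with the lower hedging price. You propose a direct expansion of $\fVE(y\QQ)$, arguing that the $V$-term is $o(y)$ while $\scl{e}{y\QQ}$ contributes $y\,\EE^{\QQ}[\EN_T]$; but the martingale measures $\QQ$ that nearly attain $\LL(\EN)=\inf_{\QQ\in\MM}\EE^{\QQ}[\EN_T]$ may well satisfy $\fV(y\QQk)=+\infty$ for every $y$, so the ``$o(y)$'' step cannot be applied to them directly --- one must either mix such a $\QQ$ with a dual element of finite $\fV$ (using monotonicity of $\fV$ and letting the mixing weight tend to $0$), or argue as the paper does: Lemma \ref{lem:difv}(4) pins the limit between the lower and upper hedging prices via the derivative formula, and Lemma \ref{lem:expropu} obtains the exact value by contradiction, using the existence of the primal optimizer at initial capital $-x_0$ and its budget constraint from Proposition \ref{pro:charadm}. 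So your alternative is repairable but incomplete as stated, and note that this identification cannot be postponed until after primal existence if you want to use it to define the domain of $u$, whereas the paper's order of quantifiers (primal existence for $x>-\lim_y v'(y)$ first, identification of the limit second) avoids any circularity. A minor slip, harmless to the structure: with a nontrivial endowment $v$ need not be decreasing on $(0,\infty)$ --- indeed the appendix shows $v'(y)$ tends to the lower hedging price, which is nonnegative --- so the monotonicity you ascribe to $v$ should be dropped (only convexity, finiteness and smoothness are used).
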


\subsection{A Closer Look at the Dual Domain} Given that the solution of the
Primal problem can be expressed as a function of the process
$\prf{\yq_t}$ from Proposition \ref{pro:propfv}, it will be useful
to have more information on its probabilistic structure. When
$\QQ\in\MMk$, Proposition \ref{pro:intdual} implies that $\yq$ is
a nonnegative \cd martingale. In general, we can only establish
the supermartingale property for a (large enough) subclass of
($\PPk$-a.s.)-maximal processes in $\sets{\yq}{ \QQ\in\DD(1)}$. In
the contrast with the case studied in \citet{KarZit03}, we cannot
establish any strong trajectory regularity properties such as
right-continuity, and will only have to satisfy ourselves with the
weaker property of optional measurability.

\begin{prop}
\label{pro:super} For $\QQ\in\DD(1)$ there exist an optional
process $\prf{F_t}$, taking values in $[0,1]$, and $\QQ'\in\DD(1)$
such that
\begin{enumerate}

\item $\yq_t=Y^{\QQ'}_t F_t$,

\item The process $\prf{Y^{\QQ'}_t}$ has a $d\kappa$-version which is an
optional supermartingale, and

\item there exists a sequence of
martingale measures $\seq{\QQ}$ such that $Y^{\QQ_n}\to Y^{\QQ'}$,
$d\kappa$-a.e.
\end{enumerate}
\end{prop}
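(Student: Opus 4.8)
The plan is to realize $Y^{\QQ}$ as a pointwise $d\kappa$-limit of a sequence drawn from $\MMk$ and then to pass to an essential-supremum-type maximal element to recover the supermartingale property, finally absorbing the ``loss'' into the multiplicative factor $\prf{F_t}$. First I would use Proposition~\ref{pro:mdensd}: since $\MMk(1)$ is $\sigma(\bam,\slzmk)$-dense in $\DDk(1)$, there is a net $(\QQ_\alpha)$ in $\MMk$ converging weak-* to $\QQ$. The difficulty is that a net does not give pointwise convergence of densities; so the first real step is to extract, by a standard Komlós/convex-combinations argument applied to the nonnegative random variables $\prf{Y^{\QQ_\alpha}_t}$ on the finite-measure space $([0,T]\times\Omega,\OO,\PPk)$, a sequence $\seq{\QQ}$ of \emph{convex combinations} of the $\QQ_\alpha$'s (still in $\MMk$ by convexity of $\MM$) whose densities $Y^{\QQ_n}$ converge $\PPk$-a.e., i.e. $d\kappa$-a.e., to some optional limit; weak-* convergence plus uniform integrability (which follows from $\normm{\cdot}$-boundedness of $\DDk(1)$ via Proposition~\ref{pro:mdensd}) forces this limit to agree with the regular part of $\QQ$, hence with $Y^{\QQ}$ in the sense of Proposition~\ref{pro:propfv}. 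This is item~(3).

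Next I would build the maximal supermartingale $Y^{\QQ'}$. The natural candidate is, for each rational $t$ (then extended), an essential supremum over a suitable directed family of countably-additive elements of $\DDk(1)$ that dominate $Y^{\QQ}$ in the conditional-expectation sense; concretely one can mimic the construction in the proof of Proposition~\ref{pro:charadm} (the $\Esssup_{\QQ\in\MM}\EE_\QQ[\cdot\mid\FF_t]$ device), applied to the claim whose time-$T$ value is $Y^{\QQ}_T/$(its conditional expectation), to obtain a nonnegative process that is a $\QQ''$-supermartingale for every $\QQ''\in\MM$ and dominates $Y^{\QQ}_t$ $d\kappa$-a.e. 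One checks this process is itself of the form $Y^{\QQ'}$ for some $\QQ'\in\DDk(1)$: it is countably additive and absolutely continuous with respect to $\PPk$ by construction, and the supermartingale inequality together with $Y^{\QQ'}_T=Y^{\QQ}_T$-type normalization keeps its total mass $\le 1$, so it lies in the bipolar $\DDk(1)$ by Proposition~\ref{pro:charadm2}. The key point that only \emph{optional} (not \cd) regularity survives is exactly the absence of a right-continuous modification for the clock-weighted conditional expectations when $\kappa$ has general jump structure, so one settles for an optional supermartingale in the sense of the $d\kappa$-version. This gives item~(2).

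Finally, item~(1): since $0\le Y^{\QQ}_t\le Y^{\QQ'}_t$ $d\kappa$-a.e. with $Y^{\QQ'}_t>0$ on the set where $Y^{\QQ}_t>0$ (and where $Y^{\QQ'}_t=0$ we also have $Y^{\QQ}_t=0$), define $F_t\triangleq Y^{\QQ}_t/Y^{\QQ'}_t$ on $\{Y^{\QQ'}_t>0\}$ and $F_t\triangleq 0$ elsewhere; this is optional and $[0,1]$-valued, and $Y^{\QQ}_t=Y^{\QQ'}_tF_t$ holds $d\kappa$-a.e. by construction. I expect the main obstacle to be the middle step: producing a genuinely \emph{countably-additive} maximal dominating element of $\DDk(1)$ whose density is an optional supermartingale, and verifying it stays inside the bipolar $\DDk(1)$ rather than merely in $\bam_+$ — this requires care in combining the essential-supremum construction with the optional-decomposition-style argument of Proposition~\ref{pro:charadm} and with the density characterization of $\DDk(1)$ from Proposition~\ref{pro:charadm2}, while keeping track of which operations preserve countable additivity. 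The extraction of an a.e.-convergent sequence from the weak-* convergent net (rather than from a sequence) is a secondary technical point, handled by the Komlós argument on $\lone(\PPk)$.
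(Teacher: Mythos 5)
There is a genuine gap, and it is in your first step. From the weak-* convergent \emph{net} given by Proposition~\ref{pro:mdensd} you cannot, in general, extract a sequence (or convex combinations of one) in $\MMk$ whose densities converge $\PPk$-a.e.\ to $Y^{\QQ}$: the weak-* topology on $\bam$ is not metrizable and a cluster point of a net need not be the limit of any sequence, and the uniform integrability you invoke is simply not available --- $\normm{\cdot}$-boundedness of $\DDk(1)$ is an $\lone(\PPk)$-type bound, and if the densities of an approximating sequence were uniformly integrable the weak-* limit would be countably additive, i.e.\ $\DDk(1)$ would contain no singular elements, which defeats the purpose of working in $\bam$. Worse, the conclusion you draw is false: in a complete market, where $\MM=\{\QQ_0\}$, the element $\QQ=\tfrac12\QQ_0^{\kappa}\in\DDk(1)$ has $Y^{\QQ}=\tfrac12 Y^{\QQ_0}$, while every sequence of martingale densities is identically $Y^{\QQ_0}$; so $Y^{\QQ_n}$ can only converge to the \emph{maximal} dominating element $Y^{\QQ'}$, never to $Y^{\QQ}$ itself. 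Item (3) of the proposition asserts convergence to $Y^{\QQ'}$, not to $Y^{\QQ}$; your step 1 conflates the two and would make the factor $F$ identically $1$, which cannot be right.

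The missing idea is the bipolar theorem of \citet{BraSch99} applied on $([0,T]\times\Omega,\OO,\PPk)$. Since $\EE\int_0^T c_t\yq_t\,d\kappa_t\le 1$ for every $c\in\AA(1,0)$ and $\AA(1,0)$ is the polar of $\MMk$ (Proposition~\ref{pro:charadm2}), $\yq$ lies in the smallest convex, solid, closed-in-$\PPk$-probability set containing the martingale densities. Solidity and maximality give at once the factorization $\yq_t=Y_tF_t$ with $Y$ maximal, together with an approximation $Y^{\QQ_n}_tF^{(n)}_t\to Y_t$ with $\QQ_n\in\MM$ and $F^{(n)}$ valued in $[0,1]$; a Koml\'os argument applied to the $\lone(\PPk)$-bounded sequence $Y^{\QQ_n}$, Fatou's lemma and the maximality of $Y$ then upgrade this to $Y^{\QQ_n}\to Y$ $\PPk$-a.e.\ (after convex combinations), and the same Fatou argument delivers the optional supermartingale property. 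Your essential-supremum construction does not substitute for this: the $\Esssup_{\QQ\in\MM}\EE_{\QQ}[\,\cdot\mid\FF_t]$ device produces the upper-hedging supermartingale of a single terminal claim, and there is no reason it dominates the process $\yq_t$ at intermediate times $d\kappa$-a.e., no reason the dominating object should come from a \emph{countably additive} $\QQ'$ (in general it does not), and Proposition~\ref{pro:charadm2} does not certify membership in $\DDk(1)$ for such a process. The correct final step is to take $\QQ'$ to be a weak-* cluster point of the sequence $(\QQ_n)$ in the compact set $\DDk(1)$ (Proposition~\ref{pro:iscomp}) and to identify $Y=Y^{\QQ'}$ via Lemma A.1 of \citet{CviSchWan01}, not via uniform integrability. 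Your item (1), dividing $Y^{\QQ}$ by $Y^{\QQ'}$, is fine once the domination $Y^{\QQ}\le Y^{\QQ'}$ has actually been established.
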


\begin{proof} We start by
observing that $\EE[\int_0^T \yq_t c(t)\, d\kappa_t]\leq
\scl{c}{\QQ}\leq 1$, for all $c\in\AA(1,0)$. In other words, $\yq$
is in the $\PPk$-polar set of $\AA(1,0)$, in the terminology of
\citet{BraSch99}. By characterization in Proposition
\ref{pro:charadm2}, $\AA(1,0)$ can be written as the polar of
$\MMk$, and the Bipolar Theorem of \citet{BraSch99} states
that $\yq$ is an element of the smallest convex, solid and closed
(in $\PPk$-probability) set containing $\MMk$. Therefore, there
exists a process $\prf{F_t}$, taking values in $[0,1]$, and an
optional process $\prf{Y_t}$, ($\PPk$-a.s.)-maximal in the bipolar
of $\MMk$, such that $\yq_t=Y_t F_t$. Moreover, the same theorem
implies that there exists a
sequence $\sequ{\QQ}$ in $\MM$, and a sequence $\sequ{F}$ of
optional processes taking values in $[0,1]$, such that $\yqn_t
F^{(n)}_t\to Y_t$, $\PPk$ a.s. The sequence of positive processes
$\yqn$ is bounded in $\lone(\PPk)$ and thus the theorem of
Koml{\'o}s (see \citet{Sch86}) asserts existence of a nonnegative
optional process $\prf{\tilde{Y}_t}$, and a sequence of finite
convex combinations of the elements of the sequence $\sequ{\QQ}$
(still denoted by $\sequ{\QQ}$) such that $\yqn_t\to \tilde{Y}_t$
$\PPk$-a.s. It is now a simple consequence of Fatou's lemma that
$\tilde{Y}$ is an element of the bipolar of $\MMk$ dominating
$Y_t$. Since $Y_t$ is maximal, we conclude that
$\tilde{Y}_t=Y_t$ $\PPk$-a.s.  The supermartingale property of
$\prf{Y}$ follows from Fatou's lemma applied to the sequence
$\{\prf{\yqn_t}\}_{n\in\N}$.

We are left now with the task of producing $\QQ'\in\DD(1)$, such
that $Y_t=Y^{\QQ'}_t$. In order to do that, take $\QQ'$ to be any
cluster point of the sequence $\sequ{\QQ}$ in $\DD(1)$ in the
$\sigma(\bam,\slzmk)$-topology. Existence of such a $\QQ'$ is
guaranteed by Proposition \ref{pro:iscomp}. Finally, it is a
consequence of (\citet{CviSchWan01}. Lemma A.1, p. 16) that
$Y_t=Y^{\QQ'}_t$-$\PPk$-a.s.
\end{proof}

\section{An Example} \label{sec:exam}
In order to illustrate the theory developed so far, in this section we present
an example of a utility-maximization problem with a random clock given by the local time at 0 of an
Ornstein-Uhlenbeck process.
\subsection{Description of the Market Model}
Let $(B_t,W_t)_{t\in [0,\infty)}$ be two correlated Brownian
motions defined on a probability space $(\Omega, \FF,\PP)$, and
let $\prfi{\FF_t}$ be the filtration they generate, augmented by
the $\PP$-null sets in order to satisfy the usual conditions. We
assume that the correlation coefficient $\rho\in (-1,1)$ is fixed
so that $d[B,W]_t=\rho\, dt$.

The financial market will consist of one riskless asset
$S^0_t\equiv 1$, and a risky asset $\prfi{S_t}$ which satisfies
\begin{equation}
    \nonumber
    \begin{split}
     dS_t=S_t\Big( \mu\,dt + \sigma\, dB_t\Big),\ S_0=s_0,
    \end{split}
\end{equation}
where $\mu\in\R$ is the {\bf stock appreciation rate} and
$\sigma>0$ is the {\bf volatility}.

Apart from the tradeable asset $\prfi{S_t}$, there is an
Orstein-Uhlenbeck process $\prfi{R_t}$ defined as the unique
strong solution of
\begin{equation}
    \nonumber
    \begin{split}
     dR_t=-\alpha R_t\, dt+\,dW_t,\ R_0=0.
    \end{split}
\end{equation}
We call $\prfi{R_t}$ the {\bf index process}, and interpret it  as
the process modelling a certain state-variable of the economy,
possibly related to the political stability, or some aspect of the
goverment's economic policy. The index process is non-tradable and
its role is to impose constraints on the consumption: we are
allowed to withdraw money from the trading account only when
$\abs{R_t}<\eps$. An agent with an initial endowment $x$ and a
utility random field $U(\cdot,\cdot,\cdot)$ will then naturally
try to choose a strategy so as to maximize the utility of
consumption of the form
\begin{equation}
    \label{equ:exput}
    \begin{split}
     \EE \int_0^{\tau} U(\omega, t, c(t)) \inds{\abs{R_t}<\eps}\,
     dt,
    \end{split}
\end{equation}
on some trading horizon $[0,\tau]$. If we introduce the notation
$\kappa^{\eps}_t=\frac{1}{\eps} \int_0^t \inds{\abs{R_t}<\eps}\,
dt $, the expression in (\ref{equ:exput}) becomes (up to a
multiplicative constant)
\begin{equation}
    \label{equ:exput2}
    \begin{split}
      \EE \int_0^{\tau} U(\omega, t, c(t)) d\kappa^{\eps}_t.
    \end{split}
\end{equation}
Assuming that $\eps$ is a small constant, the process
$\kappa^{\eps}$ can be approximated by the local time $\kappa_t$
of the process $R_t$. We define the time horizon $\tau=\tau_1$,
where $\tau_s\triangleq \inf\sets{t>0}{\kappa_t>s}$ is the {\bf
inverse local time process}. In this way our agent will get
exactly one unit of {\bf consumption time} (as measured by the
clock $\kappa$) from the start to the end of the trading interval.
It will, therefore, be our goal to solve the following problem,
defined in terms of its value function $u(\cdot)$:
\begin{equation}
    \label{equ:probi}
    \begin{split}
     u(x)=\sup_{c\in\AA(x,0)} \EE\int_0^{\tau_1}U(\omega, t,c_t)\, d\kappa_t,\ x>0.
    \end{split}
\end{equation}

\subsection{Absence of Arbitrage} The time-horizon $\tau$ defined
above is clearly not a bounded random variable, so the
results in the main body of this paper do not  apply directly.
However, in order to pass from an infinite to a finite horizon, it
is enough to apply a deterministic time-change that maps
$[0,\infty)$ onto $[0,1)$ and note that no important part of the
structure of the problem is lost in this way (we leave the easy
details of the argument to the reader). Of course, we need to show
that all the assumptions of Theorem \ref{thm:aer} are satisfied.
The validity of Assumption \ref{ass:finite} will have to be
checked on a case-by-case basis (see Remark \ref{rem:asslog},
for the case of $\log$-utility). Therefore, we are
left with Assumption \ref{ass:NFLVR}. In order to proceed we need
to exhibit a countably-additive probability measure $\QQ$
equivalent to $\PP$ such that the asset-price process $\prfi{S_t}$
is a $\QQ$-local martingale on the stochastic interval
$[0,\tau_1]$. The obvious candidate will be the measure $\QQ_0$
defined in terms of its Radon-Nikodym derivative with respect to
$\PP$, by
\begin{equation}
    \label{equ:cands}
    \begin{split}
     \frac{d\QQ_0}{d\PP}=Z^0_{\tau_1},\ \text{where}\
     Z^0_{\tau_1}\triangleq \exp(-\theta
     B_{\tau_1}-\frac{1}{2}\theta^2 \tau_1),
    \end{split}
\end{equation}
and $\theta=\mu/\sigma$ is the {\bf market price of risk}
coefficient. Once we show that $\EE[Z^0_{\tau_1}]=1$, it will
follow directly from Girsanov's theorem (see \citet{KarShr91},
Theorem 3.5.1, page 191.) that $\prfi{S}$ is a $\QQ$-local
martingale on $[0,\tau_1]$. The equivalence of the measures
$\QQ_0$ and $\PP$ is a consequence of the fact that $\tau_1<\infty$
a.s, which follows from the following proposition which lists some distributional properties
of the process $\prfi{R_t}$ and its local time $\prfi{\kappa_t}$.
\begin{prop}\ \label{pro:local} For $\xi<0$ and $x\geq 0$, let $H_{\xi}(x)$ denote the value of
the Hermite function
\begin{equation}
    \label{equ:inthermite}
    \begin{split}
         H_{\xi}( x)=\frac{1}{2\Gamma(-\xi)} \int_0^{\infty} e^{-s-2x\sqrt{s}} s^{-\frac{1}{2}\xi-1}\,
         ds.
    \end{split}
\end{equation}
For the Ornstein-Uhlenbeck process $\prfi{R_t}$ and the inverse $(\tau_s)_{s\in [0,\infty)}$ of
its local time at $0$ $\prfi{\kappa_t}$, we have the following explicit expressions:
\begin{enumerate}
\item
\begin{equation}
    \label{equ:laplace1}
    \begin{split}
 \EE[\exp(-\lambda \tau_s)|R_0=0]=\begin{cases} \exp(-s \psi(\lambda)), & \lambda > - \alpha \\
\infty, & \lambda \leq -\alpha \end{cases},
    \end{split}
\end{equation}
where the {\em Laplace exponent} $\psi(\lambda)$ is given by
\begin{equation}
    \label{equ:laplace2}
    \begin{split}
     \psi(\ld)=\alpha \frac{2^
     {1+
       \frac{\lambda }{\alpha }}
     \Gamma(
       \frac{1}{2}+
          \frac{\lambda }{2\alpha }
          )^2}{{\sqrt{2\pi
        }}\Gamma(
     \frac{\lambda }{\alpha })}.
    \end{split}
\end{equation}
\item With $T_0=\inf\sets{t>0}{R_t=0}$ we have,
\begin{equation}
    \label{equ:hit}
    \begin{split}
     \EE[\exp(-\ld T_0)|R_0=r]=j(\ld,\abs{r}),
    \end{split}
\end{equation}
where
\begin{equation}
    \nonumber
    \begin{split}
     j(\ld,r)\triangleq 2^{\frac{\ld}{\alpha}}
     \frac{\Gamma(\frac{1+\frac{\ld}{\alpha}}{2})}
     {\Gamma(\frac{1}{2})} H_{-\frac{\ld}{\alpha}}
     \Big(\frac{r}{\sqrt{2}}\Big).
    \end{split}
\end{equation}

\end{enumerate}
\end{prop}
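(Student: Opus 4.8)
The plan is to obtain both formulas by classical one-dimensional diffusion theory applied to the Ornstein--Uhlenbeck process $\prfi{R_t}$, whose generator is $\mathcal{L}=\tfrac{1}{2}\tfrac{d^2}{dx^2}-\alpha x\tfrac{d}{dx}$. First I would record the scale and speed data of $\mathcal{L}$ (scale density $e^{\alpha x^2}$, speed measure $2e^{-\alpha x^2}\,dx$) and identify the increasing and decreasing positive $\lambda$-eigenfunctions $\phi_\lambda,\psi_\lambda$ solving $\mathcal{L}u=\lambda u$. A linear rescaling of the spatial variable turns $\mathcal{L}u=\lambda u$ into the Hermite differential equation of index $-\lambda/\alpha$, so up to a multiplicative constant $\psi_\lambda$ is $H_{-\lambda/\alpha}$ of the rescaled variable and, by the $x\mapsto-x$ symmetry of the OU law, $\phi_\lambda(x)=\psi_\lambda(-x)$. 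One checks, by differentiating under the integral sign, that the function $H_\xi$ of \eqref{equ:inthermite} indeed solves the Hermite equation, is positive and decreasing, and vanishes at $+\infty$, and one reads off from \eqref{equ:inthermite} together with $\int_0^\infty e^{-s}s^{a-1}\,ds=\Gamma(a)$ the two values $H_{-\lambda/\alpha}(0)=\Gamma(\tfrac{\lambda}{2\alpha})/\bigl(2\Gamma(\tfrac{\lambda}{\alpha})\bigr)$ and $H_{-\lambda/\alpha}'(0)=-\Gamma(\tfrac12+\tfrac{\lambda}{2\alpha})/\Gamma(\tfrac{\lambda}{\alpha})$ that feed the Gamma-function bookkeeping. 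Since that integral representation converges only for negative index, I would carry out the computations for $\lambda>0$ and extend to $\lambda>-\alpha$ by analyticity in $\lambda$.

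For part (2), apply optional sampling to the local martingale $e^{-\lambda t}\psi_\lambda(R_t)$ stopped at $T_0$: since $\psi_\lambda$ is positive, bounded by $\psi_\lambda(0)$ on $[0,\infty)$, and $0$ is a.s.\ reached (recurrence), this yields $\EE[\,e^{-\lambda T_0}\mid R_0=r\,]=\psi_\lambda(r)/\psi_\lambda(0)$ for $r>0$, and the $x\mapsto-x$ symmetry reduces a general $r$ to $|r|$. Inserting $\psi_\lambda(0)=H_{-\lambda/\alpha}(0)$ and collapsing the prefactor $2\Gamma(\tfrac{\lambda}{\alpha})/\Gamma(\tfrac{\lambda}{2\alpha})$ via Legendre's duplication formula $\Gamma(z)\Gamma(z+\tfrac12)=2^{1-2z}\sqrt{\pi}\,\Gamma(2z)$ at $z=\tfrac{\lambda}{2\alpha}$ produces exactly $2^{\lambda/\alpha}\Gamma(\tfrac{1+\lambda/\alpha}{2})/\Gamma(\tfrac12)$, the constant appearing in $j(\lambda,|r|)$.

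For part (1), since $0$ is a regular recurrent point, additivity of local time and the strong Markov property make the inverse local time $(\tau_s)_{s\ge0}$ a conservative subordinator under $\PP(\,\cdot\mid R_0=0)$, so $\EE[\,e^{-\lambda\tau_s}\mid R_0=0\,]=e^{-s\psi(\lambda)}$ with Laplace exponent $\psi(\lambda)=1/u_\lambda(0,0)$, where $u_\lambda$ is the $\lambda$-resolvent density of $R$ relative to the speed measure, $u_\lambda(x,y)=w_\lambda^{-1}\phi_\lambda(x\wedge y)\psi_\lambda(x\vee y)$ with $w_\lambda=\bigl(\phi_\lambda'\psi_\lambda-\phi_\lambda\psi_\lambda'\bigr)/e^{\alpha x^2}$ constant in $x$. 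At $x=0$ the symmetry $\phi_\lambda(x)=\psi_\lambda(-x)$ collapses everything to $H_{-\lambda/\alpha}(0)$ and $H_{-\lambda/\alpha}'(0)$, so $\psi(\lambda)$ is a constant multiple of $\sqrt{\alpha}\,\Gamma(\tfrac12+\tfrac{\lambda}{2\alpha})/\Gamma(\tfrac{\lambda}{2\alpha})$, and another application of the duplication formula brings it to the closed form \eqref{equ:laplace2}. For the dichotomy at $\lambda=-\alpha$ I would argue by excursion theory: the Dirichlet spectrum of $-\mathcal{L}$ on $\R\setminus\{0\}$ consists of the odd positive multiples of $\alpha$ (the antisymmetric Hermite eigenfunctions vanishing at $0$), whence the It\^o excursion measure of $R$ away from $0$ integrates $e^{\mu\zeta}$ against the excursion length $\zeta$ precisely for $\mu<\alpha$; the L\'evy measure of $(\tau_s)$ being the image of that excursion measure, the abscissa of convergence of $\lambda\mapsto\EE^0[e^{-\lambda\tau_s}]$ is exactly $-\alpha$, giving the value $+\infty$ for $\lambda\le-\alpha$ and, consistently, $\EE^0[e^{-\lambda\tau_s}]=e^{-s\psi(\lambda)}\to+\infty$ as $\lambda\downarrow-\alpha$ through the pole of $\Gamma(\tfrac{\lambda}{\alpha})$ at $\lambda=-\alpha$.

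The main obstacle I anticipate is bookkeeping rather than anything structural: fixing the normalization of the local time $\kappa$ so that it is the one dual to the speed measure $2e^{-\alpha x^2}\,dx$ entering the resolvent-density formula, and reconciling it with the semimartingale local time implicit in the definition of $\kappa$ — this is precisely where any stray multiplicative constant in $\psi$ (and correspondingly in $j$) enters and has to be tracked. Secondarily, the analytic prerequisites need care: that \eqref{equ:inthermite} solves the Hermite equation with the right behaviour at $+\infty$ and is the minimal positive $\lambda$-eigenfunction there, that the optional-sampling and subordinator arguments are legitimate (uniform integrability from boundedness of $\psi_\lambda$ together with recurrence of $0$), and that the analytic continuation from $\lambda>0$ to $\lambda>-\alpha$ is valid termwise. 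Once the constants are pinned down, both identities follow from the two applications of Legendre's duplication formula.
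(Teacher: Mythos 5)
Your proposal is correct in outline, but it takes a genuinely different route from the paper: the paper does not derive Proposition \ref{pro:local} at all — its proof is a citation to the tables of Borodin and Salminen (formula 2.0.1, p.~542 for the inverse local time and 4.0.1, p.~557 for the hitting time) together with the conversion identity $D_{\zeta}(x)=2^{-\zeta/2}e^{-x^2/4}H_{\zeta}(x/\sqrt{2})$ between parabolic cylinder and Hermite functions. What you sketch is essentially the classical It\^o--McKean derivation that underlies those table entries: the monotone $\lambda$-eigenfunctions of $\tfrac12 u''-\alpha x u'$ are Hermite functions, optional sampling of $e^{-\lambda t}\psi_\lambda(R_t)$ gives the hitting-time transform, and $\psi(\lambda)=1/G_\lambda(0,0)$ (Green function at $0$ with respect to the speed measure) gives the Laplace exponent of the inverse local time; your Gamma-function bookkeeping at $0$, including the values $H_{-\lambda/\alpha}(0)$ and $H_{-\lambda/\alpha}'(0)$ and the duplication step producing $2^{\lambda/\alpha}\Gamma(\tfrac{1+\lambda/\alpha}{2})/\Gamma(\tfrac12)$, is correct, and the dichotomy at $\lambda=-\alpha$ can even be had more cheaply than by excursion theory, by monotone convergence in $\lambda$ once one observes that $\psi(\lambda)\to-\infty$ as $\lambda\downarrow-\alpha$. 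The trade-off is clear: the citation is two lines but hides all normalization conventions in the handbook, while your derivation is self-contained and makes every constant auditable — and those constants are exactly where your sketch still has work to do. Two points must be pinned down before your argument actually yields the displayed formulas. First, the spatial scaling: solving $\tfrac12u''-\alpha xu'=\lambda u$ gives the decreasing eigenfunction $H_{-\lambda/\alpha}(x\sqrt{\alpha})$, so your optional-sampling argument as written produces $\EE[e^{-\lambda T_0}\mid R_0=r]=H_{-\lambda/\alpha}(|r|\sqrt{\alpha})/H_{-\lambda/\alpha}(0)$, and you must reconcile this with the stated argument $|r|/\sqrt{2}$ in (\ref{equ:hit}) (which in the paper comes from applying the $D_\zeta\mapsto H_\zeta$ identity to the Borodin--Salminen entry); note that this discrepancy cannot be absorbed into the local-time normalization, since hitting times do not involve $\kappa$ at all. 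Second, as you yourself flag, $\psi$ in (\ref{equ:laplace2}) is only determined once the normalization of $\kappa$ is fixed as the one used in Section \ref{sec:exam} (the limit of $\tfrac1\eps\int_0^t\inds{|R_u|<\eps}\,du$), and that constant has to be carried explicitly through $1/G_\lambda(0,0)$; until both constants are tracked, your route proves the two identities only up to the scaling of the Hermite argument and a multiplicative constant in the Laplace exponent.
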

\begin{proof} See \citet{BorSal02}, equation (2.0.1), page 542, for (1), and
\citet{BorSal02}, equation (4.0.1), page 557 for (2). Use the identity
$D_{\zeta}(x)=2^{-\zeta/2}e^{-x^2/4}H_{\zeta}(x/\sqrt{2})$.
\end{proof}

To prove the equality $\EE[Z^0_{\tau_1}]=1$, it will be enough to
show that $\EE[\exp(\frac{1}{2} \theta^2 \tau_1)]<\infty$ by the Novikov's
criterion (\citet{KarShr91}, Proposition 3.5.12., page 198.)
Part (1) of Proposition \ref{pro:local} implies that for $\alpha> \theta^2/2$, we have
$\EE[\exp(\frac{1}{2} \theta^2 \tau_1)]<\infty$, which proves the following proposition:
\begin{prop} When $\alpha>\theta^2/2$, there is no arbitrage on the stochastic interval $[0,\tau_1]$.
\end{prop}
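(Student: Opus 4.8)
The plan is to verify the hypotheses of the Novikov criterion for the exponential local martingale $Z^0$ defined in \eqref{equ:cands}, namely that $\EE[\exp(\tfrac12\theta^2\tau_1)]<\infty$. First I would observe that $Z^0_t = \exp(-\theta B_t - \tfrac12\theta^2 t)$ is a continuous local martingale on $[0,\infty)$, and since $\tau_1$ is an $\prfi{\FF_t}$-stopping time with $\tau_1<\infty$ a.s.\ (by Proposition \ref{pro:local}(1), since for $\alpha>\theta^2/2>0$ the Laplace exponent $\psi(\lambda)$ is finite for $\lambda>-\alpha$, in particular $\EE[\exp(-\lambda\tau_1)]\to 1$ as $\lambda\to0+$, forcing $\tau_1<\infty$ a.s.), the stopped process $Z^0_{t\wedge\tau_1}$ is the relevant object. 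To conclude that $\EE[Z^0_{\tau_1}]=1$ — equivalently, that $Z^0_{\cdot\wedge\tau_1}$ is a uniformly integrable martingale — it suffices by Novikov's criterion (\citet{KarShr91}, Proposition 3.5.12) to show $\EE\big[\exp\big(\tfrac12\int_0^{\tau_1}\theta^2\,dt\big)\big]=\EE[\exp(\tfrac12\theta^2\tau_1)]<\infty$.

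The heart of the argument is then the finiteness of this moment generating function of $\tau_1$ evaluated at $\lambda=\theta^2/2$. By Proposition \ref{pro:local}(1) applied with $s=1$, we have $\EE[\exp(-\lambda\tau_1)]=\exp(-\psi(\lambda))<\infty$ precisely for $\lambda>-\alpha$. Taking $\lambda=-\theta^2/2$, which satisfies $-\theta^2/2>-\alpha$ exactly under the hypothesis $\alpha>\theta^2/2$, gives $\EE[\exp(\tfrac12\theta^2\tau_1)]=\exp(-\psi(-\theta^2/2))<\infty$. Feeding this into Novikov yields $\EE[Z^0_{\tau_1}]=1$, so $\QQ_0$ defined by $d\QQ_0/d\PP=Z^0_{\tau_1}$ is a genuine probability measure, and since $Z^0_{\tau_1}>0$ a.s.\ it is equivalent to $\PP$. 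By Girsanov's theorem (\citet{KarShr91}, Theorem 3.5.1) the process $B_t+\theta t$ is a $\QQ_0$-Brownian motion on $[0,\tau_1]$, and substituting into the dynamics of $S$ shows $dS_t=\sigma S_t\,d(B_t+\theta t)$ has no drift, so $\prfi{S_t}$ is a $\QQ_0$-local martingale on $[0,\tau_1]$; hence Assumption \ref{ass:NFLVR} holds on the stochastic interval $[0,\tau_1]$ and there is no arbitrage.

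I do not anticipate a serious obstacle here: the only genuine input is the explicit Laplace transform of the inverse local time of the Ornstein--Uhlenbeck process, which is already recorded in Proposition \ref{pro:local}, and the condition $\alpha>\theta^2/2$ is manufactured precisely so that the abscissa of convergence $-\alpha$ of that Laplace transform lies strictly below the point $-\theta^2/2$ at which we need to evaluate it. The one point requiring a modicum of care is the passage to the finite-horizon setting: strictly speaking Theorem \ref{thm:aer} is stated for a finite deterministic horizon $[0,T]$, whereas $\tau_1$ is an unbounded random time, so one applies a deterministic time-change mapping $[0,\infty)$ onto $[0,1)$ as indicated in the preceding paragraph of the text; since this time-change is deterministic it preserves the (local) martingale property and the equivalence of measures, so the no-arbitrage conclusion transfers without difficulty.
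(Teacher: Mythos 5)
Your proposal is correct and follows essentially the same route as the paper: apply the explicit Laplace transform of $\tau_1$ from Proposition \ref{pro:local}(1) at $\lambda=-\theta^2/2>-\alpha$ to get $\EE[\exp(\tfrac12\theta^2\tau_1)]<\infty$, invoke Novikov's criterion to conclude $\EE[Z^0_{\tau_1}]=1$, and then use Girsanov's theorem to exhibit $\QQ_0\sim\PP$ under which $S$ is a local martingale on $[0,\tau_1]$. The additional remarks on $\tau_1<\infty$ a.s.\ and the deterministic time-change to a finite horizon match the paper's surrounding discussion.
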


\subsection{The Optimal Consumption and Portfolio Choice} It has been shown in \citet{KarZit03}
that the maximal dual processes in
the context of the financial markets driven by It\^ o processes with bounded coefficients are in fact
local martingales, and their structure has been described. This result can be extended to our case as
follows.
\begin{thm} \label{thm:ito} Let the utility random field $U$ satisfy Assumptions \ref{ass:delta} and \ref{ass:finite}.
 Then, for $x>0$, there exists a predictable process $\prfi{\nux_t}$, such that the  $\PPk$-a.e.
unique solution $\prfi{\hcx_t}$ of the problem posed in (\ref{equ:probi}) is given by
$\hcx_t(\omega)=I(\omega, t,Z^{\nux}_t(\omega))$.
The process $\prfi{Z^{\nux}_t}$ is a local martingale satisfying \begin{equation}
    \label{equ:locnu}
    \begin{split}
     dZ^{\nux}_t=Z^{\nux}_t \Big( \nux_t\, dW_t - (\theta+\rho \nux_t)\, dB_t \Big),\qquad Z^{\nux}_0=y,
    \end{split}
\end{equation}
where $y>0$ is the unique solution of $-v'(y)=x$. The portfolio process $\prfi{\Hx_t}$ financing $\prfi{\hcx}$
and the process $\prfi{\nux_t}$
are given by
\begin{equation}
    \label{equ:pi}
    \begin{split}
     \Hx_t=\frac{X_t}{\sigma S_t} (\theta+\rho \nux_t)+\frac{\psi^{B}_t}{\sigma S_t Z^{\nux}_t},
     \quad \nux_t=\frac{1}{X_t Z^{\nux}_t}\psi^{W}_t,
    \end{split}
\end{equation}
where $\prfi{X_t}$ is the wealth process corresponding to $\prfi{\Hx_t}$ and $\prfi{\hcx_t}$, given by
\begin{equation}
    \label{equ:we}
    \begin{split}
dX_t=\Hx_t\, dS_t-\hcx_t\, d\kappa_t,\quad X_0=x,
    \end{split}
\end{equation}
and $\prfi{\psi^B}$ and $\prfi{\psi^W}$ are predictable processes such that
\begin{equation}
    \label{equ:psis}
    \begin{split}
xy+\int_0^{\tau_1} \psi^B_t\, dB_t+\int_0^{\tau_1} \psi^W_t\, dW_t=\int_0^{\tau_1} Z^{\nux}_t \hcx_t\, d\kappa_t.
    \end{split}
\end{equation}

\end{thm}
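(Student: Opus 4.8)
The plan is to derive the statement from Theorem~\ref{thm:aer} together with Propositions~\ref{pro:propfv} and~\ref{pro:super}, specialised to the Brownian filtration, and then to read off the portfolio formulas by comparing It\^o decompositions term by term. First I work after the deterministic time-change mapping $[0,\infty)$ onto $[0,1)$, so that $\tau_1$ plays the role of the finite horizon; the absence of arbitrage established above supplies Assumption~\ref{ass:NFLVR}, while Assumptions~\ref{ass:delta} and~\ref{ass:finite} are hypotheses of the present theorem, so all conclusions of Theorem~\ref{thm:aer} hold with $\EN\equiv 0$, hence $\LL(\EN)=0$. In particular, for $x>0$ there is a unique $y>0$ with $-v'(y)=x$ (Theorem~\ref{thm:aer}(3),(4),(7)), the dual problem admits a minimiser $\hqy\in\DDk(y)$, and the $\PPk$-a.e.\ unique primal optimiser is $\hcx_t=\fI(\hqy)_t=I(\omega,t,\yqy_t)$, where $\yqy$ is the regular-part density of $\hqy$ furnished by Proposition~\ref{pro:propfv}. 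It then suffices to produce a predictable $\nux$ such that $Z^{\nux}$ solves~(\ref{equ:locnu}) with $Z^{\nux}_0=y$, is a local martingale, and satisfies $Z^{\nux}_t=\yqy_t$ for $d\kappa$-almost every $t$; the identity $\hcx_t=I(\omega,t,Z^{\nux}_t)$ then follows immediately.

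To produce $\nux$ I apply Proposition~\ref{pro:super} to $\tfrac{1}{y}\hqy\in\DDk(1)$: there are an optional $(F_t)$ with values in $[0,1]$, a $\QQ'\in\DDk(1)$ whose $d\kappa$-version $Y^{\QQ'}$ is an optional supermartingale and a $d\kappa$-a.e.\ limit of martingale densities $\yqn$ with $\qn\in\MM$, and $\yqy_t=yF_tY^{\QQ'}_t$. If $F_t<1$ on a set of positive $d\kappa\otimes\PP$-measure, then, since $\EN\equiv 0$, $V(t,\cdot)$ is finite and strictly decreasing, and $yY^{\QQ'}$ is the regular-part density of $y\QQ'\in\DDk(y)$, one would get $\fVE(y\QQ')<\fVE(\hqy)=v(y)$, contradicting optimality of $\hqy$; hence $F\equiv 1$ and $\yqy_t=yY^{\QQ'}_t$ $d\kappa$-a.e. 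Writing $W=\rho B+\sqrt{1-\rho^2}\,B^{\perp}$ with $B^{\perp}$ a Brownian motion independent of $B$, Girsanov's theorem and the martingale representation theorem present each $\yqn$ as the stochastic exponential of $-\theta B+\int_0^{\cdot}\phi^{n}_s\,dB^{\perp}_s$ (the $B$-integrand being forced to be $-\theta$ by the $\QQ$-local-martingale property of $S$); substituting $dB^{\perp}=(dW-\rho\,dB)/\sqrt{1-\rho^2}$ and $\nu_n=\phi^{n}/\sqrt{1-\rho^2}$ puts $\yqn$ into the form~(\ref{equ:locnu}) with parameter $\nu_n$ and initial value $1$. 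Following the It\^o-market argument of \citet{KarZit03} --- a Koml\'os-type (see \citet{Sch86}) extraction on the integrands $(\nu_n)$, using $L^1(\PPk)$-boundedness of $(\yqn)$ --- one obtains a predictable $\nux$ such that $Z^{\nux}$ solves~(\ref{equ:locnu}) with $Z^{\nux}_0=y$ and $Z^{\nux}_t=yY^{\QQ'}_t=\yqy_t$ $d\kappa$-a.e.; since every solution of the linear equation~(\ref{equ:locnu}) is a stochastic exponential, $Z^{\nux}$ is automatically a nonnegative local martingale. This yields $\hcx_t=I(\omega,t,Z^{\nux}_t)$ $d\kappa$-a.e.

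For the portfolio, let $(\Hx_t)$ be an acceptable process financing $\hcx$, with associated wealth process $(X_t)$ solving~(\ref{equ:we}), $X_0=x$; by optimality (no endowment and $\kappa$ continuous at $\tau_1$, so terminal wealth is never wasted) $X_{\tau_1}=0$ a.s. Applying It\^o's formula to $Z^{\nux}X$ with $dS_t=S_t(\mu\,dt+\sigma\,dB_t)$, equation~(\ref{equ:locnu}), $\mu=\theta\sigma$, and $d[W,B]_t=\rho\,dt$, the cross-variation term $d[Z^{\nux},X]_t=-\theta\sigma Z^{\nux}_t\Hx_tS_t\,dt$ exactly cancels the $\mu$-drift of $X$, so that
\[ d\big(Z^{\nux}_tX_t\big)=-Z^{\nux}_t\hcx_t\,d\kappa_t+\big(\sigma S_t\Hx_t-(\theta+\rho\nux_t)X_t\big)Z^{\nux}_t\,dB_t+\nux_tX_tZ^{\nux}_t\,dW_t. \]
Thus $Z^{\nux}X+\int_0^{\cdot}Z^{\nux}_t\hcx_t\,d\kappa_t$ is a local martingale started at $xy$, and the martingale representation theorem on $[0,\tau_1]$ produces predictable $\psi^B,\psi^W$ with $xy+\int_0^{\tau_1}\psi^B_t\,dB_t+\int_0^{\tau_1}\psi^W_t\,dW_t=\int_0^{\tau_1}Z^{\nux}_t\hcx_t\,d\kappa_t$ (using $X_{\tau_1}=0$), which is~(\ref{equ:psis}). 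Identifying the $dB$- and $dW$-integrands gives $\psi^W_t=\nux_tX_tZ^{\nux}_t$ and $\psi^B_t=\sigma S_t\Hx_tZ^{\nux}_t-(\theta+\rho\nux_t)X_tZ^{\nux}_t$, which rearrange to the two identities in~(\ref{equ:pi}); in particular $\nux$ is predictable.

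The main obstacle is the middle paragraph. Proposition~\ref{pro:super} only yields a supermartingale that is a $d\kappa$-limit of martingale densities, and upgrading this to an explicit local-martingale representative of the prescribed form~(\ref{equ:locnu}) requires the compactness/Koml\'os argument on the stochastic logarithms $(\nu_n)$ and a careful transcription of the bounded-coefficient analysis of \citet{KarZit03} to the present setting, where the horizon $\tau_1$ is unbounded (handled by the time-change) and the clock $d\kappa$, being the local time of the Ornstein--Uhlenbeck process, is singular with respect to Lebesgue measure. The remaining steps --- the no-loss-of-mass argument $F\equiv 1$, the It\^o computation above, and the invocation of the martingale representation theorem --- are routine duality bookkeeping and stochastic calculus.
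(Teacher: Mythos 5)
Your argument is correct and follows essentially the same route as the paper: Theorem \ref{thm:aer} for existence, uniqueness and the dual characterization $\hcx=\fI(\hqy)$; Proposition \ref{pro:super} together with the It\^o-market machinery of \citet{KarZit03} (and the filtered bipolar theorem) to upgrade the maximal dual density to the local-martingale form (\ref{equ:locnu}) --- the same delegation of the "tedious but straightforward" step that the paper makes; and then It\^o's formula applied to $Z^{\nux}_tX_t+\int_0^t Z^{\nux}_u\hcx_u\,d\kappa_u$, martingale representation on the Brownian filtration, and coefficient matching to obtain (\ref{equ:pi}) and (\ref{equ:psis}). The only point to tighten is your informal justification of $X_{\tau_1}=0$ ("terminal wealth is never wasted"): the paper derives this from the budget-constraint saturation of Lemma \ref{lem:difv}(2), namely $\scl{\fI(\hqy)}{(\hqy)^r}=-yv'(y)=xy$ combined with the nonnegative supermartingale property of the process above, an argument your own computation ($M_0=xy$) already has all the ingredients for.
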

\begin{proof}
By Theorem \ref{thm:aer}, there exists a $\PPk$-a.e. unique
optimal consumption density $\hcx\in \AA(x,0)$ given by $
\hcx_t=I(t,\yq_t)$, for some $\QQ\in\DDk(y)$. Since $\prfi{\yq_t}$
solves the dual optimization problem, and is therefore $\PPk$-a.e.
maximal, Proposition \ref{pro:super} states that there exists a
sequence $\qnn$ in $\MM$  such that $\yqn\to\yq$ $\PPk$ a.s. By
taking a further sequence of convex combinations which exists
thanks to Koml\' os's Theorem (see \cite{Kom67}, \cite{Sch86}), we
can assume that $\yqn_T\to\yq_T$, $\PP$-a.s. and
$\yqn_t\to\yqn_t$, $\PP\times\ld$-a.e. Without going into tedious
but straightforward details, we note that it is the consequence of
continuity of local martingales on Brownian filtrations, the
Filtered Bipolar Theorem (\citet{Zit02}, Theorem 2), and Lemma
2.5, Theorem 2.10 and Proposition 4.1 in \citet{KarZit03}, that
$\prfi{\yq_t}$ possesses a $\PPk$-version of the form $\yq_t=y
Z^{\nu}_t$, where $Z^{\nu}$ is a  local martingale of the form
(\ref{equ:locnu}).

Knowing that $\hcx\in\AA(x,0)$, there exists a portfolio process
$\prfi{\Hx_t}$ such that the wealth process $\prfi{X_t}$ given by
(\ref{equ:we}) satisfies $X_{\tau_1}\geq 0$. The saturation of the
budget constraint (see Lemma \ref{lem:difv}, (2)) forces
$X_{\tau_1}=0$. It\^ o's Lemma shows that the process
\begin{equation}
    \label{equ:em}
    \begin{split}
     M_t=X_tZ^{\nu}_t+\int_0^t Z^{\nu}_u \hcx_u\, d\kappa_u
    \end{split}
\end{equation}
is a non-negative local martingale with $M_{\tau_1}=\int_0^{\tau_1} Z^{\nu}_u \hcx_u\, d\kappa_u$.
By Lemma \ref{lem:difv} (2), we have $\EE[M_{\tau_1}]=x=M_0$.  Therefore, $M$ is a martingale on $[0,\tau_1]$.
The second equality in (\ref{equ:pi}) follows by applying It\^ o's formula to (\ref{equ:em}), and
equating coefficients with the ones in the expansion (\ref{equ:psis}).
\end{proof}

\subsection{The Case of Logarithmic Utility}
In order to get explicit results, we consider now the agent whose utility function has the form $U(\omega, t, x)=\exp(-\beta t)\log(x)$, where the {\bf impatience rate} $\beta$ is a positive constant.
The expressions (\ref{equ:pi}) will prove indispensable because it is possible to get an explicit expression for the
processes $\prfi{\psi^W_t}$ and $\prfi{\psi^B_t}$ from (\ref{equ:psis}). The key feature of the logarithmic utility that will allow us
to do this is the fact that the inverse marginal utility function $I$ is given by $I(t,y)=\exp(-\beta t)/y$, so that
the right-hand side of (\ref{equ:psis}) becomes
\begin{equation}
    \label{equ:ppet}
    \begin{split}
     M_{\tau_1}\triangleq \int_0^{\tau_1} Z^{\nu}_t\hcx_t\, d\kappa_t=\int_0^{\tau_1} e^{-\beta t} d\kappa_t.
    \end{split}
\end{equation}

In order to progress with the explicit representation of the processes
$\prfi{\psi^W_t}$ and $\prfi{\psi^B_t}$ from (\ref{equ:psis}),  in the following lemma we prove
a useful fact about the {\bf
conditional $\beta$-potential} of the local time $\prfi{\kappa_t}$, i.e. the
random process $\prfi{G_t}$
defined by $ G_t\triangleq\EE[\int_0^{\tau_1} \exp(-\beta u)\, d\kappa_u|\FF_t].$
\begin{lem} A version of the process $G$ is given by
\begin{equation}
    \label{equ:prog}
    \begin{split}
 G_t=\begin{cases}
    \exp(-\beta t) j(\beta, \abs{R_t}) \frac{1-\exp(-(1-\kappa_t) \Psi(\beta))}{\Psi(\beta)}+
    \int_0^t e^{-\beta u}\, d\kappa_u, & \kappa_t\leq1 \\
    \int_0^{\tau_1} e^{-\beta u}\, d\kappa_u,& \text{$\kappa_t>1$},\end{cases}
    \end{split}
\end{equation}

    where the functions $\psi$ and $j$ are defined in (\ref{equ:laplace2}) and (\ref{equ:hit}).
\end{lem}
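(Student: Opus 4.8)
The plan is to compute the conditional $\beta$-potential $G_t=\EE[\int_0^{\tau_1}e^{-\beta u}\,d\kappa_u\,|\,\FF_t]$ by splitting the integral at time $t$ and exploiting the Markov property of the pair $(R_t,\kappa_t)$ together with the strong Markov property of the Ornstein--Uhlenbeck process at the first hitting time of $0$. On the event $\{\kappa_t>1\}$ we have $t\geq\tau_1$, so $\int_0^{\tau_1}e^{-\beta u}\,d\kappa_u$ is $\FF_t$-measurable and $G_t$ equals it, giving the second branch immediately. On $\{\kappa_t\leq 1\}$ write $\int_0^{\tau_1}e^{-\beta u}\,d\kappa_u=\int_0^t e^{-\beta u}\,d\kappa_u+\EE[\int_t^{\tau_1}e^{-\beta u}\,d\kappa_u\,|\,\FF_t]$; the first term is the additive $\FF_t$-measurable piece appearing in the first branch, and everything reduces to evaluating the conditional expectation of the tail integral.

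For the tail integral I would argue as follows. Since $d\kappa$ is carried by the zero set of $R$, no mass accrues between time $t$ and the first return $T_0^{(t)}\triangleq\inf\{u>t:R_u=0\}$; thus $\int_t^{\tau_1}e^{-\beta u}\,d\kappa_u=\int_{T_0^{(t)}}^{\tau_1}e^{-\beta u}\,d\kappa_u$. By the strong Markov property at $T_0^{(t)}$, where $R=0$, and a shift of the time origin together with a change of variables $s=\kappa_u$ (so $u=\tau_s$ and $du=\dots$ is governed by the inverse-local-time process started afresh from level $\kappa_t$ up to level $1$), the remaining integral has the law of $e^{-\beta T_0^{(t)}}\int_0^{1-\kappa_t}e^{-\beta\tau_s}\,ds$ computed under $R_0=0$ with an independent copy of the inverse local time. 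Taking conditional expectations factorizes this into $\EE[e^{-\beta T_0^{(t)}}\,|\,\FF_t]$ times $\int_0^{1-\kappa_t}\EE[e^{-\beta\tau_s}\,|\,R_0=0]\,ds$.

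The two factors are then identified from Proposition \ref{pro:local}. The first factor is $\EE[e^{-\beta(T_0^{(t)}-t)}\,|\,\FF_t]\cdot e^{-\beta t}$, and by the Markov property of $R$ at time $t$ the conditional expectation depends only on $R_t$ and equals $j(\beta,\abs{R_t})$ by \eqref{equ:hit}; hence the first factor is $e^{-\beta t} j(\beta,\abs{R_t})$. For the second factor, \eqref{equ:laplace1} gives $\EE[e^{-\beta\tau_s}\,|\,R_0=0]=e^{-s\Psi(\beta)}$ (valid here because $\alpha>\theta^2/2\geq$ whatever is needed to keep $\beta$ in the right range, or simply because $\beta>0>-\alpha$), so $\int_0^{1-\kappa_t}e^{-s\Psi(\beta)}\,ds=\frac{1-e^{-(1-\kappa_t)\Psi(\beta)}}{\Psi(\beta)}$. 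Combining the additive piece $\int_0^t e^{-\beta u}\,d\kappa_u$ with the product of the two factors yields exactly the first branch of \eqref{equ:prog}, and a check that both branches agree on $\{\kappa_t=1\}$ (the product term vanishes there) confirms consistency.

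The main obstacle I anticipate is the bookkeeping in the time-shift/change-of-variables step: one must be careful that between $t$ and $T_0^{(t)}$ the local time does not increase (so the lower limit of the $d\kappa$-integral can legitimately be moved from $t$ to $T_0^{(t)}$), that $\kappa_{T_0^{(t)}}=\kappa_t$, and that after the shift the inverse local time indeed runs from level $\kappa_t$ up to level $1$ driven by a Brownian motion independent of $\FF_t$ — this is where the strong Markov property at a hitting time and the regenerative structure of $(R,\kappa)$ at zeros of $R$ must be invoked with care. The rest is a direct substitution of the two Laplace-transform formulas from Proposition \ref{pro:local}.
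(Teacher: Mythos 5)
Your proposal is correct and follows essentially the same route as the paper: split the potential at $t$, shift to the first return time $T_0(t)$ using that $d\kappa$ charges only the zero set of $R$, apply the strong Markov property there to factor the tail into $e^{-\beta t}\,\EE[e^{-\beta T_0}\,|\,R_0=R_t]$ times $\EE\!\left[\int_0^{\tau_{1-\kappa_t}}e^{-\beta u}\,d\kappa_u\right]$, and evaluate the two factors via the Laplace transforms of Proposition \ref{pro:local} together with the Fubini/change-of-variables identity $\EE\!\left[\int_0^{\tau_{1-k}}e^{-\beta u}\,d\kappa_u\right]=\int_0^{1-k}\EE[e^{-\beta\tau_u}]\,du$. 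The only cosmetic difference is that the paper first reduces the conditioning from $\FF_t$ to $\sigma(\kappa_t,R_t)$ by the Markov property of the pair $(R,\kappa)$, which is exactly the bookkeeping you flag as the delicate step.
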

\begin{proof}
We start by defining a family of  stopping times
$T_0(t)=\inf\sets{u\geq t}{ R_u=0}$, and note that because
$d\kappa_u$ does not charge the complement of the zero-set of $R_t$, we have
\begin{equation}
    \label{equ:gone}
    \begin{split}
 G_t=
\EE[\int_{T_0(t)}^{\tau_1} e^{-\beta u}\, d\kappa_u\Big|\sigma(\kappa_t, R_t)]+\int_0^t e^{-\beta u}\, d\kappa_u.
    \end{split}
\end{equation} The replacement of the $\sigma$-algebra $\FF_t$ by $\sigma(\kappa_t, R_t)$ is permitted by
the Markov property of the process $(\kappa_t, R_t)$.

When $\kappa_t\geq 1$, the value of $G_t$ is
trivially given by (\ref{equ:prog}), so we can restrict our attention to the value of
the function $g(t,r,k)=
\EE[\int_{T_0(t)}^{\tau_1} e^{-\beta u}\, d\kappa_u | \kappa_t=k, R_t=r]$ for $k<1$, because then (\ref{equ:gone}) implies
that $G_t=g(t,R_t,\kappa_t)+\int_0^t \exp(-\beta u)\, d\kappa_u$ on $\set{\kappa_t<1}$.  Using again the strong
Markov property and time-homogeneity of $(\kappa_t, R_t)$ we obtain
\begin{equation}
    \label{equ:val}
    \begin{split}
     g(t,r,k)&
     = \EE[ e^{-\beta T_0(t)} \int_{T_0}^{\tau_1} e^{-\beta (u-T_0(t))}\, d\kappa_u\Big| R_t=r, \kappa_t=k]
    \\ &= e^{-\beta t} \EE[ e^{-\beta T_0(0)}\Big |R_0=r] \,\EE[ \int_0^{ \tau_{1-k}} e^{-\beta t} d\kappa_t
    \Big| R_0=0, \kappa_0=0].
    \end{split}
\end{equation}
The second term in the above expression is given in (\ref{equ:hit}). As for the third term,
a change of variables yields
\begin{equation}
    \label{equ:cv}
    \begin{split}
     \EE[ \int_0^{ \tau_{1-k}} e^{-\beta t} d\kappa_t]= \int_0^{1-k} \EE[ e^{-\beta \tau_u}]\, du=
     \frac{1-e^{-(1-k)\psi(\beta)}}{\psi(\beta)}
    \end{split}
\end{equation}
\end{proof}
We have developed all the tools required to prove the following result
\begin{prop} In the setup of Theorem \ref{thm:ito}, set
$U(\omega, t, x)=\exp(-\beta t) \log(x)$. Then we have the following
explicit representations of the processes $\prfi{\Hx_t}$, $\prfi{\nux_t}$
and $\prfi{\hcx_t}$:
\begin{eqnarray}
\nux_t&=&-\sgn(R_t) h\Big(\frac{\abs{R_t}}{\sqrt{2}}\Big)\ \text{
where}\ h(z)\triangleq -\frac{2\beta}{\alpha} \frac{
H_{-\frac{\beta}{\alpha}-1}( z)}{
H_{-\frac{\beta}{\alpha}}( z)},\label{equ:n1} \\
\Hx_t &=& \frac{X_t}{\sigma S_t} \Big(\theta+\rho \sgn(R_t) h(\abs{R_t}/\sqrt{2})\Big),\\
\hcx_t &=& X_t  \frac{1-\exp(-\Psi(\beta))}{(1-\exp(-(1-\kappa_t)\Psi(\beta)))}.
\label{equ:ccc}
\end{eqnarray}
Finally, the process $\prf{\nux_t}$ is bounded and so the optimal dual process $\prf{Z^{\nux}_t}$ is a
martingale.
\end{prop}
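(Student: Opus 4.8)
\emph{Proof plan.} The plan is to feed the special form of the logarithmic utility into the general description of the optimum from Theorem~\ref{thm:ito} and then to extract the three processes directly from the closed form (\ref{equ:prog}) of the conditional $\beta$-potential $\prfi{G_t}$. For $U(\omega,t,x)=\exp(-\beta t)\log(x)$ we have $U_x(\omega,t,x)=\exp(-\beta t)/x$, hence $I(t,y)=\exp(-\beta t)/y$, so $Z^{\nux}_t\hcx_t\equiv\exp(-\beta t)$ and the right-hand side of (\ref{equ:psis}) collapses to $M_{\tau_1}=\int_0^{\tau_1}\exp(-\beta u)\,d\kappa_u$, as already recorded in (\ref{equ:ppet}). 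Since the proof of Theorem~\ref{thm:ito} shows that $\prfi{M_t}$, with $M_t=X_tZ^{\nux}_t+\int_0^t\exp(-\beta u)\,d\kappa_u$, is a true martingale on $[0,\tau_1]$, it must coincide there with $\EE[M_{\tau_1}\,|\,\FF_t]=G_t$; thus $\prfi{M_t}$ is precisely the process whose trajectory is given by (\ref{equ:prog}).

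Next I would read off the integrands $\prfi{\psi^W_t}$ and $\prfi{\psi^B_t}$ of (\ref{equ:psis}) by applying It\^o's formula to the first branch of (\ref{equ:prog}), which is in force on $[0,\tau_1]$ (there $\kappa_t\le1$, the local time being continuous): on this interval $G_t=e^{-\beta t}\,j(\beta,\abs{R_t})\,A_t+\int_0^t e^{-\beta u}\,d\kappa_u$ with $A_t\triangleq\frac{1-\exp(-(1-\kappa_t)\psi(\beta))}{\psi(\beta)}$. The factors $e^{-\beta t}$ and $A_t$, and the integral term, are of finite variation, and the only Brownian noise enters through $\prfi{R_t}$, a functional of $\prfi{W_t}$; hence the martingale part of $dG_t$ is a stochastic integral against $dW_t$ only. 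Comparing with the unique representation $dM_t=\psi^W_t\,dW_t+\psi^B_t\,dB_t$ and using $\rho^2<1$ then forces $\psi^B\equiv0$, while $\psi^W_t=e^{-\beta t}A_t\,\partial_r\!\big(j(\beta,\abs{r})\big)\big|_{r=R_t}$, the last identity coming from the Tanaka--Meyer expansion of the function $r\mapsto j(\beta,\abs{r})$, which is smooth off $0$ with a corner at $0$. With $\psi^B=0$ the formula for $\prfi{\Hx_t}$ is immediate from (\ref{equ:pi}).

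To identify $\prfi{\nux_t}$ and $\prfi{\hcx_t}$ I would use $X_tZ^{\nux}_t=G_t-\int_0^t\exp(-\beta u)\,d\kappa_u=e^{-\beta t}\,j(\beta,\abs{R_t})\,A_t$; dividing $\psi^W_t$ by this in the second relation of (\ref{equ:pi}) the common factor $e^{-\beta t}A_t$ cancels and leaves $\nux_t=\partial_r\!\big(\log j(\beta,\abs{r})\big)\big|_{r=R_t}$. Writing $j(\beta,\rho)$ via (\ref{equ:hit}) as a constant multiple of $H_{-\beta/\alpha}(\rho/\sqrt2)$ and invoking the derivative recurrence for the Hermite function converts this into $-\sgn(R_t)\,h(\abs{R_t}/\sqrt2)$ with $h$ as in (\ref{equ:n1}). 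For the consumption density, $\hcx_t=\exp(-\beta t)/Z^{\nux}_t=X_t/\big(j(\beta,\abs{R_t})A_t\big)$; since $d\kappa$ charges only the zero set of $\prfi{R_t}$ and $j(\beta,0)=1$, this equals $X_t/A_t$ $d\kappa$-a.e., which is (\ref{equ:ccc}) --- recall from Theorem~\ref{thm:aer} that $\hcx$ is determined only $d\kappa$-a.e.

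Finally, for the boundedness of $\prf{\nux_t}$ and hence the martingale property of $\prfi{Z^{\nux}_t}$: since $\nux_t=\pm\,h(\abs{R_t}/\sqrt2)$, it suffices to bound $h$ on $[0,\infty)$. Near $0$ this is clear because $\rho\mapsto j(\beta,\rho)$ is $C^1$ up to the boundary with $j(\beta,0)=1\neq0$; at infinity the classical asymptotics $H_\nu(z)\sim(2z)^\nu$ give $H_{-\beta/\alpha-1}(z)/H_{-\beta/\alpha}(z)=O(1/z)\to0$, so $h$ is bounded. Then $\nux$ is bounded, and the exponential local martingale $\prfi{Z^{\nux}_t}$ of (\ref{equ:locnu}), whose two integrands $\nux$ and $\theta+\rho\nux$ are bounded, is a genuine martingale by Novikov's criterion, using the exponential moments of $\tau_1$ supplied by Proposition~\ref{pro:local}(1). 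The delicate part of the argument is the It\^o--Tanaka bookkeeping in the second step: one has to check that the local-time terms produced by the corner of $r\mapsto j(\beta,\abs{r})$ at the origin cancel against the remaining $d\kappa$- and $dt$-drift --- this is the probabilistic content of the compatibility, traceable to $j(\beta,0)=1$, between the Hermite-function formula for $j$ in (\ref{equ:hit}) and the Laplace-exponent formula (\ref{equ:laplace2}) --- and that the exceptional behaviour of the integrands on the (Lebesgue-null) zero set of $\prfi{R_t}$ does no harm to $\hcx$.
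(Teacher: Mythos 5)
Your route is the paper's own: collapse $Z^{\nux}_t\hcx_t\equiv e^{-\beta t}$, identify the martingale $M$ of (\ref{equ:em}) with the conditional $\beta$-potential $G$ of (\ref{equ:prog}), read off $\psi^B\equiv 0$ and $\psi^W$ by It\^o--Tanaka, divide by $X_tZ^{\nux}_t=G_t-\int_0^t e^{-\beta u}\,d\kappa_u$ to obtain $\nux$ as the logarithmic $r$-derivative of $j(\beta,\abs{\cdot})$, and then use the Hermite recurrence for (\ref{equ:n1}) and the Hermite asymptotics for the boundedness of $h$. On these points the comparison is ``same proof'' (modulo sign and $1/\sqrt{2}$ bookkeeping from the chain rule, which you should track explicitly against the conventions in (\ref{equ:locnu}) and (\ref{equ:pi})).

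The genuine problem is your last identification of the consumption density. Your chain gives, $d\kappa$-a.e., $\hcx_t=X_t/A_t=X_t\,\psi(\beta)\big/\big(1-\exp(-(1-\kappa_t)\psi(\beta))\big)$, and you then assert ``which is (\ref{equ:ccc})''. It is not: the display (\ref{equ:ccc}) has numerator $1-\exp(-\Psi(\beta))$, so the two expressions differ by the nontrivial constant factor $(1-e^{-\psi(\beta)})/\psi(\beta)$, which is exactly $xy=\EE\int_0^{\tau_1}e^{-\beta t}\,d\kappa_t$ computed from (\ref{equ:ppet}), (\ref{equ:psis}) and (\ref{equ:cv}). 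The paper arrives at (\ref{equ:ccc}) precisely through this determination of $y$ and a final rearrangement, a step your cancellation of $y$ bypasses; as written, silently equating $X_t/A_t$ with (\ref{equ:ccc}) is a non sequitur, and you must either exhibit where the factor $(1-e^{-\psi(\beta)})/\psi(\beta)$ enters your formula or state explicitly that your derivation produces the numerator $\psi(\beta)$ and reconcile the constant with the displayed one. A smaller caveat: invoking Novikov together with Proposition \ref{pro:local}(1) only closes the martingale argument when $\alpha>\tfrac12\big(\theta^2+(1-\rho^2)\sup_{z\geq 0}h(z)^2\big)$, since the quadratic variation rate of the stochastic logarithm of $Z^{\nux}$ is $\theta^2+(1-\rho^2)(\nux_t)^2$ and the Laplace transform of $\tau_1$ is finite only for exponents below $\alpha$; the paper is terse here as well, but if Novikov is your tool you should state this restriction or justify uniform integrability of $\{Z^{\nux}_{t\wedge\tau_1}\}$ by another argument.
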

\begin{proof}
 A use of the It\^ o-Tanaka formula and the expression
(\ref{equ:prog}) yields
\begin{equation}
    \label{equ:forpsi}
    \begin{split}
     \psi^{B}_t=0,\ \text{and}\ \psi^{W}_t=
     \exp(-\beta t) \sgn(R_t) \frac{\partial}{\partial r} j(\beta, \abs{R_t})
     \frac{1-\exp(-(1-\kappa_t) \Psi(\beta))}{\Psi(\beta)}.
    \end{split}
\end{equation}
Moreover, the martingale property of process $M_t$ from
(\ref{equ:em}) implies that $X_tZ^{\nux}_t=G_t-\int_0^t e^{-\beta
u} \, d\kappa_u$, and so, equations (\ref{equ:hit}),
(\ref{equ:pi}) and (\ref{equ:psis}) can be combined into the following
explicit expression of the optimal dual process
\[ \nu^{y}_t=\sgn(R_t) \frac{\frac{\partial}{\partial \beta}
j(\beta, \abs{R_t})}{j(\beta, \abs{R_t})}.\] The representation
(\ref{equ:hit}) and the identity
$\frac{\partial}{\partial x}H_{\xi}(x)=2\xi H_{\xi-1}(x)$ (see
\citet{Leb72}, equation 10.5.2, page 289) complete the proof of (\ref{equ:n1}).

Part (7) of Theorem \ref{thm:aer}, and the identities (\ref{equ:pi}) and (\ref{equ:forpsi})
imply that
\begin{equation}
    \nonumber
    \begin{split}
     \hcx_t=\frac{X_t \Psi(\beta) }{y j(\beta, \abs{R_t}) (1-\exp(-(1-\kappa_t)\Psi(\beta)))},
    \end{split}
\end{equation}
where $y$ satisfies $x=-v'(y)$. To get a more explicit expression for $y$, we combine
(\ref{equ:ppet}) and (\ref{equ:psis}) to get $xy=\EE[\int_0^{\tau_1} \exp(-\beta t)\, d\kappa_t]$.
After repeating the calculation in (\ref{equ:cv}) with $k=0$, we only need to rearrange the terms and
remember that $R_t=0$ $d\kappa$-a.e, to
obtain (\ref{equ:ccc}).

We are left with the proof of the boundedness of the process $\prfi{\nux_t}$.
The asymptotic formula 10.6.3 in \citet{Leb72}, page 291, implies
that, $H_{\xi}(x)\sim C_{\xi} x^{\xi}$ as $x\to\infty$, for some positive constant
$C_{\xi}$ depending on  $\xi<0$.
Therefore, there exists a constant $D>0$ such that $h(x)\sim D x^{-1}$, as $x\to\infty$. Because of the
existence of the limit $\lim_{x\to 0+} h(x)$, we conclude that $h$
is a bounded function on $[0,\infty)$. Hence, $\prfi{\nux_t}$ is a
bounded process, making $\prf{Z^{\nux}_t}$  a
martingale.
\end{proof}
\begin{rem} \label{rem:asslog} In the generic setup of Theorem \ref{thm:ito}, we have explicitly assumed that $u(x)<\infty$,
for at least one $x>0$. In the case of the logarithmic utility random field treated above, the validity of
such an assumption is implied by the following chain of inequalities in which
$\QQ_0$ and $Z^0_{\tau_1}$ are as in (\ref{equ:cands}).
\begin{equation}
    \label{equ:finall}
    \begin{split}
     u(x)-x & =\sup_{c\in\AA(x,0)} (\fU(c)-x) \leq \fV(\QQ_0)
     =\EE\int_0^{\tau_1} (-1-\log(Z^0_t))\, d\kappa_t\\ &\leq
     \EE[\int_0^{\tau_1} \frac{1}{2}(\theta B_t^2+1+\theta^2 t)\, d\kappa_t]
     = \frac{1}{2}\int_0^1 \EE[\theta (1+B_{\tau_s}^2)+\theta^2 \tau_s]\, ds\\
     &\leq \frac{\theta}{2}+\frac{(\theta^2+1)}{2} \int_0^1 \EE[\tau_s]\, ds
     \leq \frac{\theta+(\theta^2+1)\EE[\tau_1]}{2} <\infty.
    \end{split}
\end{equation}
The fact that $\EE[\tau_1]<\infty$ (which can easily be deduced
from (\ref{equ:laplace1})) implies both the final inequality in
(\ref{equ:finall}) and the equality
$\EE[B^2_{\tau_1}]=\EE[\tau_1]$ through Wald's identity (see
Problem 2.12, page 141 in \cite{KarShr91}).
\end{rem}

\appendix
\section{A Convex-Duality Proof of Theorem \ref{thm:aer}} We have
divided the proof into several  steps, each of which is stated
as a separate lemma. Throughout this section all the conditions of Theorem
\ref{thm:aer} are assumed to be satisfied.

\begin{lem}[Global properties of the value functions]
\label{lem:minimax} The value function $u(\cdot)$ is convex,
non-decreasing and $[-\infty,\infty)$-valued, while $v$ is
concave, and $(-\infty,\infty]$-valued. Moreover, the primal and the dual value
functions $u(\cdot)$ and $v(\cdot)$ are convex conjugates of each other.
\end{lem}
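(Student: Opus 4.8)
The plan is to read every assertion of Lemma~\ref{lem:minimax} off the definitions \eqref{equ:problem} and \eqref{equ:adp}, using the linear-inequality description of $\AA(x,\EN)$ from Proposition~\ref{pro:charadm2}, the two-sided bound $G\le U(\omega,t,1)\le D$ of Definition~\ref{def:utilfield}, and the $\sigma(\bam,\slzmk)$-compactness of the dual domain from Proposition~\ref{pro:iscomp}. Of these, only the conjugacy of $u$ and $v$ is substantial; I would dispose of the monotonicity, shape and range assertions first, essentially for free, and then spend the effort on a minimax argument.

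First, monotonicity and the shape of $u$ and $v$ stated in Lemma~\ref{lem:minimax}. For $x_1\le x_2$ every inequality cutting out $\AA(x_1,\EN)$ in Proposition~\ref{pro:charadm2} is at least as stringent as the matching one for $\AA(x_2,\EN)$, since its right-hand side $xy+\scl{e}{\QQ}$ is non-decreasing in $x$; hence $\AA(x_1,\EN)\subseteq\AA(x_2,\EN)$ and $u$ is non-decreasing. The shape assertion is the standard budget-set computation: the constraints in Proposition~\ref{pro:charadm2} are affine in $x$, so $\lambda\AA(x_1,\EN)+(1-\lambda)\AA(x_2,\EN)\subseteq\AA(\lambda x_1+(1-\lambda)x_2,\EN)$ for $\lambda\in[0,1]$, and $\fU$ is concave on $\slzmkp$ because $U(\omega,t,\cdot)$ is; on the dual side $\DDk(y)$ scales linearly in $y$, $\fV$ is convex as a supremum of affine functionals of $\QQ$ by \eqref{equ:defV}, and $\QQ\mapsto\scl{e}{\QQ}$ is linear. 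For the value ranges, $u(x)>-\infty$ for $x>0$ is already recorded in the remark following Assumption~\ref{ass:finite} (a constant consumption is financeable for $x>0$ and has finite utility by Definition~\ref{def:utilfield}); and with $c\equiv 1$ and $\scl{1}{\QQ}\le y$ for $\QQ\in\DDk(y)$ — which holds because $1\in(\MMk)^{\circ}$, as $\scl{1}{\QQk}=\EE^{\QQ}[\kappa_T]=1$ for $\QQ\in\MM$ and $\DDk$ is the bipolar of $\MMk$ — one gets $\fVE(\QQ)\ge\fV(\QQ)\ge\fU(1)-\scl{1}{\QQ}\ge G-y$, so $v(y)\ge G-y>-\infty$ for every $y\ge0$ and $v$ is $(-\infty,\infty]$-valued.

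It remains to establish the conjugacy, which also finishes the range claim for $u$. Weak duality is immediate: for $c\in\AA(x,\EN)$ and $\QQ\in\DDk(y)$, \eqref{equ:defV} gives $\fU(c)\le\fV(\QQ)+\scl{c}{\QQ}$ and Proposition~\ref{pro:charadm2} gives the budget inequality $\scl{c}{\QQ}\le xy+\scl{e}{\QQ}$, whence $\fU(c)\le\fVE(\QQ)+xy$; taking the supremum over $c$ and the infimum over $\QQ$ yields $u(x)\le v(y)+xy$ for all $x\in\R$ and $y>0$, hence $u(x)\le\inf_{y>0}(v(y)+xy)$ and $v(y)\ge\sup_{x\in\R}(u(x)-xy)$. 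In particular $u<\infty$ everywhere as soon as $v$ is finite at a single point, which it is by Assumption~\ref{ass:finite} through the conjugacy below — or, unconditionally, by the reasonable-elasticity estimate on the conjugate field $V$ in the manner of \citet{KraSch99} — so $u$ is $[-\infty,\infty)$-valued. For the reverse inequalities I would run a minimax argument on the saddle function $\Phi(c,\QQ)\triangleq\fU(c)-\scl{c}{\QQ}+\scl{e}{\QQ}$ over $\slzmkp\times\DDk(y)$: $\Phi$ is concave in $c$ on the convex set $\slzmkp$, and affine and $\sigma(\bam,\slzmk)$-continuous in $\QQ$ on the convex, $\sigma(\bam,\slzmk)$-compact set $\DDk(y)$ (Proposition~\ref{pro:iscomp}, together with the estimate $\scl{c}{\QQ}\le y\normm{c}<\infty$). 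Sion's minimax theorem then gives $\inf_{\QQ\in\DDk(y)}\sup_{c}\Phi=\sup_{c}\inf_{\QQ\in\DDk(y)}\Phi$; the left-hand side is $v(y)$ by \eqref{equ:adp} and \eqref{equ:defV}, while the right-hand side equals $\sup_{x\in\R}(u(x)-xy)$ once $\inf_{\QQ\in\DDk(y)}(\scl{e}{\QQ}-\scl{c}{\QQ})$ is identified with $-y$ times the least initial wealth financing $c$ (Proposition~\ref{pro:charadm2} once more). Thus $v(y)=\sup_{x}(u(x)-xy)$, and then $u(x)=\inf_{y>0}(v(y)+xy)$ follows by the Fenchel--Moreau theorem applied to $u$, using the weak bound just obtained and the fact that $\dom u$ has non-empty interior (Assumption~\ref{ass:finite}). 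I expect the genuine difficulties to sit in this last step: establishing the upper semicontinuity of $c\mapsto\fU(c)-\scl{c}{\QQ}$ on $\slzmkp$ that the minimax theorem requires — a uniform-integrability argument in which reasonable elasticity re-enters — and ruling out a conjugacy gap at the left endpoint $-\LL(\EN)$ of $\dom u$, which is precisely where the finiteness supplied by Assumption~\ref{ass:finite} is indispensable.
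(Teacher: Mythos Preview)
Your approach is essentially the paper's: both hinge on Sion's minimax theorem applied to the saddle function $\Phi(c,\QQ)=\fU(c)-\scl{c}{\QQ}+\scl{e}{\QQ}$ over $\slzmkp\times\DDk(y)$, with the $\sigma(\bam,\slzmk)$-compactness of $\DDk(y)$ from Proposition~\ref{pro:iscomp} supplying the crucial compactness hypothesis. The organisational difference is only in how the outer $\sup_x$ is absorbed into the saddle problem. You compute $\inf_{\QQ\in\DDk(y)}(\scl{e}{\QQ}-\scl{c}{\QQ})$ and identify it as $-y$ times the least wealth financing $c$; the paper instead introduces the ``boundary'' slice $\AA'(x,\EN)\triangleq\AA(x,\EN)\setminus\bigcup_{x'<x}\AA(x',\EN)$, notes that $\sup_{c\in\AA(x,\EN)}\fU(c)=\sup_{c\in\AA'(x,\EN)}\fU(c)$ by monotonicity of $\fU$, and uses that for $c\in\AA'(x,\EN)$ one has $\sup_{\QQ\in\DDk(y)}\scl{c-e}{\QQ}=xy$ exactly. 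This trades your post-minimax identification for a pre-minimax reparametrisation, but the resulting sup--inf is literally the same. Two of the difficulties you anticipate are not confronted in the paper: it simply cites \citet{Sio58} without checking upper semicontinuity of $c\mapsto\Phi(c,\QQ)$ (so your caution is well placed but goes beyond what the paper supplies), and it proves only the direction $v(y)=\sup_{x}(u(x)-xy)$, deferring the biconjugacy and the behaviour at $-\LL(\EN)$ to the later Lemmas~\ref{lem:difv} and~\ref{lem:expropu} rather than invoking Fenchel--Moreau here.
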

\begin{proof}\ \
\begin{enumerate}
\item Concavity of $u(\cdot)$ and convexity of $v(\cdot)$ are
inherited from the properties of the objective functions
    $\fU(\cdot)$ and $\fV(\cdot)$ (see  \citet{EkeTem99}, the proof of Lemma 2.1, p. 50, for the standard argument).
    The increase of $u(\cdot)$ follows from the
    inclusion $\AA(x,\EN)\subseteq \AA(x',\EN)$, for $x<x'$.
\item    By the Assumption \ref{ass:finite}, there exists $\tilde{x}\in\R$
such that
    $u(\tilde{x})<\infty$. It follows immediately, by concavity of
    $u(\cdot)$ that $u(x)<\infty$ for all $x\in\R$.

\item    To establish the claim that $v(\cdot)$ is the convex
conjugate of $u(\cdot)$,  we define the auxiliary domain
$\AA'(x,\EN)\triangleq \AA(x,\EN)\setminus\cup_{x'<x} \AA(x',\EN)$. Note
that
\begin{enumerate}
\item the monotonicity of the utility functional $\fU(\cdot)$
implies that \[\sup_{c\in\AA(x,\EN)} \fU(c)=\sup_{c\in\AA'(x,\EN)}
\fU(c),\ \text{and}\]
\item the Proposition \ref{pro:charadm}
implies that $\sup_{\QQ\in\DDk(y)} \scl{c-e}{\QQ}=xy$, for any
$y>0$, and $c\in \AA'(x,\EN)$.
\end{enumerate}
Having established the weak-* compactness of the dual domain
$\DDk(y)$ in \ref{pro:iscomp}, the Minimax Theorem (see \citet{Sio58}) implies that
\begin{equation}
\nonumber \label{equ:equ1}
\begin{split}
\sup_{x\in \R} [u(x)-xy] &=\sup_{x\in \R} \Big(
\sup_{c\in\AA'(x,\EN)} \fU(c) - xy \Big)
\\ &= \sup_{x\in\R}\  \sup_{c\in \AA'(x,\EN)} \Big( \fU(c) - \sup_{\QQ\in\DDk(y)}\scl{c-e}{\QQ} \Big)  \\
&= \sup_{x\in\R}\  \sup_{c\in \AA'(x,\EN)} \inf_{\QQ\in\DDk(y)}
\Big( \fU(c) - \scl{c}{\QQ}+\scl{e}{\QQ} \Big)  \\
&= \sup_{c\in\slzmkp}\ \inf_{\QQ\in\DDk(y)} \Big( \fU(c) -
\scl{c}{\QQ}+\scl{e}{\QQ}\Big) \\ &=\inf_{\QQ\in\DDk(y)}
\sup_{c\in\slzmkp}\Big( \fU(c) - \scl{c}{\QQ}+\scl{e}{\QQ}\Big)\\ & =
\inf_{\QQ\in\DDk(y)} \Big( \fV(\QQ)+\scl{e}{\QQ} \Big) = v(y).
\end{split}
\end{equation}

\end{enumerate}
\end{proof}

\begin{lem}[Existence in the dual problem]
For \label{lem:dualexists} $y\in\Dom(v)$ there exists
$\hqy\in\DDk(y)$ such that
\[ v(y)=\fVE(\hqy)=\fV(\hqy)+\scl{e}{\hqy}.\]
\end{lem}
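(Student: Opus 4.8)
The plan is to realize $v(y)$ as the minimum of the $\sigma(\bam,\slzmk)$-lower semicontinuous functional $\fVE$ over the $\sigma(\bam,\slzmk)$-compact convex set $\DDk(y)$. First note that $v(y)\in\R$: since $y\in\Dom(v)$ we have $v(y)<\infty$, and for any constant consumption density $c_0>0$ one has $\fU(c_0)\in\R$ by part (\ref{U2}) of Definition \ref{def:utilfield}, so that $\fV(\QQ)\ge \fU(c_0)-\scl{c_0}{\QQ}\ge\fU(c_0)-yc_0$ while $\scl{e}{\QQ}\ge 0$ (as $e\ge 0$ and $\QQ\in\bam_+$); hence $\fVE(\QQ)>-\infty$ for all $\QQ\in\DDk(y)$ and $v(y)>-\infty$. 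We may therefore choose a sequence $(\QQ_n)_{n\in\N}$ in $\DDk(y)$ with $\fVE(\QQ_n)\to v(y)$ and $\fVE(\QQ_n)<\infty$ for every $n$.

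Next I would record the lower semicontinuity of $\fVE$ in the $\sigma(\bam,\slzmk)$-topology. By its very definition in (\ref{equ:defV}), $\fV$ is the pointwise supremum of the affine functionals $\QQ\mapsto\fU(c)-\scl{c}{\QQ}$ over those $c\in\slzm_+$ with $\fU(c)>-\infty$; each such functional is $\sigma(\bam,\slzmk)$-continuous because the pairing against a fixed element of $\slzmk$ is, by construction of this topology, continuous. An upper envelope of continuous functions is lower semicontinuous, so $\fV$ is $\sigma(\bam,\slzmk)$-l.s.c. For the endowment term, the identity $\EN_T=\int_0^T e_u\,d\kappa_u$, the standing hypothesis $\UU(\EN_T)<\infty$, and Proposition \ref{pro:intdual} give $\normm{e}=\sup_{\QQ\in\MMk}\scl{e}{\QQ}=\UU(\EN_T)<\infty$, i.e. $e\in\slzmkp$; hence $\QQ\mapsto\scl{e}{\QQ}$ extends to a $\sigma(\bam,\slzmk)$-continuous linear functional on $\bam$, bounded by $y\normm{e}$ on $\DDk(y)$. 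Adding the two, $\fVE$ is $\sigma(\bam,\slzmk)$-lower semicontinuous.

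It then remains to combine compactness with lower semicontinuity. By Proposition \ref{pro:iscomp}, $\DDk(y)$ is $\sigma(\bam,\slzmk)$-compact, so the minimizing sequence $(\QQ_n)$ has a subnet $(\QQ_{n_\alpha})$ converging to some $\hqy\in\DDk(y)$ (one passes to a subnet because $\DDk(y)$ need not be metrizable). Lower semicontinuity yields $\fVE(\hqy)\le\liminf_\alpha\fVE(\QQ_{n_\alpha})=\lim_n\fVE(\QQ_n)=v(y)$, whereas $\hqy\in\DDk(y)$ forces $\fVE(\hqy)\ge v(y)$ by the definition of $v$ in (\ref{equ:adp}). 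Therefore $v(y)=\fVE(\hqy)=\fV(\hqy)+\scl{e}{\hqy}$, as claimed.

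The step requiring the most care is the interaction between the enlargement of $\MMk(y)$ to $\DDk(y)$ and the successive extensions of the pairing $\scl{\cdot}{\cdot}$ in (\ref{equ:action}): one must make sure that $\fV$ and $\scl{e}{\cdot}$ are well defined and well behaved (finiteness of $\scl{e}{\cdot}$, l.s.c. of $\fV$, continuity in the right topology) on all of $\DDk(y)$ rather than only on $\MMk(y)$, and that the weak-* topology used for compactness in Proposition \ref{pro:iscomp} is the same one in which these functionals are semicontinuous. Everything else is soft functional analysis: weak-* compactness is already available from Alaoglu's theorem, and the lower semicontinuity of $\fV$ is essentially automatic from its being defined as a convex conjugate.
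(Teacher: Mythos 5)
Your proof is correct and follows essentially the same route as the paper: a minimizing sequence in the $\sigma(\bam,\slzmk)$-compact set $\DDk(y)$ (Proposition \ref{pro:iscomp}) together with the weak-* lower semicontinuity of $\fVE$, which the paper phrases as closedness of the epigraph of $\fVE$ and extracts via a cluster point in $\DDk(y)\times[v(y),\fVE(\QQ_1)]$ rather than via a subnet. Your additional checks (finiteness of $v(y)$ from below and $e\in\slzmkp$) are harmless refinements of the same argument.
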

\begin{proof}
For $y\in\Dom(v)$, let $\sq{\QQ_n}$ be a minimizing sequence for
$v(y)$, i.e. a sequence in $\DDk(y)$, such that $\sq{\fVE(\QQ_n)}$
is real-valued and decreasing with limit $v(y)$. Since $\DDk(y)$ is a closed and bounded subset of the dual
$(\slzmk)^*$ of $\slzmk$. By Proposition \ref{pro:iscomp} the product space $\DDk(y)\times [v(y),
\fVE(\QQ_1)]$ is compact. Therefore
  the sequence
$\sqb{\QQ_n, \fVE(\QQ_n)}$ has a cluster point $(\hqy, v^*)$ in
$\DDk(y)\times [v(y), \fVE(\QQ_1)]$. By the decrease of the
sequence $\sq{ \fVE(\QQ_n)}$, we have  $v^*=\lim_n
\fVE(\QQ_n)=v(y)$. On the other hand, by the definition
(\ref{equ:defV}) of the functional $\fV(\cdot)$ , the epigraph of
its restriction $\fVE(\cdot):\DDk(y)\to\R$ is closed with respect
to the product of the weak-* and Euclidean topologies. Therefore,
$(\hqy, v^*)$ is in the epigraph of $\fVE$ and thus, $v(y)=v^*\geq
\fVE(\hqy)=\fV(\hqy)+\scl{\hqy}{e}.$
\end{proof}

\begin{lem}[Consequences of Reasonable Elasticity]\label{lem:difv}\
\begin{enumerate}
\item $\Dom(v)=(0,\infty)$.
\item $v(\cdot)$ is continuously differentiable, and for $y>0$ its
derivative satisfies
\[ yv'(y)=-\scl{(\hqy)^r}{\fI(\hqy)}+\scl{e}{\hqy},\] where
$\hqy\in\DDk(y)$ is a minimizer in the dual problem, i.e.
$v(y)=\fVE(\hqy)$.
\item The following inequality holds for
all $\QQ\in\DDk(y)$
\[ yv'(y)\geq -\scl{\QQ^r}{\fI(\hqy)}+\scl{e}{\hqy}.\]
\item $\lim_{y\to 0} v'(y)=-\infty$ and $\lim_{y\to\infty}
v'(y)\in \big[\inf_{\QQ\in\MM}\EE^{\QQ}[\EN_T],\sup_{\QQ\in\MM}\EE^{\QQ}[\EN_T]\big]$
\item $\fI(\hqy)\in \AA(-v'(y), e)$ and $\scl{\fI(\hqy)}{(\hqy)^r}=\scl{\fI(\hqy)}{\hqy}$.
\end{enumerate}
\end{lem}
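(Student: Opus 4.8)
The plan is to run the familiar convex-duality program of \cite{KraSch99} adapted to the present finitely-additive setting, reading off all five assertions from the behaviour of $\fVE$ near a dual minimizer, with the hypothesis ${\mathrm{AE}}[U]<1$ invoked precisely where integrability is at stake. For (1), that $v>-\infty$ on $(0,\infty)$ is immediate since $v=u^{*}$ with $u$ finite (Lemma~\ref{lem:minimax}), or directly from $\fVE(\QQ)\ge G-\scl{1}{\QQ^{r}}+\scl{e}{\QQ}\ge G-y$ on $\DDk(y)$, using $V(t,z)\ge U(t,1)-z\ge G-z$ and $\scl{1}{\QQ^{r}}\le y$. For finiteness, Assumption~\ref{ass:finite} and conjugacy produce a $y_{0}\in\Dom(v)$, hence by Lemma~\ref{lem:dualexists} a $\hqyo\in\DDk(y_{0})$ with $\EE\int_{0}^{T}V(t,Y^{\hqyo}_{t})\,d\kappa_{t}<\infty$; for $y\ge y_{0}$ monotonicity of $V(t,\cdot)$ keeps this bounded above (it is bounded below by $G-y$), so $v(y)<\infty$, while for $y<y_{0}$ one uses the reasonable-elasticity estimate that ${\mathrm{AE}}[U]<1$ supplies, for each $\lambda\in(0,1]$, a constant $C(\lambda)$ with $V(t,\lambda z)\le C(\lambda)(1+V(t,z)^{+})$, whence $\fV(\lambda\hqyo)<\infty$.

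The core is (2). Writing $\yqy$ for the density process of the minimizer $\hqy$ furnished by Proposition~\ref{pro:propfv}, one has $\fVE(\hqy)=\EE\int_{0}^{T}V(t,\yqy_{t})\,d\kappa_{t}+\scl{e}{\hqy}$. Since $(1+\eps)\hqy\in\DDk((1+\eps)y)$, the estimate $v((1+\eps)y)-v(y)\le\fVE((1+\eps)\hqy)-\fVE(\hqy)$ holds; dividing by $\eps y$, letting $\eps\downarrow0$ and then $\eps\uparrow0$, and using $V_{y}(t,z)=-I(t,z)$, one obtains $v'_{+}(y)\le y^{-1}\bigl(-\scl{(\hqy)^{r}}{\fI(\hqy)}+\scl{e}{\hqy}\bigr)\le v'_{-}(y)$; convexity forces equality, yielding the formula, and a convex function differentiable on an open interval is automatically $C^{1}$. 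The main obstacle is the interchange of derivative with $\EE\int_{0}^{T}(\cdot)\,d\kappa_{t}$: the difference quotients $\eps^{-1}\bigl(V(t,(1+\eps)\yqy_{t})-V(t,\yqy_{t})\bigr)$ are monotone in $\eps$ by convexity of $V(t,\cdot)$, dominated above for $\eps\le\eps_{0}$ by $\eps_{0}^{-1}\bigl(V(t,(1+\eps_{0})\yqy_{t})-V(t,\yqy_{t})\bigr)$, integrable thanks to (1), and below by $-\yqy_{t}I(t,\yqy_{t})$. Integrability of this lower bound, i.e. $\scl{(\hqy)^{r}}{\fI(\hqy)}<\infty$, is exactly where reasonable elasticity is indispensable: from $zI(t,z)=U(t,I(t,z))-V(t,z)$ together with $U(t,I(t,z))\le C_{1}V(t,z)+C_{2}$ for small $z$ (a consequence of ${\mathrm{AE}}[U]<1$) one gets $\EE\int_{0}^{T}\yqy_{t}I(t,\yqy_{t})\,d\kappa_{t}\le C_{1}\fV(\hqy)+C_{2}<\infty$.

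For (3) I would use first-order optimality of $\hqy$ over the convex set $\DDk(y)$: for $\QQ\in\DDk(y)$ and $\eps\in[0,1]$ set $\QQ_{\eps}\triangleq(1-\eps)\hqy+\eps\QQ\in\DDk(y)$, note $(\QQ_{\eps})^{r}=(1-\eps)(\hqy)^{r}+\eps\QQ^{r}$ by linearity of the Yosida--Hewitt decomposition on the positive cone, and differentiate $0\le\fVE(\QQ_{\eps})-\fVE(\hqy)$ at $\eps=0^{+}$ — with the same domination as in (2), and $\scl{\QQ^{r}}{\fI(\hqy)}\le y\normm{\fI(\hqy)}<\infty$ — then substitute the formula of (2). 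For (5), the linear inequalities characterizing $\AA(-v'(y),e)$ in Proposition~\ref{pro:charadm2}, which need only be verified over $\MMk(y)$, are precisely these first-order conditions rearranged via the formula of (2); hence $\fI(\hqy)\in\AA(-v'(y),e)$ and, in particular, $\normm{\fI(\hqy)}<\infty$. The identity $\scl{\fI(\hqy)}{(\hqy)^{r}}=\scl{\fI(\hqy)}{\hqy}$ is then complementary slackness: the budget constraint is saturated at the optimizer, while the purely finitely-additive part $(\hqy)^{s}$ contributes nothing, exactly as $\fV$ is blind to $\QQ^{s}$ in Proposition~\ref{pro:propfv}.

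Part (4) is then largely convex analysis. Convexity makes $v'$ non-decreasing, so both one-sided limits exist. For the limit at $0$: $\scl{1}{(\hqy)^{r}}\le y\to0$ forces $\yqy\to0$ in $\lone(\PPk)$, so $\fI(\hqy)=I(\cdot,\cdot,\yqy)\to\infty$ and $\scl{\fI(\hqy)}{\QQ}\to\infty$ for every $\QQ\in\MMk$ by Fatou; since $\fI(\hqy)\in\AA(-v'(y),e)$ this forces $-v'(y)\to\infty$ (alternatively, a finite $\lim_{y\to0+}v'(y)$ would bound $v$ on $(0,1]$ and make $u(x)=\inf_{y>0}(v(y)+xy)$ constant on a half-line). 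For the limit at $\infty$, which equals $\lim_{y\to\infty}v(y)/y$: from $v(y)\le\fVE(y\hat\QQ)$ for a fixed $\hat\QQ\in\DDk(1)$ with $\fV(\hat\QQ)<\infty$, together with $\scl{e}{\hat\QQ}\le\sup_{\QQ\in\MM}\EE^{\QQ}[\EN_{T}]$ (from density of $\MMk(1)$ in $\DDk(1)$ and $\scl{e}{\QQk}=\EE^{\QQ}[\EN_{T}]$), one gets the upper bound; and from the financeability $c\equiv x\in\AA(x-\LL(\EN),\EN)$ (Proposition~\ref{pro:charadm}), which gives $(-\LL(\EN),\infty)\subseteq\Dom(u)$, combined with $v(y)\ge u(x)-xy$, one gets the lower bound $\LL(\EN)$. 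Hence the limit lies in $[\LL(\EN),\sup_{\QQ\in\MM}\EE^{\QQ}[\EN_{T}]]$, its precise value being identified later once $\Dom(u)=[-\LL(\EN),\infty)$ is established.
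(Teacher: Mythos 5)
Your route is the same one the paper intends: the paper disposes of (1)--(4) by pointing to the corresponding lemmas of \citet{KarZit03} and proves (5) from the first-order conditions, and your reconstruction --- radial perturbation of the minimizer for the formula for $v'$, first-order optimality of $\hqy$ over the convex set $\DDk(y)$ for (3), Proposition \ref{pro:charadm2} plus saturation at $\QQ=\hqy$ for (5), recession-slope bounds for (4) --- is exactly that program, with ${\mathrm{AE}}[U]<1$ invoked in the two places where it is genuinely needed (finiteness of $\fV$ under down-scaling, and integrability of $\yqy_t\,I(t,\yqy_t)$). Two remarks in your favour: the inequality your first-order condition actually produces carries $\scl{e}{\QQ}$ rather than the printed $\scl{e}{\hqy}$, and that is the version the paper itself uses when deducing (5), so it is the right statement to prove; and your small-$z$ bound $zI(t,z)\le C_1V(t,z)+C_2$ should be complemented on $\{\yqy_t\ge z_0\}$ by $I(t,z)\le K_2^{-1}(z_0)$ together with $\EE\int_0^T\yqy_t\,d\kappa_t\le y$, a one-line addition.

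There is, however, one concrete misstep in (2). For $\eps<0$ the difference quotient $\eps^{-1}\bigl(V(t,(1+\eps)z)-V(t,z)\bigr)$ is bounded \emph{above} by $-zI(t,z)$ (chord slopes of the convex map $\eps\mapsto V(t,(1+\eps)z)$ are nondecreasing), not below; indeed near $\eps=-1$ it can diverge to $-\infty$ when $V(t,0+)=\infty$. So the two-sided domination you state fails precisely on the side that is supposed to deliver $v'_-(y)\ge y^{-1}\bigl(-\scl{(\hqy)^r}{\fI(\hqy)}+\scl{e}{\hqy}\bigr)$, which is half of (2). The repair is short and uses only what you already have: the quotients increase to $-zI(t,z)$ as $\eps\uparrow 0$, so monotone convergence applies once the quotient at one fixed $\eps_1\in(-1,0)$ has integrable negative part, i.e.\ once $\fV((1+\eps_1)\hqy)<\infty$ --- exactly the estimate $V(t,\lambda z)\le C(\lambda)(1+V(t,z)^+)$ from your part (1), applied to $\hqy$ instead of $\hqyo$. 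A smaller point: in (3) you invoke $\normm{\fI(\hqy)}<\infty$, which in your ordering is only obtained in (5) via (3); this is circular as written, but also unnecessary --- either note that the asserted inequality is vacuous when $\scl{\QQ^r}{\fI(\hqy)}=+\infty$, or observe that its finiteness falls out of the first-order condition itself, since the positive part $\yqy_t I(t,\yqy_t)$ is integrable by your AE argument.
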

\begin{proof} Thanks to the representation
$ v(y)=\EE\int_0^T V(t,Y^{\hqy}_t)\, d\kappa_t$, and the fact that
$\EE\int_0^T \yq_t\, d\kappa_t\leq 1$ for all $\QQ\in\DDk(1)$, the
proofs of parts (1)-(4) this lemma follow (in an almost verbatim
fashion)  the proofs of the following statements in
\citet{KarZit03}: (1) Lemma A.5, p.30, (2) Lemma A.6, p. 31., (3)
Proposition A.7, p. 32., and (4)  Lemma A.8, p. 33.

To prove the claim (5), we observe that the combination of (3) and (4) implies
that
\[ \scl{\fI(\hqy)}{y\QQ}\leq -yv'(y)+\scl{e}{y\QQ},\ \text{for all $\QQ\in\MMk$.}\] From Proposition
\ref{pro:charadm2} it follows that $\fI(\hqy)\in \AA(-v'(y),e)$, so
$\scl{\fI(\hqy)}{\QQ}\leq -yv'(y)+\scl{e}{\QQ}$, for all $\QQ\in\DD(y)$. In particular,
$\scl{\fI(\hqy)}{\hqy}\leq -yv'(y)+\scl{e}{\hqy}$, yielding immediately the inequality
$\scl{\fI(\hqy)}{\hqy}\leq \scl{\fI(\hqy)}{(\hqy)^r}$. The second part of the claim
follows by the trivial inequality $\scl{\fI(\hqy)}{\hqy}\geq \scl{\fI(\hqy)}{(\hqy)^r}$.
\end{proof}

\begin{lem}[Existence in the Primal Problem]\label{lem:exist}
For $x>-\lim_{y\to\infty} v'(y)$ the Primal Problem ($\ref{equ:problem}$) has a solution,
i.e. there exists $\hcx\in\AA(x,\EN)$
such that $u(x)=\fU(\hcx)$. Moreover, the optimal consumption density process $\hcx$ is $\PPk$-a.s. unique.
\end{lem}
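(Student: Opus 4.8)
The plan is to carry out the usual dual-attainment step of convex duality: realise the optimal consumption density as the inverse-marginal image of the dual minimizer, check feasibility, and close the duality gap using the pointwise Fenchel relation (\ref{equ:conj}) together with the identities already established in Lemma \ref{lem:difv}. To fix the candidate, note first that by Lemma \ref{lem:difv} the function $v$ is $C^1$ on $(0,\infty)$, with $v'$ non-decreasing, $\lim_{y\to 0+}v'(y)=-\infty$ and $\lim_{y\to\infty}v'(y)$ finite; hence for $x>-\lim_{y\to\infty}v'(y)$ the intermediate value theorem gives some $y>0$ with $v'(y)=-x$. Since $y\in(0,\infty)=\Dom(v)$, Lemma \ref{lem:dualexists} supplies a dual minimizer $\hqy\in\DDk(y)$, i.e.\ $v(y)=\fV(\hqy)+\scl{e}{\hqy}$; finiteness of $v(y)$ and of $\scl{e}{\hqy}$ forces $\fV(\hqy)\in\R$, so Proposition \ref{pro:propfv} attaches to $\hqy$ a non-negative optional process $\yqy$ with $\fV(\hqy)=\EE\int_0^T V(t,\yqy_t)\,d\kappa_t$ and $\scl{c}{(\hqy)^r}=\EE\int_0^T c_t\,\yqy_t\,d\kappa_t$. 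I then set $\hcx\triangleq\fI(\hqy)$; by Lemma \ref{lem:difv}(5), read at $-v'(y)=x$, this lies in $\AA(x,\EN)$ and moreover $\scl{\hcx}{(\hqy)^r}=\scl{\hcx}{\hqy}$.

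Next I would establish optimality by a two-sided estimate. Integrating (\ref{equ:conj}) evaluated at $\yqy_t$ against $\EE\int_0^T(\cdot)\,d\kappa_t$ gives $\fU(\hcx)=\fV(\hqy)+\scl{\hcx}{(\hqy)^r}$; feeding in $\scl{\hcx}{(\hqy)^r}=\scl{\hcx}{\hqy}$ and then evaluating $\scl{\hcx}{\hqy}=\scl{(\hqy)^r}{\fI(\hqy)}=\scl{e}{\hqy}-yv'(y)$ through Lemma \ref{lem:difv}(2) and (5), I obtain $\fU(\hcx)=v(y)-yv'(y)\in\R$. For the matching upper bound, for arbitrary $c\in\AA(x,\EN)$ the definition (\ref{equ:defV}) of $\fV$ gives $\fU(c)\leq\fV(\hqy)+\scl{c}{\hqy}$, while the budget characterization of Proposition \ref{pro:charadm2} (applied to $\QQ=\hqy\in\DDk(y)$ with $x=-v'(y)$) gives $\scl{c}{\hqy}\leq xy+\scl{e}{\hqy}=\scl{e}{\hqy}-yv'(y)$; together these yield $\fU(c)\leq\fV(\hqy)+\scl{e}{\hqy}-yv'(y)=\fU(\hcx)$. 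Hence $\hcx$ is optimal and $u(x)=\fU(\hcx)$.

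Uniqueness is then the classical strict-concavity argument. If $\hcx$ and $\tilde c$ are both optimal, then since $u(x)\in\R$ the negative parts of $U(\cdot,\cdot,\hcx)$ and $U(\cdot,\cdot,\tilde c)$ are $d\kappa$-integrable, so the $d\kappa$-midpoint $\bar c=\tfrac12(\hcx+\tilde c)$, which is again in $\AA(x,\EN)$ by convexity, satisfies $u(x)\geq\fU(\bar c)\geq\tfrac12\fU(\hcx)+\tfrac12\fU(\tilde c)=u(x)$; equality throughout, combined with pointwise strict concavity of $U(\omega,t,\cdot)$, forces $\hcx_t=\tilde c_t$, $\PPk$-a.s. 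I expect the only genuine care to be bookkeeping: ruling out $(+\infty)-(-\infty)$ in the two integral splittings — guaranteed by $V(t,\yqy_t)\geq G-\yqy_t$ with $\EE\int_0^T\yqy_t\,d\kappa_t<\infty$, by $\yqy_t\hcx_t\geq 0$, and by $\fU(\hcx)\in\R$ — and checking that the chosen $y$ really lies in $\Dom(v)$ so that the dual minimizer is available; beyond correctly chaining the identities of Lemma \ref{lem:difv} there should be no real obstacle.
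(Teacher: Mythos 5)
Your proposal is correct and takes essentially the same route as the paper: the same candidate $\hcx=\fI(\hqy)$ at the $y$ solving $v'(y)=-x$, feasibility and the chain $\fU(\hcx)=\fV(\hqy)+\scl{\fI(\hqy)}{(\hqy)^r}=v(y)-yv'(y)$ via Lemma \ref{lem:difv}, and uniqueness from strict concavity of $U$ with convexity of $\AA(x,\EN)$. The only cosmetic difference is that you close the duality gap by re-deriving the weak-duality bound $\fU(c)\leq\fV(\hqy)+xy+\scl{e}{\hqy}$ directly from (\ref{equ:defV}) and Proposition \ref{pro:charadm2}, whereas the paper simply invokes the conjugacy of $u$ and $v$ established in Lemma \ref{lem:minimax}.
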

\begin{proof}
Using the continuous differentiability of the dual value function $v(\cdot)$ and Lemma \ref{lem:expropu},
we conclude that for any $x>\lim_{y\to\infty} v'(y)$ there exists a unique $y>0$ such that $v'(y)=-x$.
Let $\hqy$ be the solution to the dual problem corresponding to $y$, and define the candidate solution $\hcx$
to the primal problem by
\[ \hcx\triangleq \fI(\hqy).\] By Lemma \ref{lem:difv} $\hcx\in\AA(x,\EN)$.
The optimality of the consumption density process $\hcx$ follows from the fact that
\begin{equation}
    \nonumber
    \begin{split}
\fU(\hcx)&=\fU(\fI(\hqy))=\fV(\hqy)+\scl{\fI(\hqy)}{\hqy}=
\fV(\hqy)+\scl{\fI(\hqy)}{(\hqy)^r}\\ &=v(y)-y v'(y)=u(x),
    \end{split}
\end{equation}
 using Lemma \ref{lem:difv} and
the conjugacy of $u(\cdot)$ and $v(\cdot)$.
The $\PPk$-a.s. uniqueness of $\hcx$ is a direct consequence of the strict concavity of the mapping
$x\mapsto U(\omega, t, x)$ coupled with convexity of the feasible set $\AA(x,\EN)$.
\end{proof}

\begin{lem}\label{lem:expropu}\
$\lim_{y\to\infty} v'(y)=\LL(\EN)$, where $\LL(\EN)=\inf_{\QQ\in\MM} \EE^{\QQ}[\EN_T]$.
\end{lem}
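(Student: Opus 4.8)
The plan is to establish the two inequalities $\lim_{y\to\infty}v'(y)\ge\LL(\EN)$ and $\lim_{y\to\infty}v'(y)\le\LL(\EN)$ separately. The first is already in hand: by part (4) of Lemma~\ref{lem:difv} the limit $L:=\lim_{y\to\infty}v'(y)$ exists, is finite, and lies in $[\inf_{\QQ\in\MM}\EE^{\QQ}[\EN_T],\sup_{\QQ\in\MM}\EE^{\QQ}[\EN_T]]$, so in particular $L\ge\LL(\EN)$. Everything therefore reduces to the reverse inequality, and the idea is to translate information about the domain of the concave value function $u$ into information about the asymptotic slope of its convex conjugate $v$; recall from Lemma~\ref{lem:minimax} that $v(y)=\sup_{x\in\R}(u(x)-xy)$, and that, by a standard fact about convex functions that are finite near $+\infty$ with bounded derivative, $L=\lim_{y\to\infty}v(y)/y$.

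First I would show that $\Dom(u)$ is bounded below by $-\LL(\EN)$. If $x<-\LL(\EN)=-\inf_{\QQ\in\MM}\EE^{\QQ}[\EN_T]$, choose $\QQ_0\in\MM$ with $\EE^{\QQ_0}[\EN_T]<-x$; then for any $c\in\AA(x,\EN)$, Proposition~\ref{pro:charadm} would force $0\le\EE^{\QQ_0}[C_T]\le x+\EE^{\QQ_0}[\EN_T]<0$, which is absurd, so $\AA(x,\EN)=\emptyset$ and $u(x)=-\infty$. Hence $\Dom(u)\subseteq[-\LL(\EN),\infty)$. (The reverse inclusion $(-\LL(\EN),\infty)\subseteq\Dom(u)$ holds as well, via small constant consumption densities and Proposition~\ref{pro:charadm}, but it is not needed below.) On the other hand, by the Remark following Assumption~\ref{ass:finite} together with Lemma~\ref{lem:minimax}, $u$ is finite on $(0,\infty)$, hence concave and continuous there, so it has a finite supergradient $p$ at $x_1=1$: $u(x)\le u(1)+p(x-1)$ for all $x\in\R$.

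Combining these two facts, for every $y>p$ we obtain
\[
v(y)=\sup_{x\ge-\LL(\EN)}\big(u(x)-xy\big)\le\sup_{x\ge-\LL(\EN)}\big(u(1)+p(x-1)-xy\big)=\big(u(1)-p-p\,\LL(\EN)\big)+\LL(\EN)\,y ,
\]
where the last equality uses $p-y<0$, which makes $x\mapsto(p-y)x$ maximal at $x=-\LL(\EN)$. Dividing by $y$ and sending $y\to\infty$ gives $L=\lim_{y\to\infty}v(y)/y\le\LL(\EN)$, which together with the first inequality yields $\lim_{y\to\infty}v'(y)=\LL(\EN)$. The only step requiring care is the passage from the domain of $u$ to a bound of the form $v(y)\le\mathrm{const}+\LL(\EN)y$: one must use a genuine finite supergradient of $u$ at an interior point of its domain (rather than appealing to biconjugacy of $u$, which is not available here) and one must have established, as above, that $\Dom(u)$ cannot extend below $-\LL(\EN)$. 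I do not anticipate any other obstacle.
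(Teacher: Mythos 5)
Your proposal is correct, but for the difficult half it takes a genuinely different route from the paper. Both proofs obtain $\lim_{y\to\infty}v'(y)\ge\LL(\EN)$ from Lemma \ref{lem:difv}(4). For the reverse inequality the paper argues by contradiction on the primal side: if the limit $x'$ exceeded $\LL(\EN)$, one picks $x_0\in(\LL(\EN),x')$ in the range of $v'$, invokes Lemma \ref{lem:exist} to produce an optimal consumption for the initial capital $-x_0$, and contradicts the budget constraint of Proposition \ref{pro:charadm} after taking the infimum over $\QQ\in\MM$. You instead stay on the conjugacy level: from $v(y)=\sup_{x}\big(u(x)-xy\big)$ (Lemma \ref{lem:minimax}), the inclusion $\Dom(u)\subseteq[-\LL(\EN),\infty)$ (itself a consequence of the ``only if'' part of Proposition \ref{pro:charadm}, which needs no upper-hedging hypothesis), and a finite supergradient of the concave function $u$ at the interior point $x=1$, you get $v(y)\le\mathrm{const}+\LL(\EN)y$ for $y>p$, and then use the elementary identity $\lim_{y\to\infty}v'(y)=\lim_{y\to\infty}v(y)/y$ for the convex, continuously differentiable $v$ (Lemma \ref{lem:difv}(1)--(2)). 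All ingredients you use are available at this stage, and the convex-analytic steps you flag (global validity of the supergradient inequality at an interior point of $\Dom(u)$, the asymptotic-slope identity) are standard, so the argument is complete. What your route buys: it does not use Lemma \ref{lem:exist} at all, whereas in the paper Lemma \ref{lem:exist} and the present lemma cite each other (the circle is harmless, since existence really only needs continuity of $v'$ and the existence of its limit, but your version removes the interdependence and could be placed before the existence result). What the paper's route buys is brevity and a direct, probabilistically transparent contradiction, avoiding the convex-analysis bookkeeping your proof carries. One caveat: the paper's signs are not fully consistent (compare Theorem \ref{thm:aer}(4) with the statement proved here); your computation follows the convention of the Appendix, which is the one under which the stated limit $\LL(\EN)$ is correct.
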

\begin{proof}
Let $x'=\lim_{y\to\infty} v'(y)$. Part (4) of Lemma \ref{lem:difv} states that $x'\geq \LL(\EN)$, so
we only need to prove that $x'\leq \LL(\EN)$. Suppose, to the contrary, that there exists $x_0>\LL(\EN_T)$
of the form $x_0=v'(y_0)$ for some $y_0>0$ so that $x'> x_0$.
The optimal consumption process $\prf{C^{-x_0}_t}$ corresponding to the initial capital $-x_0$ exists by the
Lemma \ref{lem:exist} and satisfies $\EE^{\QQ}[C^{-x_0}_T]\leq -x_0+\EE^{\QQ}[\EN_T]$ for any $\QQ\in\MM$
by Proposition \ref{pro:charadm}. Taking the infimum over $\QQ\in\MM$ we reach a contradiction
\[ 0\leq \inf_{\QQ\in\MM} \EE^{\QQ}[C^{-x_0}_T]\leq -x_0+\LL(\EN_T)<0.\] Therefore, $x' \leq \LL(\EN)$.
\end{proof}

\def\cprime{$'$}

\end{document}